\PassOptionsToPackage{usenames,dvipsnames}{xcolor}
\documentclass[11pt]{article}

\usepackage[margin=1in]{geometry}
\usepackage[T1]{fontenc}
\usepackage{charter}
\usepackage[english]{babel}

\usepackage{amsmath, amssymb, amstext, amsthm}
\usepackage{mathtools}  
\usepackage{thmtools, thm-restate}
\usepackage{mathrsfs}
\usepackage{nicefrac}
\usepackage{mleftright} 

\usepackage{setspace}
\usepackage[inline]{enumitem}
\usepackage{multicol}
\usepackage{csquotes}

\usepackage{graphicx}
\usepackage{tcolorbox}
\usepackage{todonotes}
\usepackage{comment}

\usepackage{hyperref}
\usepackage{cleveref}

\addto\extrasenglish{%
}

\makeatletter
\def\algbackskip{\hskip-\ALG@thistlm}
\makeatother

\title{When agents choose bundles autonomously: guarantees beyond discrepancy}

\author{
    Sushmita Gupta 
    \thanks{The Institute of Mathematical Sciences, HBNI, Chennai. {\tt sushmitagupta@imsc.res.in}} \and 
    Pallavi Jain 
    \thanks{Indian Institute of Technology Jodhpur, Jodhpur. {\tt pallavi@iitj.ac.in}} \and 
    Sanjay Seetharaman
    \thanks{The Institute of Mathematical Sciences, HBNI, Chennai. {\tt sanjays@imsc.res.in}} \and
    Meirav Zehavi
    \thanks{Ben-Gurion University of the Negev, Beer-Sheva. {\tt meiravze@bgu.ac.il}}
}

\def\version{compressed} 

\def\algostyle{algorithm2e} 

\newcommand{\etal}{et al.\xspace}

\DeclareMathSymbol{\qm}{\mathalpha}{operators}{"3F}
\DeclareMathAlphabet{\mathbbold}{U}{bbold}{m}{n}

\newcommand{\ceil}[1]{\ensuremath{\left\lceil #1 \right\rceil}}

\newcommand\bigoh{\mathcal{O}}

\newcommand{\sse}{\subseteq\!}

\newcommand{\M}[1]{\ensuremath{M[#1]}}

\newcommand{\org}[1]{\hide{#1}}
\newcommand{\ma}[1]{\hide{#1}}
\newcommand{\Ma}[1]{#1}

\newcommand{\later}[1]{\todo[inline, color=yellow!40]{ }}

\newcommand{\NPC}{{\sf NP}-{\sf complete}\xspace}

\newcommand{\hide}[1]{{}}

\renewcommand{\part}{part\xspace}
\newcommand{\parts}{parts\xspace}

\newcommand{\seq}[1]{\ensuremath{\langle{1, \ldots, #1\rangle}}}

\newcommand{\prop}{\mathsf{PROP}}

\newcommand{\disc}{\mathrm{disc}}

\newcommand{\change}{\mathrm{change}}

\usepackage{soul}

\newcommand{\rr}{Round-Robin\xspace}

\newcommand{\app}{$\dagger$}

\usepackage{ifthen}

\ifthenelse{\equal{\algostyle}{algorithm2e}}{
    \usepackage[noend,ruled,linesnumbered]{algorithm2e} 
    
    \SetAlFnt{\small}
    \SetAlCapFnt{\small}
    \SetAlCapNameFnt{\small}
    \SetAlCapHSkip{0pt}
    \IncMargin{-\parindent}
    \crefname{algocf}{alg.}{algs.}
    \Crefname{algocf}{Algorithm}{Algorithms}
}{}

\ifthenelse{\equal{\algostyle}{algorithmic}}{
    \usepackage{algorithm}
    \usepackage{algpseudocode}

}{}

\usepackage{refcount}

\declaretheorem[numberwithin=section]{theorem}
\declaretheorem[numberlike=theorem,style=definition]{definition}
\declaretheorem[numberlike=theorem]{claim,observation,corollary,proposition,lemma}
\declaretheorem[numberlike=theorem,style=remark]{remark}

\Crefname{assumption}{Assumption}{Assumptions}
\declaretheorem[numberwithin=theorem,name=Claim]{claim-inside-theorem}
\Crefname{claim-inside-theorem}{Claim}{Claims}
\declaretheorem[numberlike=claim-inside-theorem,name=Lemma]{lemma-inside-theorem}
\Crefname{lemma-inside-theorem}{Lemma}{Lemmas}
\declaretheorem[numberlike=claim-inside-theorem,name=Observation]{observation-inside-theorem}
\declaretheorem[numberwithin=lemma,name=Claim]{claim-inside-lemma}
\Crefname{claim-inside-lemma}{Claim}{Claims}
\declaretheorem[numberlike=claim-inside-lemma,name=Corollary]{corollary-inside-lemma}
\declaretheorem[numberwithin=lemma,name=Lemma]{lemma-inside-lemma}
\declaretheorem[numberlike=claim-inside-lemma,name=Observation]{observation-inside-lemma}
\declaretheorem[numberwithin=lemma-inside-theorem,name=Claim]{claim-inside-lemma-inside-theorem}
\declaretheorem[numberwithin=lemma-inside-theorem,name=Lemma]{lemma-inside-lemma-inside-theorem}
\declaretheorem[numberwithin=lemma-inside-theorem,name=Observation]{observation-inside-lemma-inside-theorem}

\allowdisplaybreaks

\begin{document}
\date{}
\begin{sloppypar}

\maketitle

\begin{abstract}
We consider the fair division of indivisible items among $n$ agents with additive non-negative normalized valuations, with the goal of obtaining high value guarantees, that is, close to the proportional share for each agent. 

We prove that partitions where \emph{every} part yields high value for each agent are asymptotically limited by a discrepancy barrier of $\Theta(\sqrt{n})$.
Guided by this, our main objective is to overcome this barrier and achieve stronger individual guarantees for each agent in polynomial time. 

Towards this, we are able to exhibit an exponential improvement over the discrepancy barrier. 
In particular, we can create partitions on-the-go such that when agents arrive sequentially (representing a previously-agreed priority order) 
and pick a part autonomously and rationally (i.e., one of highest value), then each is guaranteed a part of value at least $\mathsf{PROP} - \mathcal{O}{(\log n)}$.
Moreover, we are able to show even better guarantees for three restricted valuation classes such as those defined by: a common ordering on items, a bound on the multiplicity of values, and a hypergraph with a bound on the \emph{influence} of any agent.
Specifically, we study instances where: (1) the agents are ``close'' to unanimity in their relative valuation of the items. This is a generalization of the ordered additive setting, and by analyzing the structure of the swaps--adjacent, linearly-separable, or laminar--we are able to guarantee each agent a value close to their proportional share; 
(2) the valuation functions do not assign the same positive value to more than $t$ items; and
(3) the valuation functions respect a hypergraph, a setting introduced by Christodoulou et al. [EC'23], where agents are vertices and items are hyperedges. While the size of the hyperedges, and neighborhoods can be arbitrary, and the valuation function can be asymmetric between agents that share an edge, the influence of any agent $a$, defined as the number of its neighbors who value at least one item positively that $a$ also values positively, is bounded. 
\end{abstract}

\thispagestyle{empty}
\newpage

\setcounter{tocdepth}{1}
\tableofcontents
\thispagestyle{empty}
\newpage
\setcounter{page}{1}

\section{Introduction}\label{s:intro}
\label{sec:intro}

Fair division of indivisible items is a central problem in computational social choice. 
A core difficulty is reconciling strong fairness guarantees with the combinatorial constraints imposed by indivisibility. 
Among the many fairness notions studied, proportionality is one of the most fundamental: each agent should receive a bundle valued at least a $1/n$ fraction of their total value for the entire set of items. 
This threshold is called the agent's \emph{proportional share}.
While exact proportionality is often impossible even under additive valuations, it remains a natural benchmark for evaluating the quality of allocations.

In this work, we focus on the standard setting of fair division with additive, non-negative valuation functions.
Formally, we consider an instance with a set of $n$ agents $[n] = \{1, \dots, n\}$ and a set of $m$ indivisible items $[m] = \{1, \dots, m\}$.
Each agent $i$ has a valuation function $v_i$ that assigns a non-negative value to every subset of items. 
We assume valuations are additive: $v_i(S) = \sum_{g \in S} v_i(g)$ for any subset $S \subseteq [m]$.
To simplify our presentation, we adopt the standard normalization where the maximum value of any single item is at most $1$ (i.e., $\max_{g \in [m]} v_i(g) \le 1$ for all $i$).
We denote the proportional share of agent $i$ by $\prop_i \coloneq v_i([m])/n$.

In our model, the allocation is determined by a two-stage process. 
First, the items are divided into an $n$-partition $M = \{\M{1}, \dots, \M{n}\}$. 
Then, the agents act \emph{autonomously} to select a \part from this partition. 
We assume agents are \emph{rational}: each agent secures a \part that maximizes their valuation among the available options. 
\hide{Equivalently, the algorithm fixes an $n$-partition in advance, and agents sequentially select and remove \parts.}
Our primary goal is to find a partition that guarantees every agent a \part with value as close to their proportional share as possible. 

We note that, in general, determining if there exists an allocation giving each agent their proportional share is a well-known \NPC problem~\cite{demko1988equitable}, and has been studied extensively~\cite{DBLP:conf/sigecom/ConitzerF017,DBLP:conf/aaai/BaklanovGGS21,DBLP:conf/sagt/BismuthBS24}. 
Furthermore, standard algorithms like \emph{\rr} or \emph{Envy-Cycle Elimination} can generate an allocation where each agent $i$ is \emph{assigned} a \part of value at least $\prop_i - 1$. 
The critical difference is that these mechanisms assign \parts directly; unlike in our model, the agents do not autonomously select the \part they value most.

Next, we discuss our work in the broader context of the fair division literature, particularly discrepancy theory, while formally introducing our terminology as it becomes relevant.

\subsection{Our contribution}\label{subsec:contribution}
Our work in this article is three-fold: 
\begin{enumerate*}
    \item we exhibit families of instances establishing the impossibility of obtaining strong \emph{universal guarantees}, i.e., partitions where every \part yields high value for every agent; 
    \item we formalize a model of fair division in which rational agents are presented with a partition and autonomously select the \part that maximizes their value; and 
    \item we prove several algorithmic results in this model that provide significant improvements to the \emph{individual guarantee} (the value of the \part actually obtained by an agent) over existing methodologies, all attainable in polynomial time.
\end{enumerate*}


We begin by establishing an impossibility result that goes well beyond the guarantees discrepancy theory can provide for our model.  
While we defer formal details on discrepancy theory to Appendix~\ref{sec:prelim-discrepancy}, the core problem can be summarized as follows: given a set of vectors in $\mathbb{R}^n$, the goal is to partition them into $n$ sets such that the vector sum of each set is ``close'' to $\frac{1}{n}$ fraction of the total sum of all vectors.
Naturally, results of this kind are relevant to our model.
Our impossibility result, specifically \Cref{thm:lower-bound}, may be of independent interest to readers. 

\subsubsection*{Lower bound showing impossibility in general setting}

Based on the results in discrepancy theory by Doerr and Srivastav~\cite{DBLP:journals/cpc/DoerrS03} and Manurangsi and Suksumpong~\cite{DBLP:journals/tcs/ManurangsiS22} it is known that for any instance with $n$ agents, there exists an $n$-partition $(B_1, \ldots, B_n)$ such that the value of every part for every agent lies within a small window  around the proportional share. 
Specifically:
\[
\prop_i - \bigoh(\sqrt{n}) \leq v_i(B_j) \leq \prop_i + \bigoh(\sqrt{n}) \quad \forall i, j \in [n].
\]

On the impossibility side, Manurangsi and Meka~\cite{doi:10.1137/1.9781611978964.20Manurangsi26} have shown that this bound is tight.
Specifically, for any $n$ that is a power of $2$, there exists an instance with $n$ agents where, for \emph{any} $n$-partition $(B_1, \dots, B_n)$, there exists an agent $a$ and a part $B_j$ such that the deviation is significant:
\[
|v_a(B_j) - \prop_a| \ge \sqrt{(n-1)}/16 = \Omega(\sqrt{n}).
\]

However, we observe that this result does not necessarily imply an impossibility for our specific goal. 
Discrepancy is a two-sided metric (which deals with both deviations above and below the mean), whereas our objective is to maximize the guarantee \emph{from below} (ensuring agents receive \emph{at least} a certain value).
Thus, the impossibility of low discrepancy does not preclude the existence of allocations where every part has a high minimum value. 
For instance, suppose that an instance admits a partition where an agent $a$ values the \parts as:
\[
    (\prop_a + (n-1), \prop_a - 1, \dots, \prop_a - 1).
\]
This partition witnesses the discrepancy impossibility result (due to the large positive deviation in the first \part), yet every \part has value at least $\prop_a-1$, making it very good for our objective.

In light of the impossibility shown by Manurangsi and Meka~\cite{doi:10.1137/1.9781611978964.20Manurangsi26} and the intuition from the example above, a natural strategy is to relax the guarantee from above and focus solely on improving the guarantee from below. 
However, we show that the discrepancy-based universal guarantees are in fact asymptotically tight even from below. 
Specifically, we demonstrate the existence of an instance where, regardless of the chosen partition, there is always a \part that fails to meet the guarantee from below for some agent. 
Formally, we prove the following:

\begin{restatable}{theorem}{lowerbound}
    \label{thm:lower-bound}
    For any integer $n \ge 4$ that is a power of two, there exists a fair division instance $([m], v_1, \dots, v_n)$ with binary valuations such that the following holds:
    For any $n$-partition $(P_1, \dots, P_n)$ of the items, there exists an agent $a$ and a \part $P_j$ such that 
    \[
    v_a(P_j) \le \prop_a - \sqrt{n/32}.
    \]
\end{restatable}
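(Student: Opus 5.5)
The plan is to build the instance from a Sylvester--Hadamard matrix and turn the one-sided lower bound into a contradiction using Parseval's identity together with integrality. I expect the main obstacle to be that a one-sided guarantee says nothing about parts that are too large, and the key move is to look only at a part that is not too large.

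\textbf{Construction.} Let $k=n/2$, a power of two with $k\ge 2$. Let $H$ be the $k\times k$ Sylvester--Hadamard matrix with rows $h_1,\dots,h_k\in\{\pm1\}^k$, where $h_1=\mathbf{1}$. Rows $h_2,\dots,h_k$ are balanced, and $\sum_a \langle h_a,y\rangle^2=k\|y\|^2$ for every $y\in\mathbb{R}^k$.

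The item set consists of $k$ item types, each with $k$ identical copies, so $m=k^2$. The agents are as follows.
\begin{enumerate*}
\item Two agents value every item at $1$.
\item For each $a\in\{2,\dots,k\}$, there is an agent $a^+$ valuing a copy of type $i$ at $1$ iff $H_{a,i}=+1$.
\item For each such $a$ there is also an agent $a^-$ valuing it at $1$ iff $H_{a,i}=-1$.
\end{enumerate*}
This gives $2+2(k-1)=n$ agents with binary valuations. The proportional shares are $k^2/n=k/2$ for the all-ones agents and $(k\cdot k/2)/n=k/4$ for each $a^\pm$.

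\textbf{Reformulation.} Fix a partition $(P_1,\dots,P_n)$ and let $y_j\in\mathbb{Z}_{\ge0}^k$ count how many copies of each type lie in $P_j$. Put $s_j=|y_j|=\langle h_1,y_j\rangle$. Then
\[
v_{a^\pm}(P_j)=\tfrac12\bigl(s_j\pm\langle h_a,y_j\rangle\bigr).
\]
Suppose, for contradiction, that every agent gets at least $\prop-d$ from every part, where $d<\sqrt{n/32}$. Three consequences follow.
\begin{itemize}
\item The all-ones agents give $s_j\ge k/2-d$ for all $j$.
\item Each pair $a^\pm$ gives $|\langle h_a,y_j\rangle|\le s_j-k/2+2d$.
\item Since $\sum_j s_j=k^2=n\cdot k/2$, averaging yields an index $j$ with $s_j\le k/2$. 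Fix this $j$; then $s_j\in[k/2-d,\,k/2]$.
\end{itemize}
For this $j$ the bound $|\langle h_a,y_j\rangle|\le s_j-k/2+2d$ becomes $|\langle h_a,y_j\rangle|\le 2d$ for all $a\ge 2$. Choosing a part of at most average size is exactly what neutralises the lack of an upper bound on $|P_j|$; the remaining steps are calculation.

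\textbf{Parseval plus integrality.} Since $y_j$ is a nonnegative integer vector, $\|y_j\|^2\ge|y_j|=s_j$. Parseval then gives
\[
k s_j\le k\|y_j\|^2=s_j^2+\sum_{a=2}^k\langle h_a,y_j\rangle^2\le s_j^2+4(k-1)d^2 .
\]
Hence $4(k-1)d^2\ge s_j(k-s_j)$. Because $s_j\in[k/2-d,k/2]$, we have $s_j(k-s_j)\ge (k/2-d)(k/2)=k^2/4-kd/2$.

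This leaves the quadratic inequality $4(k-1)d^2+kd/2\ge k^2/4$. Solving it gives $d\ge\sqrt{k/16}=\sqrt{n/32}$; the slack from the $k-1$ factor should absorb the linear term. If the constants come out slightly off, I would rescale the multiplicity of each item type, taking $c$ copies instead of $k$, to recover exactly $\sqrt{n/32}$.

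This contradicts $d<\sqrt{n/32}$. Therefore some agent $a$ and some part $P_j$ satisfy $v_a(P_j)\le\prop_a-\sqrt{n/32}$.
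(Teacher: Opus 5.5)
Everything up to $4(k-1)d^2\ge s_j(k-s_j)$ is correct, but the last step fails as written. The bound $s_j(k-s_j)\ge (k/2-d)\cdot k/2$ throws away too much, and the resulting inequality $4(k-1)d^2+kd/2\ge k^2/4$ does \emph{not} force $d\ge\sqrt{k/16}$. At $d=\sqrt{k}/4$ its left side exceeds its right side by $\frac{k}{8}(\sqrt{k}-2)$, which is positive once $k\ge 8$. So the inequality is also satisfied by some $d<\sqrt{n/32}$, and no contradiction arises. For example, for $n=16$ it only gives $d\ge 0.688$, versus $\sqrt{1/2}\approx 0.707$.

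The $k-1$ factor saves only $k/4$, whereas the linear term is about $k^{3/2}/8$, so the slack does not absorb it. As written, your argument covers only $n\in\{4,8\}$. Rescaling the multiplicity is not the remedy either: the loss comes from replacing $k-s_j$ by $k/2$, not from the construction.

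The repair is one line. The map $s\mapsto s(k-s)$ is increasing on $(-\infty,k/2]$ and $s_j\ge k/2-d$, so $s_j(k-s_j)\ge (k/2-d)(k/2+d)=k^2/4-d^2$. Hence $(4k-3)d^2\ge k^2/4$, so $d^2\ge k^2/(16k-12)>k/16$, which is the desired contradiction.

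With this fix, your proof is a valid route that differs from the paper's. The paper uses the same instance, with an all-zeros agent in place of your second all-ones agent. It first proves a two-sided statement for \emph{every} set of items, $\lVert\mathbf B(\frac1n\mathbf 1-\mathbf x)\rVert_2^2\ge n^2/32$. This uses the Gram matrix of $\tilde{\mathbf A}$ and the fact that $\frac12$ minus the number of copies of a type is a half-integer. It then passes to $\lVert\cdot\rVert_\infty\ge\lVert\cdot\rVert_2/\sqrt n$, takes the smallest part, and turns a positive deviation of $a$ into a negative one for the complementary agent $a_{\rm opp}$.

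You instead build the complementary-agent trick directly into the pair constraints $|\langle h_a,y_j\rangle|\le 2d$ and use the all-ones agents to pin $|P_j|$. This avoids the $\ell_\infty$--$\ell_2$ comparison and gives the marginally sharper $d\ge k/(2\sqrt{4k-3})$. Your integrality input $\|y\|^2\ge|y|_1$ is the same fact as the paper's $(w-\frac12)^2\ge\frac14$. The paper's detour has the benefit of yielding a reusable two-sided discrepancy lemma.
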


\hide{We conclude our discussion of the impossibility results by noting that discrepancy-based {\it universal guarantees}--where every part gives a large value for each agent--that are asymptotically better than $\prop_i - \sqrt{n}$ are unlikely to exist due to the fact that the discrepancy guarantee of $\prop_i \pm \sqrt{n}$ cannot be improved, as shown recently by Manurangsi and Meka~\cite{doi:10.1137/1.9781611978964.20Manurangsi26}. Consequently, forsaking the guarantee from above and just focusing on improving the guarantee from below is the natural next step. However, that possibility also seems to run into trouble in light of \Cref{thm:lower-bound} which proves that the discrepancy-based guarantees are asymptotically tight from below. \ma{this feels repetitiive given what has been said in last 2 paras. (Moved above)}
}

This theorem establishes that one cannot guarantee that all \parts are simultaneously valuable for all agents. Consequently, our goal shifts to ensuring that each agent has access to---and ultimately receives---at least one high-value part.


We assume agents are primarily self-interested: if allowed to choose autonomously, they will act \emph{rationally} to select the best available \part. 
Since multiple agents may make the same choice, instead of breaking ties in an ad-hoc fashion, we assume a fixed arrival order of agents. 
Specifically, we consider the following \emph{sequential arrival model}: 
\begin{quote}
    An $n$-partition $P$ is fixed and presented to the agents. 
    Agents arrive sequentially according to a fixed order, pick the highest-valued part remaining in $P$, and depart.
\end{quote}

It is straightforward to see that a universal guarantee provides a pessimistic guarantee in this model. 
If a partition ensures every part is good for everyone, then regardless of what previous agents select, the remaining parts will satisfy the current picking agent. 
However, we know from \Cref{thm:lower-bound} that such universal guarantees are inherently limited by the $\Theta(\sqrt{n})$ barrier.

The core challenge is to determine if the \emph{power of autonomy} allows us to surpass this barrier and achieve a guarantee of $\prop - o(\sqrt{n})$. 
This is not immediate because, for an arbitrary partition, rational choice can be destructive: early agents might pick the specific parts that are valuable to later agents. 
As discussed in \Cref{rem:need-for-rebundling}, this can leave later agents with parts of \emph{zero} value, creating an unfair outcome. 
Our contribution is to construct partitions specifically designed to avoid such possibilities.

Towards our goal of obtaining better guarantees, two natural research directions emerge. 
The first is to demonstrate the existence of a partitioning strategy such that, for any given ordering of agents, the best available \part for each agent yields a value of at least $\prop - o(\sqrt{n})$. 
The alternative is to prove an impossibility result: showing that there exists an instance and an arrival order such that, for any partitioning strategy, some agent is forced to take a part with value strictly less than $\prop - \omega(1)$.
Addressing either one\hide{a positive existence result or an impossibility result} would be worthwhile.

In this article, we take the first approach. 
We use \rr as a guiding principle and modify it appropriately to achieve polynomial-time algorithmic guarantees that offer an exponential improvement over the discrepancy barrier.
To prevent early agents from depleting the high-value parts needed by later agents, our algorithm employs a strategy of \emph{rebundling}. 
Rather than fixing a single partition at the start, the mechanism adapts the \parts of remaining items as the process unfolds.
We concede that this ``dynamic'' rebundling moves us away from using the same fixed partition for all the agents. 
We believe this trade-off is justified given that the result we obtain is an exponential improvement in the individual guarantee for every agent---specifically, $\prop_i - \bigoh(\log n)$, for each agent $i \in [n]$---that too via a polynomial-time algorithm.

\hide{Furthermore, we show that if the mechanism is allowed to determine the arrival sequence, there exists a specific ordering (computable in polynomial time), which we call the \emph{fair arrival order}, that ensures these strong guarantees are met. \ma{ss: feel this line is out of place. the fair arrival order is for removing the assumption on prop share of the arriving agents}}



\paragraph{Algorithmic guarantees.} 
We classify our positive results into two categories: 
(I) \textit{Static partitioning}, where a single partition is fixed at the start, and all agents choose from this common menu (excluding parts picked by predecessors); and 
(II) \textit{Dynamic partitioning}, where the mechanism rebundles the remaining items between agent arrivals.


\subsubsection*{Bounded distance from unanimity}

A fair division instance with additive valuations is said to be \emph{ordered additive} if $v_i(g_1) \ge \dots \ge v_i(g_m)$ for all $i \in [n]$~\cite{DBLP:journals/aamas/BouveretL16}. 
Within the static framework (Category I), we focus on instances that are ``close'' to being ordered additive. 
This is particularly relevant when agents are \emph{nearly unanimous} in their item rankings. 
Formally, we assume that the agents' valuations lie at a small distance from a \emph{master list}, defined as a fixed permutation $\pi$ of the item set $[m]$. 
A valuation function $v$ is said to be \emph{derived} from a master list $\pi$ if, for any two items $g_1, g_2$, the condition $v(g_1) > v(g_2)$ implies that $g_1$ precedes $g_2$ in $\pi$. 
Observe that in an ordered-additive instance, all agent valuations are derived from the identity permutation $\langle 1, \dots, m \rangle$; effectively, the master list $\langle 1, \dots, m \rangle$ is at distance zero from every agent. 
Thus, such instances form the baseline of our analysis.

To model deviations from this baseline, we measure the distance between permutations based on the number and nature of \emph{swaps}. 
We analyze a variety of swap types, ranging from \emph{arbitrary} and \emph{adjacent} swaps (restricted to consecutive positions) to structured families defined by the relationships between them, such as \emph{linearly-separable} and \emph{laminar} swaps (defined formally in \Cref{subsec:linearly-separable-swaps}).
These structural distinctions allow us to analyze finer separations in the behavior of the underlying valuations and obtain stronger guarantees. 
\Cref{fig:types-of-swaps} illustrates the different swap types we consider.

\begin{figure}
    \centering
    \includegraphics[width=\linewidth]{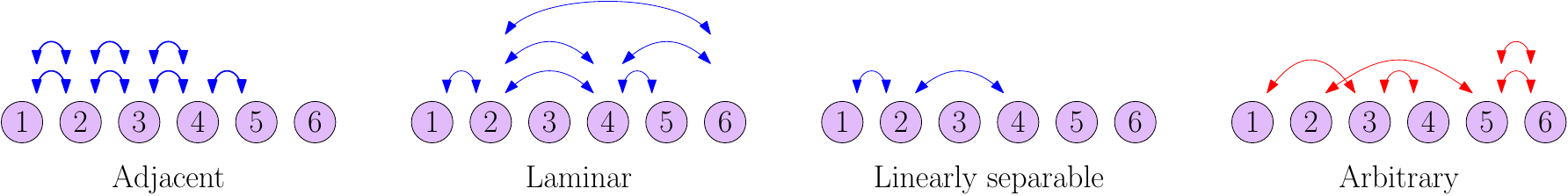}
    \caption{An illustration of the different types of swaps we consider.}
    \label{fig:types-of-swaps}
\end{figure}

In the following result, we use $\sigma_a$ to denote a fixed preference ordering of agent $a$ over the items.

\begin{restatable}{theorem}{ordaddrelaxationsguarantee}
\label{thm:ordered-additive-relaxation-guarantees}
Given a fair division instance and a master list $\pi$, let $k_a$ denote the number of swaps required to transform $\sigma_a$ to $\pi$ under the relevant swap metric.
There exists a polynomial-time algorithm that, depending on the nature of the instance, generates a partition $\mathcal{M}=\{\M{1}, \ldots, \M{n}\}$ satisfying the following for each agent $a$ and any part $\M{j} \in \mathcal{M}$:
\begin{description}
    \item[Ordered additive:] $v_a(\M{j}) \ge \prop_a - 1$.
    \item[Linearly separable swaps:] $v_a(\M{j}) \ge \prop_a - 2$.
    \item[Adjacent swaps:] $v_a(\M{j}) \ge \prop_a - 1 - \sqrt{2k_a}$.
    \item[Arbitrary swaps:] $v_a(\M{j}) \ge \prop_a - 1 - k_a$.
    \item[Laminar swaps:] $v_a(\M{j}) \ge \prop_a - 1 - \text{depth}(S^a)$, where $S^a$ is a laminar family of swaps required to transform $\sigma_a$ to $\pi$.
\end{description}
\end{restatable}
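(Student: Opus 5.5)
Throughout, the partition is the static \emph{snake} (boustrophedon) partition along the master list $\pi$: items are listed as $\pi(1),\pi(2),\dots$ and dealt in blocks of $n$. The first block goes to parts $1,\dots,n$, the next to parts $n,\dots,1$, and so on. Pad with zero-valued dummy items so that $m$ is a multiple of $2n$ if convenient. The partition is independent of the agents, so it is computable in polynomial time. Every bullet will follow from one structural lemma plus a perturbation argument.

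\textbf{Ordered additive case.} Fix agent $a$ and part $\M{j}$. Let $w_1\ge w_2\ge\dots$ be $a$'s values in $\pi$-order, and split them into consecutive blocks $R_1,R_2,\dots$ of size $n$. Part $\M{j}$ receives exactly one item from each block; let $x_t$ be its item in $R_t$. Since values are nonincreasing,
\[
v_a(x_t)\ \ge\ \min R_t\ \ge\ \max R_{t+1}\ \ge\ \frac{v_a(R_{t+1})}{n}.
\]
Summing over $t$ gives $v_a(\M{j})\ge \sum_{t\ge 2} v_a(R_t)/n = \prop_a - v_a(R_1)/n \ge \prop_a - 1$, using $v_a(R_1)\le n$. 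This simple argument does not even need the snake order.

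\textbf{Arbitrary swaps.} If $\sigma_a$ is obtained from $\pi$ by $k_a$ transpositions, then $a$'s value sequence along $\pi$ can be made nonincreasing by $k_a$ transpositions. One plan is to compare $\M{j}$ with the ordered-additive bound for the sorted sequence. Each transposition exchanges at most two positions, which may sit in different parts, and so changes $v_a(\M{j})$ relative to the sorted sequence by at most $1$, because values lie in $[0,1]$ and the swap moves one item's value in and another's out. Hence $v_a(\M{j})\ge \prop_a-1-k_a$.

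\textbf{Adjacent swaps.} With $k_a$ adjacent transpositions, the inversions form a set in which an item's displacement $d_g$ satisfies $\sum_g d_g \le 2k_a$. In the block argument, only blocks containing a ``crossing'' item (one whose true rank block differs from its $\pi$ block) can fail the chain $\min R_t\ge \max R_{t+1}$. The bound for block $t$ is degraded by at most the value jump caused by the crossing items there. I would show that $r$ violated block boundaries require $\Omega(r^2)$ adjacent swaps: an item crossing $r$ boundaries at distance at least $n$ apart costs many swaps. Then I would bound the loss by the number of violated boundaries, giving $r\le\sqrt{2k_a}$. Making this counting precise is the main obstacle. I would first try charging each violated boundary $t$ to the inversions straddling it and using the triangular-number bound $1+2+\dots+r\le k_a$.

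\textbf{Linearly separable and laminar swaps.} For linearly separable swaps the swapped pairs occupy disjoint intervals. Within the chain argument, each interval affects at most the boundary it straddles, and the snake ordering makes consecutive blocks meet at the same part. The intended result is that the total extra loss telescopes to at most one additional unit, giving $\prop_a-2$. For a laminar family I would induct on depth: contracting the outermost level of nested swaps yields a linearly separable-type perturbation costing one unit per level. This would give $\prop_a-1-\mathrm{depth}(S^a)$.
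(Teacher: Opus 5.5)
Your ordered-additive and arbitrary-swap arguments are sound. The block argument is an elementary substitute for the Round-Robin guarantee that the paper invokes on the projected valuations, and the one-unit-per-transposition perturbation bound is correct. The genuine gap is in the adjacent-swap case, because the lemma you plan to prove is false: ``$r$ violated block boundaries require $\Omega(r^2)$ adjacent swaps.'' The $r$ adjacent swaps $(tn,tn+1)$, $t=1,\dots,r$, cost only $r$ swaps yet violate $r$ distinct block boundaries. So any accounting that charges up to one unit per violated boundary cannot beat $\prop_a-1-k_a$. In that example the true loss is at most $\hat v_a(1)-\hat v_a(m)\le 1$, because the value gaps telescope, and this is exactly the information that block-boundary counting throws away.

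The same misstep appears in your linearly separable paragraph. A single swap $(p,q)$ with $q-p>n$ straddles many block boundaries, so ``each interval affects at most the boundary it straddles'' is not a valid mechanism, and the snake structure plays no role. Your laminar paragraph has the right decomposition (the paper also peels off levels of maximal swaps, each of which is linearly separable), but it inherits the missing justification for ``one unit per level.''

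The missing idea is to sharpen the perturbation you already use for arbitrary swaps. Write $\hat v_a$ for the sorted (projected) valuation; every intermediate permutation gives value $\hat v_a(r)$ to whichever item sits at rank $r$. Hence a swap $(p,q)$ with $p<q$ changes the value of $\M{j}$ by at most $\hat v_a(p)-\hat v_a(q)$, not merely by $1$. Summing, the loss is at most $\sum_{(p,q)\in S}\bigl(\hat v_a(p)-\hat v_a(q)\bigr)=\sum_i c_i\bigl(\hat v_a(i)-\hat v_a(i+1)\bigr)$, where $c_i$ counts the swaps whose interval covers the gap between positions $i$ and $i+1$. For linearly separable families $c_i\le 1$, so the loss is at most $1$. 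For laminar families $c_i\le\mathrm{depth}(S^a)$, since swaps covering a common gap are nested and each level of the depth decomposition contains at most one of them.

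For adjacent swaps, the paper takes a minimal bubble-sort sequence and shows that each adjacent position is used at most $\sqrt{2k_a}$ times. The $r$ uses of one position come from bubble moves of pairwise distinct lengths, which cost at least $1+2+\dots+r$ swaps; this yields $\sqrt{2k_a}$ linearly separable subfamilies, each costing at most one unit.

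Alternatively, your instinct to charge to straddling inversions works if you apply it at every position gap $i$ rather than only at block boundaries. The loss is at most $\sum_i d_i\bigl(\hat v_a(i)-\hat v_a(i+1)\bigr)$, where $d_i$ counts items moving from rank $\le i$ in $\pi$ to rank $>i$ in $\sigma_a$. Equally many items move the other way, and each down/up pair is an inversion, so $d_i^2\le k_a$ and the loss is at most $\sqrt{k_a}$.
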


Next, we move towards the general setting where the distance from a master list is arbitrary—specifically, cases where parameters like $k_a$ or $\text{depth}(S^a)$ are in $\omega(\log n)$ for some $a$.

\subsubsection*{Arbitrary distance from unanimity}

In the realm of arbitrary instances, where the distance from a master list is high, we present an array of algorithmic results based on a careful analysis of how early agents impact the potential value available to later agents.
Intuitively, in the restricted cases discussed above, the proximity to a master list provides a form of redundancy: there are multiple high-value \parts for an agent, ensuring that one agent's choice does not completely eliminate the high-value options for a successor.
However, when the distance is large, this redundancy vanishes. 
Scenarios such as those in \Cref{rem:need-for-rebundling,rem:impact-low-prop} can arise, where early choices destroy the viability of the instance for later agents.
Furthermore, rational agents are indifferent between distinct \parts of equal value. 
Consequently, an early agent might arbitrarily select a part that is the unique high-value part for a later agent, even when a non-critical alternative exists.
To navigate the gamut of possibilities arising from such scenarios, we identify sufficiency conditions (detailed in \Cref{sec:large-prop}) that allow us to guarantee every agent a high-value \part.

Our first set of results operates under the assumption that the algorithm knows the arrival order of agents, denoted by $\sigma=\seq{n}$, in advance, alongside the individual valuation functions $\{v_i\}_{i\in [n]}$.
Within this framework, \Cref{thm:all-pos-val-guarantee} addresses instances where agents value each item \emph{positively} (strictly greater than zero).
Next, \Cref{thm:large-prop-guarantee} generalizes this to instances where agents may value items at zero, provided that all agents have sufficiently large proportional shares.
This allows agents to obtain a comparable, if not better, guarantee.

Subsequently, in \Cref{thm:large-prop-guarantee-extended}, we demonstrate that the assumption of a fixed arrival order can be relaxed. 
We show that there exists a \emph{fair arrival order}, computable in polynomial time via a constructive greedy strategy, that yields a comparable guarantee.
This result is particularly relevant in fair division settings where valuations are known to the mechanism; it allows the mechanism and agents to agree upon a specific arrival order that provably achieves the desired guarantee.

We begin with the case of strictly positive valuations.

\begin{restatable}{theorem}{allposvalguarantee}
    \label{thm:all-pos-val-guarantee}
    Consider a fair division instance where the order of arrival of agents is known, and each agent values all items positively.
    There is an algorithm that guarantees each agent $i$ a part of value at least $\prop_i - \lceil \log n \rceil$.
\end{restatable}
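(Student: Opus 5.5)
The plan is a divide-and-conquer on the known arrival order $\sigma=\langle 1,\dots,n\rangle$. Each recursion level costs every agent at most one unit of value, so the total loss is at most $\lceil \log n\rceil$. The positivity assumption is used only at the end, to make sure each arriving agent's rational choice is exactly the bundle we reserved for them.

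\textbf{Step 1 (recursive splitting).} We maintain a pair $(A, R)$, where $A$ is a set of $r$ consecutive agents in $\sigma$ and $R$ is a set of items. The invariant at depth $d$ is
\[
v_i(R) \ge r\,(\prop_i - d) \quad \text{for every } i\in A .
\]
At the root, $A=[n]$, $R=[m]$ and $d=0$, so the invariant holds. To recurse, we split $A$ into its first $\lceil r/2\rceil$ agents $A_1$ and the remaining agents $A_2$. We then split $R$ into $R_1,R_2$ so that every $i\in A_1$ has $v_i(R_1)\ge |A_1|(\prop_i-d)-|A_1|$, and every $i\in A_2$ has the symmetric bound for $R_2$.

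I would obtain this split in the spirit of \rr, or by a moving-knife / ordered-sweep argument over an item ordering. The idea is to grow $R_1$ item by item and stop at the first moment that any agent in $A_1$ values $R_1$ at its proportional target. Since every item is worth at most $1$, the overshoot is at most one item. Agents of $A_2$ must not have lost more than their fraction, and this is ensured by alternating picks between a representative of $A_1$ and one of $A_2$ as in \rr. This gives the additive loss of $1$ per level, i.e.\ $1$ per unit of $r/|A_j|$. Since the depth is $\lceil\log n\rceil$, at a leaf ($r=1$) each agent $i$ holds a reserved set $D_i$ with $v_i(D_i)\ge \prop_i-\lceil\log n\rceil$.

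\textbf{Step 2 (making rational choice follow the plan).} With rebundling, when agent $i$ arrives the remaining items are $D_i\cup \bigcup_{j>i}D_j$. We present a partition into $n-i+1$ parts in which $D_i$ (possibly slightly modified) is a strictly best part for $i$. The other parts are a rebundling of $\bigcup_{j>i} D_j$ into $n-i$ parts, each of value to $i$ less than $v_i(D_i)$.

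This is where positivity enters. Because $v_i(g)>0$ for all $g$, agent $i$ strictly distinguishes every item, so no part is ``free''. If some part would tie or exceed $D_i$, we can balance the rebundling for $i$ (e.g.\ greedy balancing, whose parts differ by at most $1$). Alternatively, we can move one item of $i$'s choosing between $D_i$ and another part. Either way the invariant for later agents is affected by at most the single-item slack already budgeted in Step 1. After $i$ picks, we discard the temporary bundling and restore the reserved sets $D_j$ for $j>i$.

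\textbf{Main obstacle.} The hard step is Step 2. Rebundling the later agents' reserved items so that agent $i$ does not covet any of them must not destroy those later agents' guarantees. In particular, if $v_i(\bigcup_{j>i}D_j)$ is large relative to $v_i(D_i)$, no balanced rebundling into $n-i$ parts keeps every part below $v_i(D_i)$.

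My first attempt would be to strengthen Step 1 so that each agent also ``owns'' at most its fair share of the remaining set from its own perspective. That is, in the split, the side $A_1$ chooses first, as in cut-and-choose, so that agents of $A_1$ value $R_2$ at most proportionally plus one. Then by averaging, the rebundled later parts are each worth at most $v_i(D_i)$ to $i$, and positivity is used to break the remaining ties strictly in favour of $D_i$. If this two-sided control costs an extra unit per level, I would try to absorb it by choosing the recursion so that each agent is on the ``choosing'' side exactly once per level, keeping the total loss at $\lceil\log n\rceil$.
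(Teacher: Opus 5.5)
Your proposal has a genuine gap: Step 2 cannot work with the reserved sets that Step 1 produces. The obstacle you flag is fatal, not technical. Rational choice imposes a hard constraint. If agent $i$ strictly prefers one part of an $(n-i+1)$-partition of the remaining items $U_i$, that part is worth more than $v_i(U_i)/(n-i+1)$ to $i$. So $D_i$, or any modification of it, can be forced on $i$ only if it beats $i$'s proportional share of what is left.

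Step 1 gives only lossy lower bounds. Agent $1$ lies in the first half at every level, so it may end with $v_1(D_1)=\prop_1-\lceil\log n\rceil<\prop_1$. In that case every $(n-1)$-partition of $[m]\setminus D_1$ has a part worth more than $\prop_1>v_1(D_1)$, however you rebundle. Forcing the choice then means adding more than $\prop_1-v_1(D_1)$ (up to about $\lceil\log n\rceil$) units of value to $D_1$, taken from later reserved sets. Nothing in your accounting bounds how often one $D_j$ is raided over successive arrivals, since the slack of Step 1 is already spent on the split itself.

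The cut-and-choose strengthening does not rescue this. It is unproven, and it still lets agent $1$ lose up to a unit per level, so $v_1(D_1)<\prop_1$ remains possible. Averaging is also the wrong tool: you need $D_i$ to dominate each later part individually, not the later parts to be small on average.

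Step 1 itself is not established. Stopping when \emph{some} agent of $A_1$ reaches its target leaves the other members of $A_1$, who share $R_1$, unprotected. Alternating representatives controls only the representatives' valuations. Step 1 is also unnecessary: a single Round-Robin already gives each $i$ a set worth at least $\prop_i-1$.

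The missing idea, which the paper uses, is to reserve bundles that each agent weakly prefers to every \emph{later} reserved bundle. Round-Robin run in the arrival order gives $v_i(M[i])\ge v_i(M[j])$ for all $j>i$, together with $v_i(M[i])\ge\prop_i-1$. Then no rebundling or averaging is needed. Only ties must be broken, and by positivity, moving a single item from a later bundle into $M[i]$ makes it the unique favourite.

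What remains is to control who pays for these transfers. In a stage with $r$ remaining agents, the first $\lfloor r/2\rfloor$ agents each receive the top item of a \emph{distinct} bundle from the second half. For odd $r$ the middle agent needs no donor, since every later bundle has just lost a positively valued item. The donor bundles minus their first row are again exactly the Round-Robin allocation of the remaining items to the remaining agents. So one recurses, and each bundle loses at most one item per stage.

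One caution if you adopt this route: there are $\lfloor\log n\rfloor+1$ stages. The argument as written therefore yields $\prop_i-\lfloor\log n\rfloor-1$, which is one unit short of the stated bound when $n\ge 2$ is a power of two. For example, with $n=2$ and three items of value $1$, the transfer empties agent $2$'s bundle, while $\prop_2-1=1/2$.
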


The next result generalizes this to allow for zero-valued items, obtaining a stronger guarantee for each agent.
However, this improvement relies on all agents possessing sufficient ``baseline'' proportional shares, we refer to this assumption as {\it proportional shares being bounded}.
We note that the guarantee is meaningful only when $\prop_i$ is large enough to absorb the  logarithmic term.
If $\prop_i \le 2\lceil \log i \rceil + 1$, the guaranteed value becomes non-positive, and such an agent could, in principle, be given an empty \part!
Conversely, if the proportional share marginally exceeds this threshold (i.e., $2\lceil \log i \rceil + 2$), the guarantee becomes strictly positive.

We believe that the proof of \Cref{thm:large-prop-guarantee} is the most technically involved among our algorithmic results. 
Accordingly, we treat the proof of \Cref{thm:all-pos-val-guarantee} as a conceptual warm-up, introducing the core techniques in a simplified setting. 
The subsequent analysis for \Cref{thm:large-prop-guarantee} then highlights the refinements required to handle zero-valued items.

\begin{restatable}{theorem}{largepropguarantee}
    \label{thm:large-prop-guarantee}
    Consider a fair division instance where the order of arrival of agents is $\langle 1, \dots, n \rangle$, and each agent $i$ has a proportional share at least $2 (\lceil \log i \rceil + 1)$.
    There is an algorithm that guarantees each agent $i$ a part of value at least $\prop_i - 2 \lceil \log i \rceil - 1$. 
\end{restatable}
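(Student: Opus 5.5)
Our plan is to run a dynamic \rr{}-style rebundling procedure organised as a halving recursion, and to charge each agent a loss of at most $2$ per level of recursion that affects it. The arrival order is $\langle 1,\dots,n\rangle$. Agent $i$ only needs protection against the choices of agents $1,\dots,i-1$, so the natural potential to track is the number of earlier agents whose choices can ``touch'' agent $i$'s valuable items. We will show this number halves at each level, giving $\lceil\log i\rceil$ levels. Concretely, we maintain a pool $R$ of remaining items and, for each remaining agent $i$, the invariant
\[
v_i(R)\;\ge\;(n-i+1)\,\prop_i - 2\,\ell_i ,
\]
where $\ell_i$ counts the rebundling phases that have affected $i$. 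The target is $\ell_i\le \lceil\log i\rceil$, plus one final $-1$ coming from a \rr-type rounding.

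The construction proceeds in phases indexed by powers of two. Agents $2^{t-1}<i\le 2^{t}$ form block $t$. Before block $t$ arrives, we rebundle the remaining items into $n-2^{t-1}$ parts by a \rr-style (or envy-cycle) partition computed with respect to all agents of block $t$ simultaneously. Within the block, each arriving agent then picks greedily from this fixed menu.

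The key lemma states that when $r$ parts are formed by a round-robin ordering over items, an agent loses at most $1$ (an item's maximal value) per part, relative to the average, in at most one ``boundary'' part. Consequently, after $j$ picks by other agents of the same block, agent $i$ still sees a part of value at least
\[
\frac{v_i(R)}{r}-2 .
\]
Here one unit is lost to indivisibility and one unit to the adversarial removal of the parts that $i$ values highly. This argument uses the pigeonhole principle: removing $j$ parts out of $r$ removes at most a $j/r$ fraction of value plus the rounding error.

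Since block $t$ is handled after $t-1\le\lceil\log i\rceil$ earlier rebundlings, each costing at most $2$ in the invariant, agent $i$'s guarantee is $\prop_i-2\lceil\log i\rceil-1$.

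The zero-valued items are exactly where the hypothesis $\prop_i\ge 2(\lceil\log i\rceil+1)$ enters. Without it, an earlier agent might take a part consisting entirely of items that it values at $0$ but that agent $i$ values highly. The lower bound on $\prop_i$ ensures that the invariant keeps $v_i(R)/r\ge 1$ throughout, and this is what makes the ``each removed part costs at most average plus one'' accounting valid. In the all-positive warm-up (\Cref{thm:all-pos-val-guarantee}) this issue does not arise.

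The main obstacle is proving that greedy picks by earlier agents within a phase cannot remove more than the average share of agent $i$'s value. A greedy picker takes its own favourite part, which may be $i$'s favourite as well. Our first attempt at handling this will be to order the rebundling so that $i$'s value is nearly balanced across parts, in the style of the ordered-additive bound of \Cref{thm:ordered-additive-relaxation-guarantees}. Specifically, we will use a per-agent \rr{} with the parts as ``agents'' and items sorted by $v_i$, merged over the block via a common refinement whose size is controlled by halving. If this merging fails, we will fall back to rebundling before every arrival, charging the loss to the binary expansion of $i$.
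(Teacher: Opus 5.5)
Your argument has a genuine gap at the key lemma. The claim that a Round-Robin partition leaves agent $i$ within one item of the average on every part holds only when all parts are built along a single agent's ordering; that is the mechanism behind \Cref{lem:ordered-additive}. For a partition computed for several agents of a block, it fails badly. Take two pickers $a,b$, where $a$ values only $x_1,\ldots,x_{2k}$ and $b$ values only $y_1,\ldots,y_{2k}$. Round-Robin gives $a$ all the $x$'s and $b$ all the $y$'s, so for $b$ the two parts are worth $0$ and $2k$ against an average of $k$. \Cref{prop:RR-gives-universal-prop} only rules out envy towards parts of \emph{later} pickers.

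Your pigeonhole step does not use which parts the block-mates actually take, so it needs a bound valid for every set of $j$ removed parts. In the final block, which has exactly $r=n-2^{t-1}$ agents, the last agent sees a single part, so every part must be worth at least $v_i(R)/r-2$ to it. For the other agents, no $j$ parts may carry much more than a $j/r$ fraction of $v_i(R)$. That is a simultaneous universal guarantee for about $n/2$ agents and $n/2$ parts, which is the kind of guarantee the Hadamard construction behind \Cref{thm:lower-bound} shows costs $\Omega(\sqrt{n})$, not $O(1)$. So no ``common refinement controlled by halving'' with constant loss per level can exist in general. Your invariant $v_i(R)\ge (n-i+1)\prop_i-2\ell_i$ is unsupported for the same reason: it asks that each earlier agent's greedy choice remove at most about $\prop_i$ of agent $i$'s value.

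The missing idea is that the paper never balances parts; it ensures no agent ever takes a part meant for someone else.
\begin{itemize}
\item Round-Robin is run in the arrival order, so by \Cref{prop:RR-gives-universal-prop} agent $i$ weakly prefers its designated part $M[i]$ to each of $M[i+1],\ldots,M[n]$.
\item Just before $i$ picks, this weak preference is made strict. If some item in the top two rows of the donor columns $\lfloor n/2\rfloor+1,\ldots,n$ has positive value for $i$, it is moved into $M[i]$.
\item Otherwise $i$ values all those donor items at $0$. In a temporary copy of the partition, they are swapped with the top items of the competing parts $M[i+1],\ldots$, which then drop strictly below $M[i]$.
\end{itemize}
Hence every agent picks exactly its designated part. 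Later designated parts are touched only by losing their top two rows once per stage. The halving recursion places agent $i$ in stage at most $\lceil\log i\rceil+1$, which gives $\prop_i-1-2\lceil\log i\rceil$.

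The hypothesis $\prop_i\ge 2(\lceil\log i\rceil+1)$ is used only to guarantee that agent $i$'s column still has two positively valued items at its stage. That is what makes the swap step yield a strict preference; it is not about keeping $v_i(R)/r\ge 1$ for a counting argument. Your fallback of rebundling before every arrival points the right way. It is not yet a proof, though, because it gives no concrete rule that forces an earlier agent, who may break ties arbitrarily, away from a later agent's valuable part (cf.\ \Cref{rem:need-for-rebundling}).
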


We note that the logarithmic bound used in \Cref{thm:large-prop-guarantee} serves as a convenient worst-case estimate. 
We can refine this analysis by introducing a specific parameter $g(i)$ (defined formally in \Cref{sec:large-prop}), where $1 \le g(i) \le \lceil \log i \rceil + 1$.
Substituting this into the analysis yields the following corollary, which provides tighter guarantees with weaker requirements.

\begin{restatable}{corollary}{corlargepropguarantee}
    \label{cor:large-prop-guarantee}
    Consider a fair division instance where the order of arrival of agents is $\langle 1, \dots, n \rangle$, and each agent $i$ has a proportional share at least $2 g(i)$.
    There is an algorithm that guarantees each agent $i$ a \part of value at least $\prop_i - 2 g(i) + 1$. 
\end{restatable}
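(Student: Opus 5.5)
This corollary follows from the proof of \Cref{thm:large-prop-guarantee} rather than from its statement. My plan is to rerun that argument with the crude estimate $\lceil \log i\rceil + 1$ replaced by the exact quantity $g(i)$ wherever the analysis charges a loss to agent $i$.

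The key step is to isolate where the logarithm enters the proof of \Cref{thm:large-prop-guarantee}. I expect the dynamic rebundling algorithm to maintain the following invariant. When agent $i$ arrives, the remaining items have been split into $n-i+1$ \parts, and agent $i$ values each \part at least $\prop_i$ minus a loss. This loss comes from at most $g(i)$ ``rounds'' of rebundling or halving that affect agent $i$. Each round costs at most $2$: one item of value at most $1$ lost to a predecessor's choice, and one item of slack from balancing. The factor $2$ in the bound supports this reading.

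The definition of $g(i)$ in \Cref{sec:large-prop} should be exactly the number of such rounds, for instance the depth of agent $i$ in the recursive grouping of predecessors. The bound $g(i)\le \lceil\log i\rceil+1$ then comes from a doubling argument. So I would restate the inductive lemma of the theorem with the per-agent loss parametrised by $g(i)$. At the point where the original proof invokes $g(i)\le\lceil\log i\rceil+1$, I would simply not apply it.

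Next I would redo the final accounting. The accumulated loss is $2g(i)$ minus the single-item correction that the original proof also uses. In the original proof this gives $2(\lceil\log i\rceil+1)-1 = 2\lceil\log i\rceil+1$, which matches the stated guarantee $\prop_i-2\lceil\log i\rceil-1$. With $g(i)$ in its place, the same accounting gives $2g(i)-1$, hence the value $\prop_i-2g(i)+1$.

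The hypothesis $\prop_i\ge 2g(i)$ plays the same role as $\prop_i\ge 2(\lceil\log i\rceil+1)$ did. It guarantees that at every round agent $i$ still has a \part of positive value. The zero-valued-item arguments (that some remaining \part carries value for $i$, so predecessors cannot starve $i$) therefore go through verbatim.

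The main obstacle is confirming that no step of the original proof uses the logarithmic form itself, beyond the loss count. One place this could happen is a requirement that a predecessor $j<i$ have $\prop_j\ge 2(\lceil\log j\rceil+1)$ in order to protect agent $i$. I would check each use of the proportional-share hypothesis and verify that it is applied only to the agent whose rounds are being counted. If some step does need a monotonicity property such as $g(j)\le g(i)$ for $j$ in $i$'s group, I would verify it directly from the definition of $g$.
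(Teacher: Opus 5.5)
Your plan matches the paper's argument: the stage lemma already gives value at least $\prop_i - 2k + 1$ to an agent picking in stage $k$, agent $i$ picks in stage exactly $g(i)$, and the share hypothesis is used only to keep the effective share $\prop_i - 2(g(i)-1)$ at least $2$ at that stage (so the top two remaining items of $\M{i}$ are positive for $i$, with each predecessor needing only its own condition $\prop_j \ge 2g(j)$), which is precisely $\prop_i \ge 2g(i)$. One correction to your guessed mechanics, which does not affect the argument: the invariant is that $i$'s designated part $\M{i}$ is its unique favorite, not that every remaining part is good for $i$, and the loss $2g(i)-1$ is $1$ from Round-Robin plus two donated top items per earlier stage.
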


We conclude by discussing the robustness of \Cref{alg:large-prop} (that proves \Cref{thm:large-prop-guarantee}) in \Cref{rem:robustness-alg-largeprop}.
Crucially, we note that the assumption of large proportional shares is primarily an analytical tool for certifying strong guarantees, rather than a mechanical prerequisite for the algorithm's execution.

\subsubsection*{Hypergraph setting.} 
The next result pertains to the \emph{hypergraph setting}, a generalization of the \emph{graphical setting} introduced by Christodoulou \etal~\cite{DBLP:conf/sigecom/0001FKS23}.
In this setting, agents correspond to vertices and items to hyperedges; an agent values an item positively only if they are incident to the corresponding hyperedge.
Note that the valuation functions can be asymmetric, as in agents that are part of an edge can value the corresponding item differently.
This setting has gained significant traction in fair division research~\cite{DBLP:journals/corr/abs-2506-20317,DBLP:conf/ijcai/0027WL024,DBLP:conf/ifaamas/SgouritsaS25,DBLP:conf/ijcai/DeligkasEGK25,hsu2025efxorientationsmultigraphs,DBLP:journals/corr/abs-2404-13527,10.5555/3709347.3743514,DBLP:conf/aldt/MisraS24,bhaskar2024efxallocationsmultigraphclasses}, as it represents a reasonable restriction where every item is relevant to at most a small number of agents.

In our analysis, the governing parameter is the \emph{influence set} of an agent. 
For an agent $a$, the influence set consists of all agents (including $a$) who value at least one item positively that $a$ also values positively.
Formally, this corresponds to a subset closed neighborhood of $a$ in the hypergraph.
Note that the size of the hyperedges and the closed neighborhoods can be unbounded.
Let $D$ denote the maximum cardinality of any influence set in the instance.

\begin{restatable}{theorem}{boundedinfluenceguarantee}
    \label{thm:bounded-influence-guarantee}
    Consider a fair division instance where the order of arrival of agents is known, the influence of each agent is bounded by $D$, and $v_a([m])/D \ge 2 \lceil \log (2D-1) \rceil + 2$ for each agent $a \in [n]$.
    Then, there is an algorithm that guarantees each agent $a$ a part of value at least:
    \begin{itemize}
        \item $v_a([m])/D - 2\lceil \log(2D-1) \rceil - 1$, if $n \le 2D-1$ or $a$ is among the last $2D-1$ agents;
        \item $v_a([m])/D - 1$, if $a$ is among the first $n-(2D-1)$ agents.
    \end{itemize}
\end{restatable}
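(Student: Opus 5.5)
The plan is to reduce the hypergraph setting to a local version of \Cref{thm:large-prop-guarantee}, applied to each agent's influence set. The key observation is that agent $a$'s value depends only on items $a$ values positively. Every agent who values such an item positively lies in the influence set $N[a]$, and $|N[a]| \le D$. So the only agents whose choices can destroy value for $a$ are the at most $D-1$ other members of $N[a]$, together with any rebundling the mechanism performs. Accordingly, $a$ should be benchmarked against $v_a([m])/D$ rather than $v_a([m])/n$: $a$ effectively competes with at most $D$ agents for the items it cares about.

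The algorithm processes agents in the known arrival order and maintains, as in \Cref{alg:large-prop}, a partition of the remaining items into as many parts as there are remaining agents. Its rebundling is guided by a potential in which each agent's target is $v_a([m])/D$ instead of $\prop_a$.

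\textbf{Case: $a$ is among the first $n-(2D-1)$ agents.} When $a$ arrives, at least $2D-1$ agents remain, counting $a$. I would show that the mechanism can still produce, from the remaining items, a part worth at least $v_a([m])/D - 1$ to $a$. The construction is a \rr-style pass restricted to $a$'s relevant items, in which only members of $N[a]$ compete. Because there are at least $2D-1$ remaining parts and at most $D-1$ competing neighbours, the items $a$ values can be spread over enough parts that a greedy, envy-cycle-style bundling leaves $a$ a part of value at least its $1/D$ share minus one item's value.

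The main obstacle is that earlier agents outside $N[a]$ may still have taken parts that contained items valued by $a$. To handle this, I would invariantly keep each agent's valued items concentrated in a controlled number of parts, bundled separately per influence set. The slack of $2D-1$ parts should let me argue, by pigeonhole over neighbours, that some part survives untouched by the $\le D-1$ neighbours. This part of the argument is where I am least sure the invariant holds as stated.

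\textbf{Case: $n \le 2D-1$ or $a$ is among the last $2D-1$ agents.} Here the interaction is confined to at most $2D-1$ agents. I would apply the argument of \Cref{thm:large-prop-guarantee}, via its halving or potential analysis, to this tail subinstance, treating its at most $2D-1$ agents as a new instance whose positions run from $1$ to $2D-1$. Each of them then gets at least its share minus $2\lceil\log(2D-1)\rceil+1$. The required share must again be $v_a([m])/D$, which the earlier phase should preserve: a tail agent keeps essentially all items it values, except those taken by neighbours, each of whom took roughly a $1/D$ fraction. The hypothesis $v_a([m])/D \ge 2\lceil\log(2D-1)\rceil+2$ is exactly the large-share precondition of \Cref{thm:large-prop-guarantee} at index $2D-1$, so that theorem applies to the tail subinstance.
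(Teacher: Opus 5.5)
Your benchmark $v_a([m])/D$ and your split into early agents and a tail of $2D-1$ agents handled by \textsc{BoundedProp} match the paper. However, the argument for the early agents is missing its central idea, and the step you flag as uncertain is exactly where the proof lives.

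\textbf{Early agents.} It is not enough to produce, at $a$'s arrival, some part worth $v_a([m])/D-1$. Since $a$ chooses autonomously, the intended part must be $a$'s unique favourite; otherwise $a$ may take a part designated for a later agent. Also, a Round-Robin pass recomputed ``from the remaining items'' benchmarks only against $v_a(U)/D$ for the depleted set $U$, not against $v_a([m])/D$. Your ``main obstacle'' (a non-neighbour leaving with items $a$ values) is not removed by per-influence-set bundling or by pigeonholing a surviving part, which in any case gives no lower bound on that part's value. It is removed by computing one partition up front with a \emph{modified} Round-Robin, in which an agent only ever takes items it values positively and drops out otherwise. This gives two properties:
\begin{itemize}
\item Only members of $I(a)$ ever remove items of $S_a=\{g : v_a(g)>0\}$, so $v_a(M[a])\ge v_a([m])/D-1$, and $a$ envies no later part.
\item Every $M[j]$ with $j\notin I(a)$ is worth exactly $0$ to $a$.
\end{itemize}
The $2D-1$ slack is used for tie-breaking, not for spreading $a$'s items. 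Among the at least $2D-1$ later parts, at most $D-1$ can tie with $M[a]$, since a tied part has positive value and so belongs to a neighbour. At least $D$ later parts are worthless to $a$. Swap $a$'s top item in each tied part with an item of a distinct zero part; this is a temporary rebundling of the later parts only. Each tied part then drops strictly below $v_a(M[a])$, and each zero part now holds one item worth at most $1<v_a(M[a])$. So $M[a]$ is the unique maximum. By induction every early agent takes exactly its own designated part, and the loss $a$ suffers from predecessors is already priced into the Round-Robin bound.

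\textbf{Tail.} Treating the tail as a fresh instance and invoking \Cref{thm:large-prop-guarantee} as a black box gives the benchmark $v_a(U_{\rm tail})/n'$, with $n'$ up to $2D-1$. This can be far below $v_a([m])/D$. Your heuristic that each neighbour removed ``roughly a $1/D$ fraction'' of $a$'s value is false: a single neighbour interested only in $S_a$ takes about half of it. Even if it held, it would not convert into the stated bound. What is needed is to run \textsc{BoundedProp} with the modified Round-Robin in its initialization and in every recursive call. Restricted to the tail agents and the leftover items, this reproduces the Phase-1 designated parts. Hence each tail agent starts from $v_a([m])/D-1$ and loses at most two items per stage over at most $\lceil\log(2D-1)\rceil$ further stages, giving the claimed bound. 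In this argument the hypothesis $v_a([m])/D\ge 2\lceil\log(2D-1)\rceil+2$ is not the large-share precondition of a black-box theorem. Its role is to guarantee that $M[a]$ contains enough positively valued items for the transfer and rebundling steps (\Cref{rem:robustness-alg-largeprop}).
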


This result demonstrates that in the hypergraph setting, we can obtain guarantees significantly superior to those in the general setting (\Cref{thm:large-prop-guarantee}).
Most strikingly, when $n \gg 2D$, the quantity $v_a([m])/D$ far exceeds the standard proportional share $\prop_a = v_a([m])/n$.
Consequently, the first $n-(2D-1)$ agents receive a value strictly greater than their proportional share.
Even for the remaining agents (or when $n \le 2D-1$), the guarantee remains superior because the error term scales with $\log(2D)$ rather than $\log n$.

\subsubsection*{Bounded indifference.} 
Finally, we examine instances where agents have bounded indifference. 
We define the \emph{tie size} $t$ as the maximum number of items with the same positive value for any agent.
In such scenarios, we can express the guarantee directly in terms of $t$, provided the agents satisfy a large proportional share assumption.

\begin{restatable}{theorem}{nolargetiesguarantee}
    \label{thm:no-large-ties-guarantee}
    Consider a fair division instance where the order of arrival of agents is known, each agent has at most $t$ non-zero items of the same value, and each agent has a proportional share of at least $\lceil \frac{t+3}{2} \rceil$.
    There is an algorithm that guarantees each agent $i$ a part of value at least $\prop_i - \lceil \frac{t+3}{2} \rceil$.
\end{restatable}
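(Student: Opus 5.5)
We will build the \parts on the fly, rebundling the remaining items just before each arrival, in the spirit of \rr. The quantity we control is the ``slack'' $s_i = \prop_i - v_i(B)$, where $B$ is the best \part agent $i$ sees on arrival; the target is $s_i \le \lceil \frac{t+3}{2}\rceil$.

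\textbf{Per-agent item lists.} For each agent $i$, group the items $i$ values positively into value classes. Each class has at most $t$ items and all items in a class have the same value. Sorting agent $i$'s items in nonincreasing order gives a list whose ``ties'' occupy blocks of length at most $t$.

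\textbf{The invariant.} When agent $j$ arrives, let $R_j$ be the set of remaining items. We want
\[
v_j(R_j) \ge (n-j+1)\prop_j - (j-1)\cdot c,
\]
where $c$ is a constant depending only on $t$. This is the wrong scaling. A better invariant is one that is local to the past pickers, as in the proofs of \Cref{thm:all-pos-val-guarantee,thm:large-prop-guarantee}. That is, for each earlier agent $i<j$, the \part $i$ removed is worth to $j$ at most $\prop_j + O(1)$.

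\textbf{How an earlier picker is kept cheap for later agents.} Before agent $i$ picks, we rebundle the remaining items into $n-i+1$ \parts by running \rr according to agent $i$'s order, with the lead rotating. The goal is that every \part is worth at least about $\prop_i - 1$ to $i$. The difficulty is that $i$ takes a \part that may contain items a later agent $j$ values highly.

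To handle this, we make each \part contain at most one item from each of $j$'s tie blocks, in the round-robin sense. Because a tie block has at most $t$ items, the damage to $j$ from one removal is an ``excess'' of at most about $t/2$. The intuition is that, within a tie block split among parts, a removed part exceeds the average by at most half the block's size times the item value, and the item value is at most $1$.

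Combining the per-class contributions and a telescoping over classes, as in the standard \rr envy bound, gives slack at most $\lceil\frac{t+3}{2}\rceil$. The $+3$ accounts for the usual $-1$ rounding of \rr plus two boundary classes.

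\textbf{Main obstacle.} The hard part is making one partition simultaneously balanced in the tie blocks of \emph{all} later agents, not just one. The first attempt will be to argue that it suffices to preserve the guarantee of the \emph{current} agent $j$. This would work by choosing, at each step, a partition of $R_j$ that is balanced in $j$'s own order, and showing that the loss to $j$ caused by earlier picks was already absorbed by the removed agents' shares.

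The key point is that the large-proportional-share assumption $\prop_i \ge \lceil\frac{t+3}{2}\rceil$ guarantees every agent picks a \part containing items of positive value to itself. An averaging argument then shows that the remaining value per remaining agent, $v_j(R_j)/(n-j+1)$, does not drop below $\prop_j - \lceil\frac{t+3}{2}\rceil + 1$. Finally, applying the \rr bound on $R_j$ with $n-j+1$ \parts loses only one more unit, which yields the claimed guarantee.
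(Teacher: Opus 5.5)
\textbf{There is a genuine gap, in your last paragraph, which is where all the work would have to happen.} In the scheme you settle on, the partition offered to agent $i$ is built from agent $i$'s own order. Nothing constrains \emph{which} part agent $i$ takes, or what that part is worth to later agents. Your claim that $v_j(R_j)/(n-j+1) \ge \prop_j - \lceil\frac{t+3}{2}\rceil + 1$ is asserted rather than derived, and it is false for this scheme.

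Here is a counterexample. Take $n=2$ and give agent $1$ strictly decreasing values in $(1/2,1]$, so $t_1=1$. The partition offered to agent $1$, and the part $P$ it takes, do not depend on $v_2$. Since $v_1(P)\ge v_1([m])/2>m/4$, we get $|P|>m/4$. Now let agent $2$ give distinct values in $(1/2,1]$ to the items of $P$ and $0$ to all other items. Then $t=1$ and $\prop_2>m/16$, yet agent $2$ is left with value $0$.

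Your intermediate invariant (each removed part worth at most $\prop_j+O(1)$ to $j$) would not rescue this either. Summed over the $j-1$ earlier pickers, it gives only $v_j(R_j)/(n-j+1)\ge \prop_j-(j-1)\,O(1)/(n-j+1)$. For the last agent that is $\prop_n-(n-1)\,O(1)$.

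You also use the tie bound for the wrong purpose. It is not needed to balance tie blocks across parts. It is needed to rule out the real danger: a rational agent breaking a tie in favor of a part that a later agent needs.

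\textbf{The missing idea} is to give every agent $i$ a designated part $M[i]$ that is its \emph{strict} favorite among $M[i],\dots,M[n]$. Then, by induction, each agent takes exactly its own part. Each guarantee becomes a statement about one part, with no averaging over removed parts.

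The paper achieves this with a single static partition:
\begin{enumerate}
\item In arrival order, agent $i$ receives a core $M_1[i]$ of its top $k=\lceil\frac{t+1}{2}\rceil$ remaining items. These are all positively valued, since $v_i([m])\ge nk$ and earlier cores used at most $(i-1)k$ items.
\item Round-Robin on the leftover set $U$ gives $R$, and $M[i]=M_1[i]\cup R[i]$.
\item Strict preference: for $j>i$, every item of $M_1[i]$ is worth at least every item of $M_1[j]$ to agent $i$. Equal sums would give $2k\ge t+1$ positive items of one value, contradicting the tie bound. Together with $v_i(R[i])\ge v_i(R[j])$ from Round-Robin, this makes $M[i]$ the unique favorite.
\item Value bound: the $n$ cores are worth at most $nk$ to agent $i$. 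Hence $v_i(M[i])\ge v_i(U)/n-1\ge \prop_i-k-1=\prop_i-\lceil\frac{t+3}{2}\rceil$.
\end{enumerate}
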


\subsubsection*{A ``fair'' arrival order} 
Our final technical result addresses the design of the arrival sequence itself. 
We show that we can compute a specific arrival order such that the execution of \Cref{alg:large-prop} results in every agent receiving a high-value \part.

\begin{restatable}{theorem}{largepropguaranteeextended}
    \label{thm:large-prop-guarantee-extended}
    For any fair division instance, there exists an arrival order of agents, denoted by $\tau$, such that each agent $\tau_i \in [n]$, when picking autonomously and rationally, gets a part of value at least $\prop_{\tau_i} - 2 \lceil \log i \rceil - 2$. 
    Moreover, $\tau$ can be computed in polynomial time.
\end{restatable}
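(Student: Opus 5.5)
The plan is to reduce \Cref{thm:large-prop-guarantee-extended} to \Cref{thm:large-prop-guarantee}, applied with a carefully chosen arrival order $\tau$. The only agents that \Cref{thm:large-prop-guarantee} cannot certify are those whose proportional share is small compared to their position, i.e.\ $\prop_{\tau_i} < 2(\lceil \log i \rceil + 1)$. For such an agent the claimed bound $\prop_{\tau_i} - 2\lceil \log i\rceil - 2$ is negative, so it holds trivially. The work is therefore to ensure that these low-share agents do not damage the analysis for the agents who do satisfy the hypothesis.

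\textbf{Constructing $\tau$.} I would sort agents greedily by decreasing proportional share, breaking ties arbitrarily; this is clearly polynomial. Under this order, position $i$ satisfies the hypothesis $\prop_{\tau_i} \ge 2(\lceil \log i\rceil +1)$ for every $i$ up to some threshold $i^*$. It fails for all later positions, since the left side is non-increasing in $i$ and the right side is non-decreasing.

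\textbf{Applying the theorem to the prefix.} I would run \Cref{alg:large-prop} on the prefix $\tau_1,\dots,\tau_{i^*}$. Each of these agents then gets at least $\prop_{\tau_i} - 2\lceil\log i\rceil - 1$, which is better than required. The subtlety is that \Cref{alg:large-prop} is designed for $n$ parts and $n$ agents, and the suffix agents are still present. I see two ways to handle this.

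The first is to invoke the robustness noted in \Cref{rem:robustness-alg-largeprop}. That remark says the large-share assumption is only an analytical tool, so the guarantee for agent $i$ should depend only on the hypothesis holding for agents $1,\dots,i$. Granting this, the prefix agents keep their guarantees, and the suffix agents receive any remaining parts, for which the bound holds trivially.

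The second, if that prefix-monotonicity is not directly available, is a different accounting. I would argue that the invariant maintained by the algorithm for agent $i$ involves only items lost to predecessors. Those losses are bounded by a charging argument of the form $\sum_{j<i}$ (value lost) $\le 2\lceil\log i\rceil + 1$, and this charging never uses the shares of later agents.

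\textbf{Main obstacle.} The hard part is confirming this prefix-locality, namely that the proof of \Cref{thm:large-prop-guarantee} uses the share bound of agent $j$ only when analyzing agents $j' \ge j$. I would also need to check that it absorbs the slack of $1$ in the target. The target is $-2$ rather than $-1$, which suggests that the authors spend one extra unit handling a boundary agent whose share is just below the threshold, that is, an agent with $2\lceil\log i\rceil+1 \le \prop < 2\lceil\log i\rceil+2$. My first attempt would be to treat the threshold agent $\tau_{i^*+1}$ separately and show it loses at most one extra unit.
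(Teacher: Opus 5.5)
Your proposal is correct and takes the paper's approach. Sorting by decreasing proportional share is one valid run of the paper's greedy candidate-set rule, since the max-share remaining agent lies in $C_i$ whenever $C_i \neq \emptyset$. Your first route is exactly the paper's argument: prefix agents are protected because, as \Cref{rem:impact-low-prop} states, \Cref{alg:large-prop} succeeds up to the first violating agent, and suffix agents have a negative target. The boundary-agent worry in your last paragraph is moot: the extra unit in $-2$ is precisely what makes $\prop_{\tau_i} - 2\lceil \log i\rceil - 2 < 0$ for every suffix agent, as you already observed, so no agent needs separate treatment.
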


\paragraph{Organization of the paper.} 
At its core, all our algorithms utilize the \rr method (\Cref{alg:round-robin}); its essential properties are established in \Cref{sec:prelim}.
The related work is discussed in \Cref{sec:related-works}.
The remainder of the paper is organized as follows:
\begin{itemize}[wide=0pt]
    \item \Cref{sec:all-pos} initiates the dynamic partitioning framework, focusing on strictly positive valuations.
    \item \Cref{sec:large-prop} extends the dynamic framework to arbitrary valuations with proportional shares \Ma{being bounded}.
    \item \Cref{sec:bounded-influence-guarantee} examines the hypergraph setting, deriving better guarantees when influence is bounded.
    \item \Cref{sec:no-large-ties} covers the bounded indifference setting, where value indifference is limited.
    \item \Cref{sec:ordered-additive-relaxations} analyzes the static partitioning framework for instances close to unanimity.
    \item Finally, \Cref{sec:one-sided-lower-bound} establishes the impossibility of improving universal guarantees.
\end{itemize}

We emphasize that our results are tight with respect to the proof strategies employed; improving the bounds is likely to require a significant departure from current techniques.
Moreover, most of our guarantees are consequences of stronger structural results. 
Our proofs establish that each agent gets to choose a part that is at most a certain number of items (not present in this part) away from proportionality. 
This bound combined with the fact that item values are in $[0,1]$ directly imply our value guarantees.
\section{Preliminaries}\label{sec:prelim}



We use $M=(\M{1}, \dots, \M{n})$ to denote a partition and $\M{i}$ to denote its $i$-th part. 
In the following description, we assume without loss of generality that the agents are indexed $1, \dots, n$ and form the ``picking sequence'' of the algorithm.

\ifthenelse{\equal{\algostyle}{algorithm2e}}{
    \begin{algorithm}
  \caption{Round-Robin}\label{alg:round-robin}
  \SetKwFunction{FRoundRobin}{\textsc{RoundRobin}}
  \SetKwProg{Fn}{Function}{:}{}
  
  \Fn{\FRoundRobin{$[m], v_1, \dots, v_n$}}{
    $U \leftarrow [m]$ \tcp*{the set of unallocated items}
    $M \leftarrow (\emptyset, \dots, \emptyset)$ \tcp*{the partition we will construct}
    $p \leftarrow 1$ \tcp*{agent who picks next}
    
    \While{$U \ne \emptyset$}{
      $g \gets \text{a most preferred item of agent } p \text{ in } U$\;
      $\M{p} \gets \M{p} \cup \{g\}$\;
      $U \gets U \setminus \{g\}$\;
      $p \leftarrow (p \mod n) + 1$\;
    }
    \KwRet $M$\;
  }
\end{algorithm}
}{}

\begin{proposition}[\cite{DBLP:conf/sigecom/ConitzerF017,DBLP:journals/teco/CaragiannisKMPS19}]
\label{prop:RR-gives-universal-prop} The output of \rr (\Cref{alg:round-robin}), denoted by $M=(\M{1}, \ldots, \M{n})$, has the following properties:

\begin{enumerate}
    \item For each agent $i$ who picks $i$-th in the sequence, $v_i(\M{i}) \ge v_i([m])/n - 1 = \prop_i - 1$.

    \item No agent envies any part corresponding to agents who picked later in the sequence.
    That is, for each $\{i,j\} \sse [n]$ such that $i < j$, we have $v_i(\M{i}) \ge v_i(\M{j})$.
\end{enumerate}
    
\end{proposition}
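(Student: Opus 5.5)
The plan is to prove both properties with the standard round-based argument. Partition the run of \rr (\Cref{alg:round-robin}) into rounds: round $r$ consists of the $r$-th pick of agents $1, \dots, n$, in that order. The last round may be partial. For each part $\M{j}$, let $g^j_r$ denote the item agent $j$ picks in round $r$, if one exists.

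\textbf{Property 2.} Fix $i < j$. I will compare, round by round, agent $i$'s pick with agent $j$'s pick.
\begin{itemize}
\item In round $r$, agent $i$ picks before agent $j$. So when $i$ chooses $g^i_r$, the item $g^j_r$ is still available.
\item By the greedy choice, $v_i(g^i_r) \ge v_i(g^j_r)$.
\item Agent $j$ picks in round $r$ only if agent $i$ did, since $i$ comes first within each round. So every term $g^j_r$ is matched with some $g^i_r$.
\end{itemize}
Summing over rounds gives $v_i(\M{i}) \ge v_i(\M{j})$.

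\textbf{Property 1.} I need a slightly more careful comparison, this time against the parts of agents who pick before $i$. For $j < i$, pair $i$'s round-$r$ pick with $j$'s round-$(r+1)$ pick.
\begin{itemize}
\item When $i$ picks in round $r$, agent $j$ has not yet made its round-$(r+1)$ pick. So $g^j_{r+1}$ is still available, and $v_i(g^i_r) \ge v_i(g^j_{r+1})$.
\item If $g^j_{r+1}$ exists, then $g^i_r$ also exists. This holds because round $r$ was completed before round $r+1$ began.
\item Summing over $r$ gives $v_i(\M{i}) \ge v_i(\M{j}) - v_i(g^j_1) \ge v_i(\M{j}) - 1$, using the normalization that every item has value at most $1$.
\end{itemize}
For $j > i$, Property 2 already gives $v_i(\M{i}) \ge v_i(\M{j})$.

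Summing over all $j$ gives
\[
n\, v_i(\M{i}) \ge \sum_j v_i(\M{j}) - (i-1) \ge v_i([m]) - (n-1).
\]
Dividing by $n$ yields $v_i(\M{i}) \ge \prop_i - (n-1)/n \ge \prop_i - 1$.

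The only delicate point is keeping the index shift and the partial final round straight: each paired item of $j$ must have a counterpart from $i$, and at most one item per earlier agent may be left unmatched. The ordering of picks within and across rounds, as noted above, ensures both.
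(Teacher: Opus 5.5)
Your proof is correct. The paper gives no proof of its own and only cites the literature; your argument is the standard one behind those citations: Round-Robin is envy-free towards later pickers and envy-free up to the first item towards earlier pickers (your round-$r$/round-$(r+1)$ pairing), and summing these inequalities over all parts gives proportionality up to one item, in fact with the slightly sharper bound $v_i(M[i]) \ge \mathsf{PROP}_i - \frac{i-1}{n}$.
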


In our proofs, we sometimes consider the output of \rr as a matrix where the item in row $r$ and column $c$ is the $r$-th pick of agent $c$.

Next, we discuss an example that shows that frequent rebundling helps to protect value for later agents.
\begin{remark}[Power of rebundling]
\label{rem:need-for-rebundling}
    The following example illustrates the general necessity of rebundling (or actively moving items between parts) in a partition.
    Consider $n$ agents, with the same proportional share $\prop$, arriving in the order $1, \dots, n$, where $n$ is even and $\prop$ is an integer.
    Suppose that the initial partition is $\M{1}, \dots, \M{n}$ with the following valuations:
    \begin{itemize}[itemsep=0pt]
        \item For each agent $i \in [n-1]$, $v_i(\M{i}) = v_i(\M{i+1}) = \frac{n}{2} \cdot\prop$, and $0$ elsewhere. 
        \item The agent valuations are binary.
        \item For agent $n$, $v_n(\M{n}) = n \cdot \prop$ and $0$ elsewhere.
    \end{itemize}

    Consider the case of no rebundling.
    We assume that agents break ties among equal valued bundles by picking the higher index.
    \begin{itemize}[itemsep=0pt]
        \item Agent 1 picks $\M{2}$.
        \item Agent 2, finding $\M{2}$ gone, picks $\M{3}$.
        \item This chain continues until agent $n-1$ picks $\M{n}$.
        \item Agent $n$ arrives to find $\M{n}$ taken and is forced to pick $\M{1}$ (value 0).
    \end{itemize}
    
    Suppose that we are allowed to rebundle between agent arrivals.
    When agent $i \in [n-1]$ arrives, we transfer two item of value $1$ for agent $i$ from $\M{i+1}$ to $\M{i}$.
    \begin{itemize}[itemsep=0pt]
        \item Agent $i$'s valuation becomes $v_i(\M{i}) \ge \frac{n}{2}\cdot\prop$.
        \item Agent $i$ strictly prefers and picks $\M{i}$.
        \item This preserves $\M{i+1}$ for the subsequent agent.
        \item Finally, agent $n$ freely picks $\M{n}$, which is of value $n\cdot\prop-2$.
    \end{itemize}
    In this new execution, every agent $i \in [n-1]$ picks a bundle of value at least $\frac{n}{2} \prop_i$ and agent $n$ picks a bundle of value $n\cdot \prop-2$. We note that this scenario can be amplified to obtain a scenario, where an unbounded number of agents may receive low value without rebundling. \qed
\end{remark}

\section{Related work}
\label{sec:related-works}

For a comprehensive overview on proportional allocations, and fair division in general, we refer the reader to the surveys \cite{DBLP:journals/ai/AmanatidisABFLMVW23,DBLP:journals/corr/abs-2307-10985}.

Manurangsi and Suksumpong \cite{DBLP:journals/tcs/ManurangsiS22} study the notion of \textit{consensus 1/$n$ division}.
An allocation $M=(M[1], \dots, M[n])$ is said to be a \emph{consensus 1/$n$-division up to $c$ goods} if, for every agent $a$ and every pair of parts $M[i]$, $M[i']$, there exists $B \subseteq M[i']$ with $|B| \le c$ such that $a$ values $M[i]$ no less than $M[i'] \setminus B$.
They show that a consensus fair division always exists when $c = \bigoh(\sqrt{n})$.
Observe that in any such allocation, every part is of value at least $\prop_a - c = \prop_a - \bigoh(\sqrt{n})$ for each agent $a$.
Manurangsi and Meka \cite{doi:10.1137/1.9781611978964.20Manurangsi26} show that it is asymptotically tight.

Our model contrasts with the recent structural results in allocation without monetary transfers, due to Babaioff and Morag~\cite{DBLP:conf/sigecom/BabaioffM25}. 
They establish that any deterministic, strategy-proof, neutral, and non-bossy mechanism must be a \emph{serial quota mechanism}, where agents sequentially choose \emph{any} subset of items (up to a fixed size) from those remaining. 
This highlights a fundamental trade-off: insisting on strategy-proofness restricts the mechanism to simple cardinality constraints. 
Our work explores the alternative side of this trade-off: by relaxing incentive compatibility, we can impose stricter structural constraints---requiring agents to select from a pre-defined partition---which enables us to guarantee strong fairness properties. 

In addition to the papers discussed in the introduction, we note that there is a substantial body of work on fair division of indivisible items centered around proportionality. 
In particular, the impossibility of exact proportionality, even under additive valuations, has led to a plethora of relaxations. 
Notable examples include proportionality up to one item (PROP1), introduced and analyzed by Conitzer \etal~\cite{DBLP:conf/sigecom/ConitzerF017}. 
A closely related line of work studies the maximin share (MMS) guarantee, introduced by Budish \cite{budish2011combinatorial} and further developed algorithmically by Procaccia and Wang \cite{DBLP:journals/jacm/KurokawaPW18}.
This literature predominantly considers centralized allocation algorithms that directly assign items to agents.

Sequential allocation mechanisms have been studied extensively, starting with classical picking sequences introduced by Brams and Taylor~\cite{DBLP:books/daglib/0017730}.\ma{reall? check!} \hide{and later analyzed from a computational perspective by Bouveret and Lang.} 
More recent work considers online fair division, where allocations must be made irrevocably as agents or items arrive. These models typically allocate single items and focus on efficiency or envy-based criteria. 
In contrast, our setting allows the algorithm to design and present bundles of items to agents, enabling qualitatively different proportionality guarantees.

A separate strand of work studies fair allocation in the absence of monetary transfers, often under additional structural or incentive constraints. 
It is known that no deterministic mechanism without payments can guarantee strong fairness notions such as proportionality or MMS under general additive valuations (see Padala and Gujar~\cite{DBLP:conf/webi/PadalaG21}). 
Related impossibility and trade-off results are shown by Psomas and Verma \cite{DBLP:conf/nips/0001V22}. These results motivate studying fairness guarantees in non-truthful, algorithmic models, such as the sequential bundle-choice framework considered in this work.\ma{do we need this para again?}


Our focus is on designing efficient algorithms that guarantee each agent a value as close as possible to their proportional share. 
We show that, even in this constrained sequential setting, carefully constructed bundle menus can substantially improve fairness outcomes compared to naive greedy allocation. 
Our results demonstrate that proportionality remains a meaningful and achievable objective in sequential, choice-driven allocation environments.
\section{When all items are positively valued}
\label{sec:all-pos}

In this section, we prove \Cref{thm:all-pos-val-guarantee}. 
Accordingly, we assume throughout this section that all valuations are strictly positive, i.e., $v_i(g) > 0$ for all agents $i$ and items $g$.
Suppose that the agents arrive in increasing order of their labels: agent $1$ first, followed by agent $2$, and so on.
We present a recursive procedure that guarantees each agent $i$ a \part of value at least $\prop_i - \lceil \log n \rceil$.

Note that if the number of items $m$ is at most $n \lceil \log n \rceil$, then the proportional share of any agent is at most $\lceil \log n \rceil$; consequently, any partition satisfies the desired guarantee.
Thus, from now on, we assume that $m > n \lceil \log n \rceil$.

\paragraph{Algorithm overview}
The algorithm proceeds recursively.
The core idea is to process $\lceil n/2 \rceil$ agents in each recursive stage, ensuring they pick \parts of high value upon arrival.

\textbf{Preprocessing phase. }
We first compute a partition $M = (\M{1}, \dots, \M{n})$ using \rr, simulating picks in the same order as that of the agents' arrival.

\textbf{Picking phase. }
The agents arrive sequentially. 
As agent $i \in [\lfloor n/2 \rfloor]$ arrives, we transfer the top item\footnote{We note that transferring \emph{any} item suffices, but we select the top item for the sake of simplicity.} from the \emph{donor} \part $\M{i+\lfloor (n+1)/2 \rfloor}$ to $\M{i}$.
We then present the unpicked \parts to agent $i$. 
This transfer strictly increases the value of $\M{i}$, thereby making $\M{i}$ the unique favorite \part for agent $i$ among all unpicked \parts.
Thus, agent $i$ picks $\M{i}$.
Finally, we recurse on the remaining agents and items.

\newcommand{\don}{\texttt{don}}
\newcommand{\rec}{\texttt{rec}}
\newcommand{\offset}{\texttt{offset}}

\ifthenelse{\equal{\algostyle}{algorithm2e}}{
\begin{algorithm}
  \caption{Algorithm for positive valuations}\label{alg:all-pos-val}
  \SetKwFunction{FAllPosVal}{AllPosVal}
  \SetKwFunction{FMain}{Main}
  \SetKwProg{Fn}{Function}{:}{}
  
  \Fn{\FAllPosVal{$U, v_1, \dots, v_n, \texttt{stage}$}}{
      \If{$n=1$}{
          Present $U$ to the sole agent\;
          \textbf{Terminate}\;
      }
      \tcp{Phase 1: Preprocessing}
      $M \leftarrow \textsc{RoundRobin}(U, v_1, \dots, v_n)$\;
      $\texttt{mid} \gets \lfloor \frac{n}{2} \rfloor$\;
      $\offset \gets \lfloor \frac{n+1}{2} \rfloor$\;

      \tcp{Phase 2: Picking}
      \For{$i \leftarrow 1$ \KwTo $\texttt{mid}$}{
          $\rec \gets i$ \tcp*{Index of receiver agent}
          $\don \gets i + \offset$ \tcp*{Index of donor agent}
          $g_i \leftarrow \text{top item of agent } \don \text{ in } \M{\don}$\;
          $\M{\don} \gets \M{\don} \setminus \{ g_i \}$ \tcp*{Remove $g_i$ from donor}
          \label{line:remove-gi}
          $\M{\rec} \leftarrow \M{\rec} \cup \{g_i\}$ \tcp*{Add $g_i$ to receiver}
          \label{line:add-gi}
          Present the unpicked \parts in $M$ to agent $\rec$\;
      }
      
      \If{$n$ is odd}{
          Present the unpicked \parts in $M$ to agent $\lceil \frac{n}{2} \rceil$\;
      }
      
      $U_{\texttt{next}} \leftarrow \bigcup_{j: \M{j} \text{ is unpicked}} \M{j}$\;
      \FAllPosVal{$U_{\texttt{next}}, v_{\lceil \frac{n}{2} \rceil+1}, \dots, v_n, \texttt{stage}+1$} \tcp*{Recurse}
  }
  
  \Fn{\FMain{$[m], v_1, \dots, v_n$}}{
      \FAllPosVal{$[m], v_1, \dots, v_n, 1$}\;
  }
\end{algorithm}
}{}

\begin{figure}[h]
    \centering
    \includegraphics[width=0.9\linewidth]{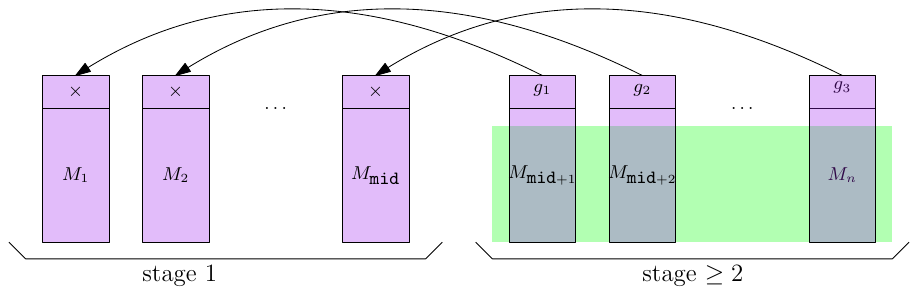}
    \caption{
    An illustration of the execution of \Cref{alg:all-pos-val}. 
    At the $i$-th step of the picking phase, the top item $g_i$ from the donor part $\M{\texttt{mid}+i}$ in the second half (stage $\ge 2$) is moved to the corresponding receiver part $\M{i}$ in the first half (stage 1).
    The shaded green region indicates the remaining items that form the sub-instance for the recursive step.
    }
    \label{fig:all-pos-val}
\end{figure}

The following lemma characterizes the behavior of the agents at any given stage $k$, establishing both the correctness of the \part selection and the resulting guarantee.

\begin{restatable}{lemma}{allposvalue}
    \label{lem:all-pos-val}
    Fix any recursive stage $k \le \lceil \log n \rceil$. 
    Any agent $i$ who picks in stage $k$ receives a \part of value at least $\prop_i - k$.
    Also, when the unpicked \parts are presented to agent $i$, they choose $\M{i}$.
\end{restatable}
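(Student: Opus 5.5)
The plan is to induct on the stage $k$. At each stage I will track two things for every agent $j$ who has not yet picked: the current item set $U_k$ and the current number of agents $n_k$. The invariant I aim for is
\[
\frac{v_j(U_k)}{n_k} \;\ge\; \prop_j - (k-1)
\]
for every agent $j$ still present at the start of stage $k$. Given this invariant, the lemma reduces to a one-stage statement: within a stage, an agent $i$ who picks gets value at least $v_i(U_k)/n_k - 1$, and chooses $\M{i}$. Composing the two gives $\prop_i - (k-1) - 1 = \prop_i - k$. The base case $k=1$ is trivial since $U_1=[m]$ and $n_1=n$. The bound $k \le \lceil \log n\rceil$ comes from the fact that $n_k$ at least halves in each stage.

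\textbf{One-stage claim.} Fix a stage and relabel its agents $1,\dots,n_k$. By \Cref{prop:RR-gives-universal-prop}, the \rr partition satisfies $v_i(\M{i}) \ge v_i(U_k)/n_k - 1$ and $v_i(\M{i}) \ge v_i(\M{j})$ for all $j>i$. I will argue by an inner induction on $i \le \texttt{mid}$ that when agent $i$ arrives, the picked \parts are exactly $\M{1},\dots,\M{i-1}$. Hence every unpicked \part is either some $\M{j}$ with $j>i$ (possibly having lost an item as a donor, which only lowers its value) or $\M{i}$ itself.
\begin{itemize}
\item Agent $i$ receives $g_i$ from the donor $\M{i+\offset}$; because all values are strictly positive, $v_i(\M{i})$ strictly increases. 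So $\M{i}$ becomes the unique maximiser, agent $i$ picks it, and its value is at least $v_i(U_k)/n_k - 1$.
\item The receivers are $1,\dots,\texttt{mid}$ and the donors are $\lceil n_k/2\rceil+1,\dots,n_k$, so no receiver \part is ever a donor.
\item When $n_k$ is odd, the middle agent's \part $\M{\lceil n_k/2\rceil}$ is untouched and weakly dominates all remaining \parts. That agent is guaranteed the value bound. The claim that they choose $\M{i}$ I will justify via tie-breaking toward $\M{i}$. If another \part of equal value were taken instead, only the labelling of the residual instance would change, and the value bound would still hold.
\end{itemize}

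\textbf{Invariant step.} After the stage, the remaining items $U_{k+1}$ are the donor columns $\M{j}$ for $j>\lceil n_k/2\rceil$, each with one item removed, shared among $n_{k+1} = \lfloor n_k/2\rfloor$ agents. I need, for each remaining agent $j$,
\[
v_j(U_{k+1}) \;\ge\; \frac{n_{k+1}}{n_k}\, v_j(U_k) \;-\; n_{k+1}.
\]
Dividing by $n_{k+1}$ then gives the invariant with a loss of exactly $1$. To prove this inequality I will view the \rr output as a matrix, where the entry in row $r$ and column $c$ is the $r$-th pick of agent $c$. The remaining items consist of rows $\ge 2$ of the last $n_{k+1}$ columns, plus row $1$ wherever a donor's top item was not its first pick. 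The key property is that agent $j$'s own pick in row $r$ was chosen while every item in rows $\ge r$ of later columns, and rows $\ge r+1$ of all columns, was still available. I will use this to charge the removed items, namely the first-half columns and the $n_{k+1}$ transferred items, against the retained ones. Each row of removed first-half items should be dominated in $v_j$ by a row of retained items shifted by one, with total slack at most one item per donor column, i.e.\ at most $n_{k+1}$.

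\textbf{Main obstacle.} I expect the hardest step to be this charging inequality, for two reasons. First, the first-half agents pick by their own valuations rather than $v_j$. Second, the number of first-half columns $\lceil n_k/2\rceil$ may exceed $n_{k+1}$. My first attempt will be a row-shift argument modelled on the standard proof of \Cref{prop:RR-gives-universal-prop}, which compares agent $j$'s $r$-th pick with every item in row $r+1$. If the shift only yields a weaker bound, I will fall back on a direct per-agent bound $v_j(U_{k+1}) \ge n_{k+1}\bigl(v_j(U_k)/n_k - 1\bigr)$, obtained by applying \rr's envy bound column by column to the donor columns.
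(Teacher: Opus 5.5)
Your within-stage picking argument is fine and matches the paper's. The problem is the cross-stage invariant $v_j(U_k)/n_k \ge \prop_j-(k-1)$: it is false, so the composition breaks. The pool $U_{k+1}$ contains the columns of \emph{all} donors. In your row-shift charging, the only item of row $r$ guaranteed to dominate row $r+1$ in $v_j$ is agent $j$'s own pick. Columns of donors after $j$ can be almost worthless to $j$, and they dilute the per-capita value by far more than one unit.

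Concretely, let $n=4$. Let $A$ be $3R$ items valued $1$ by agents $1,2,3$ and $\epsilon$ by agent $4$. Let $B$ be $R$ items valued $\epsilon$ by agents $1,2,3$ and $1$ by agent $4$. All values are positive and $m=4R>8$. Round-Robin gives columns $1,2,3$ each $R$ items of $A$, and column $4$ all of $B$. So $U_2$ consists of $R-1$ items of $A$ and $R-1$ items of $B$, and $v_3(U_2)=(1+\epsilon)(R-1)$. Your inequality requires $v_3(U_2)\ge \frac{1}{2}v_3([m])-2=\frac{(3+\epsilon)R}{2}-2$. This fails for every $R\ge 3$ and $\epsilon<1$, with deficit $\frac{(1-\epsilon)(R-2)}{2}$. (Even identical all-ones valuations with $n_k=4$, $m=9$ give $v_j(U_2)=2<2.5$.)

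Your fallback $v_j(U_{k+1})\ge n_{k+1}(v_j(U_k)/n_k-1)$ is algebraically the same inequality, so it fails as well. The lemma itself still holds in this example: agent $3$ is in stage $2\le\lceil\log 4\rceil$ and receives $R-1+\epsilon\ge\prop_3-2$. Your bookkeeping simply cannot certify it.

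The missing idea is to track agent $j$'s own column rather than the pool. The paper uses the fact that running Round-Robin afresh on the remaining agents and $U_{k+1}$ reproduces exactly their previous columns with the top row deleted. The reason is that each earlier pick was a best item in a superset of what is available at the corresponding moment of the fresh run, and it is still available, so with consistent tie-breaking it is chosen again. By induction, in stage $k$ the part $\M{i}$ is agent $i$'s stage-$1$ column minus its top $k-1$ items, plus the transferred item. Its value is therefore at least $(\prop_i-1)-(k-1)=\prop_i-k$. The Round-Robin loss of $1$ is paid once, and each later stage costs only the single donated item. Your picking argument then applies verbatim at every stage.

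A smaller point: for the middle agent when $n$ is odd, you need no tie-breaking assumption. Every later part is a donor and has lost a positively valued item, so $\M{\lceil n/2\rceil}$ is strictly best.
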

\begin{proof}
    We first argue for the base case $k=1$.
    Via induction on $i$, we show that agent $i$ precisely picks $\M{i}$ when the unpicked \parts in $M$ are presented. 

    Recall that an item $g_i$ is added to $\M{i}$ in Line \ref{line:add-gi}.
    For the base case, consider $i=1$.
    By the guarantee of \rr, we have that $\M{1} \setminus \{g_1\}$ (the original \part) is at least as valuable as any other \part $\M{j}$ ($j > 1$) for agent $1$, and is of value at least $\prop_1 - 1$.
    The addition of the positively valued item $g_1$ makes $\M{1}$ strictly better than its original state.
    Thus, $\M{1}$ becomes the uniquely most valuable \part among the unpicked \parts, and agent $1$ picks it.
    
    Suppose this holds up to agent $i-1$.
    The first $i-1$ agents have picked the first $i-1$ \parts.
    By the guarantee of \rr, $\M{i} \setminus \{g_i\}$ is at least as valuable as any $\M{j}$ ($j > i$) for agent $i$, and is of value at least $\prop_i-1$.
    The addition of the positively valued item $g_i$ makes $\M{i}$ the uniquely most valuable unpicked \part.
    Thus, any agent $i \in \{1, \dots, \lfloor \frac{n}{2}\rfloor\}$ picks $\M{i}$.
    
    If $n$ is odd, agent $\lceil \frac{n}{2} \rceil$ also finds their \part uniquely best, and of value at least $\prop_{\lceil \frac{n}{2} \rceil} - 1$. This is due to the initial \rr guarantees and the fact that all trailing agents ($\lceil \frac{n}{2} \rceil + 1, \dots, n$) have lost an item during this stage (decreasing the value of their \parts for everyone, including agent $\lceil \frac{n}{2} \rceil$).

    After this stage concludes, the remaining unpicked items form the set $U_{\texttt{next}}$.
    Consider the output of \rr as a matrix where the item in row $r$ and column $c$ is the $r$-th pick of agent $c$.
    The items \emph{assigned} to agents in the current stage constitute:    
    \begin{enumerate}
        \item The entire columns $1$ through $\lceil n/2 \rceil$.
        \item The first row of columns $\lceil n/2 \rceil+1$ through $n$ (used for transfers).
    \end{enumerate}
    Consequently, the set $U_{\text{next}}$ consists exactly of the columns $\lceil n/2 \rceil +1 \dots n$ with their first row removed.
    
    Crucially, the restriction of a \rr allocation to a subset of agents and a subset of items (in the same order) is equivalent to running \rr afresh on them.
    Thus, stage $k$ is similar to stage 1, but operating on a valuation matrix where the first $(k-1)$ rows have been removed from the perspective of the current agents.
    Recall that $m \ge n \lceil \log n \rceil$.
    The remaining items form a matrix where all but the last row is filled (because there are at least $\lceil \log n \rceil$ rows since $m > n \lceil \log n \rceil$, and the last row is filled from left to right, leaving the empty slots contiguous on the right).
    We note that this remaining partition $(\M{\lceil \frac{n}{2} \rceil+1}, \dots, \M{n})$ is equivalent to the output of a \rr call on the unpicked items with the remaining agents in the same order.
    Thus, by recursively applying the argument, any agent picking in stage $k$ has lost $k-1$ items from their designated \part in previous stages (due to being a donor in Line~\ref{line:remove-gi}) and receives all remaining items.
    Since the initial value of $\M{i}$ is at least $\prop_i-1$, the final value is at least $\prop_i - 1 - (k-1) = \prop_i - k$.
\end{proof}

Since the number of active agents halves at each stage, the maximum recursion depth is $\lceil \log n \rceil$.
Thus, the value of the \part received by any agent $i$ is at least $\prop_i - \lceil \log n \rceil$, which completes the proof of \Cref{thm:all-pos-val-guarantee}.

\hide{We note that while \rr in the preprocessing phase of \Cref{alg:all-pos-val} can be executed without the knowledge of the arrival order of agents, such an execution may lead to later agents in the sequence obtaining much less than their proportional share.\ma{Given the foll remark, why do we need this ?remove ?}}

\begin{remark}[Importance of Knowing the Arrival Order]
    We note that execution of \rr in the preprocessing phase of \Cref{alg:all-pos-val} without the knowledge of the arrival order of agents may lead to later agents in the sequence obtaining much less than their proportional share.
    
    Consider an instance with three agents where the \rr \parts $\M{1}, \M{2}, \M{3}$ are computed using the picking order $\langle 2, 1, 3 \rangle$, but the agents actually arrive in the order $\langle 1, 2, 3 \rangle$.
    Suppose the valuations are as follows:
    \begin{itemize}
        \item Agent 1 values $\M{2}$ at $1.5 \prop_1 + 1$, $\M{1}$ at $1.5\prop_1 - 1 - \epsilon_1$, and $\M{3}$ at $\epsilon_1$ (where $\epsilon_1 > 0$ is small).
        \item Agent 2 values $\M{2}$ at $3\prop_2 - 2\epsilon_2$ and the other two \parts at $\epsilon_2$ (where $\epsilon_2 \approx 0$).
    \end{itemize}
    The algorithm aims to assign $\M{1}$ to agent 1. 
    To ensure agent 1 picks $\M{1}$ autonomously, it attempts to make $\M{1}$ the unique favorite by transferring an item $g$ from the donor \part $\M{3}$ to $\M{1}$.
    However, since $\M{3}$ contains items of total value $\epsilon_1$, the transfer increases the value of $\M{1}$ by at most $\epsilon_1$.
    Thus, the augmented $\M{1}$ has value at most $(1.5\prop_1 - 1 - \epsilon_1) + \epsilon_1 = 1.5\prop_1 - 1$.
    Agent 1 compares this against $\M{2}$ (value $1.5\prop_1 + 1$) and rationally picks $\M{2}$.
    Consequently, agent 2 is left with only $\epsilon_2$-valued parts, which violates their guarantee. \qed
\end{remark}

\section{When each agent's proportional share is bounded}
\label{sec:large-prop}

In this section, we prove \Cref{thm:large-prop-guarantee}.
We consider the general setting where agents have arbitrary additive valuations (potentially including zeros), provided that every agent's proportional share is sufficiently large.

Recall that the agents arrive in the order $\langle 1, \dots, n \rangle$.
We assume that the proportional share of every agent $i$ satisfies the following lower bound: $\prop_i \ge 2 (\lceil \log i \rceil + 1)$.

This condition is intrinsic to our algorithm, which relies on agents possessing sufficient value to facilitate the necessary partition updates; it is not designed to handle agents with negligible shares.
However, we note that for any agent falling below this threshold, the target guarantee $\prop_i - 2 (\lceil \log i \rceil + 1) + 1$ would be at most 1.

\paragraph{Algorithm Overview}
\label{sec:largeprop-overview}
The core challenge in this general setting, unlike the strictly positive case (\Cref{alg:all-pos-val}), is that an agent might value the items in the ``donor'' \parts at zero, making the simple item transfer strategy ineffective.
To handle this, we introduce a \emph{rebundling} technique: if we cannot transfer a positively valued item from a donor to the receiver (agent $i$) to make their \part $\M{i}$ the unique favorite, we instead swap items between the unpicked \parts other than $\M{i}$ to ensure that $\M{i}$ emerges as the unique favorite; see \Cref{fig:large-prop} for illustration.

The algorithm proceeds recursively.
We define a \textit{stage} as a single level of recursion.
The following is a high-level overview of a single recursive stage:
\begin{enumerate}
    \item \textbf{Preprocessing phase:} 
    The first $\lfloor n/2 \rfloor$ agents are processed in this stage.
    We compute an initial partition $M = (\M{1}, \ldots, \M{n})$ using \rr with the picking order $\langle 1, \dots, n \rangle$.
    
    \item \textbf{Picking phase:}
    Agents $1$ through $\lfloor n/2 \rfloor$ arrive sequentially.
    When agent $i$ arrives, our goal is to ensure that $\M{i}$ is their uniquely preferred \part.
    We examine the items in the ``donor'' \parts (specifically, the top two picks of agents in the range $\lfloor n/2 \rfloor + 1, \dots, n$).
    \begin{enumerate}
        \item \textbf{Transfer:} 
        If there exists a donor item $g_i$ that agent $i$ values strictly positively, we transfer it to $\M{i}$.
        This strictly increases $v_i(\M{i})$ while leaving the value of other \parts unchanged or lower, making $\M{i}$ the unique favorite.
        
        \item \textbf{Rebundling:} 
        If there is no such item, it implies that agent $i$ values all available donor items as zero.
        We identify any \parts that currently provide the same value as $\M{i}$.
        We create a temporary partition $M'$ by swapping the top item of each such same-valued \part with a zero-valued item from a donor \part.
        This operation strictly decreases the value of these same-valued \parts while $\M{i}$ remains unchanged.
        We present the \parts in $M'$ to agent $i$, for whom the unique favorite is guaranteed to be $\M{i}$.
    \end{enumerate}
    
    \item \textbf{Recursion:} 
    After all agents in the current set have picked their \parts, the remaining unpicked items are pooled, and the algorithm recurses on the remaining agents with the stage counter incremented.
\end{enumerate}

\ifthenelse{\equal{\algostyle}{algorithm2e}}{
\begin{algorithm}
  \caption{Algorithm for bounded proportional shares}\label{alg:large-prop}
  \SetKwFunction{FLargeProp}{BoundedProp}
  \SetKwFunction{FMain}{Main}
  \SetKwProg{Fn}{Function}{:}{}
  
  \Fn{\FLargeProp{$U, v_1, \dots, v_n, \texttt{stage}$}}{
      \If{$n=1$}{
          Present $U$ to the sole agent\;
          \textbf{Terminate}\;
      }
      \tcp{Phase 1: Preprocessing}
      $M \leftarrow \textsc{RoundRobin}(U, v_1, \dots, v_n)$\;
      $\texttt{mid} \gets \lfloor \frac{n}{2} \rfloor$\;

      \tcp{Phase 2: Picking}
      \For{$i \leftarrow 1$ \KwTo $\texttt{mid}$}{
          \tcp{Agent $i$ arrives to pick}
          
          \If{$i = \texttt{mid}$}{
              \label{line10} 
              Add to $\M{i}$ all items $\M{j,a}$ where $j \in \{1,2\}$, $a \in \{\texttt{mid}+1, \ldots, n\}$ and $\M{j,a} \ne \perp$\;
              \tcp{Collect all unpicked items from top 2 rows of later agents}
              \label{line11} Present the unpicked \parts in $M$ to agent $i$\;
              \textbf{Continue}\;
          }

          \tcp{Check for a positively valued item in top 2 rows of donors}
          \If{$\exists~a \in \{\texttt{mid} + 1, \dots, n\}, j \in \{1,2\}$ s.t. $\M{j,a} \neq \perp$ and $v_i(\M{j,a}) > 0$}{
              \label{choice-of-item}
              Let $g_i \gets \M{j,a}$\;
              $\M{j,a} \gets \perp$ \tcp*{Remove item from donor and mark as taken}
              $\M{i} \leftarrow \M{i} \cup \{g_i\}$\;
              Present the unpicked \parts in $M$ to agent $i$\;
          }
          \Else{
              Construct temporary partition $M'$ via Rebundling:\;
              \quad $M'[q] = \M{q}$ for each $q \in [n]$\;
              \quad Swap top items: $M'[1, i+k] \leftrightarrow \M{1, \texttt{mid}+k}$ for $k \in \{1, \dots, \texttt{mid} - i\}$\;
              Present the unpicked \parts in $M'$ to agent $i$\;
          }
      }
      \tcp{Phase 3: Recursion}
      $U_{\texttt{next}} \leftarrow \bigcup_{j: \M{j} \text{ is unpicked}} \M{j}$\;
      \FLargeProp{$U_{\texttt{next}}, v_{\texttt{mid}+1}, \dots, v_n, \texttt{stage}+1$}\;

  }
  
  \Fn{\FMain{$[m], v_1, \dots, v_n$}}{
      \FLargeProp{$[m], v_1, \dots, v_n, 1$}\;
  }
\end{algorithm}
}{}

\begin{figure}
    \centering
    \includegraphics[width=0.9\linewidth]{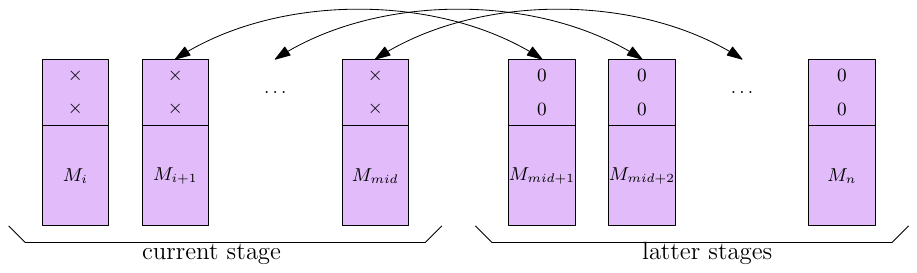}
    \caption{Illustration of the rebundling technique when no positive transfer is possible.}
    \label{fig:large-prop}
\end{figure}

\paragraph{Analysis of the guarantee}To analyze the performance of the algorithm, we first establish a lower bound on the value an agent receives when participating in a specific recursive stage $k$.

\begin{restatable}{lemma}{largepropguaranteestage}
\label{lem:large-prop-gurantee-stage}
Suppose that for every agent $i$, $\prop_i \ge 2 \lceil \log i \rceil + 2$.
Fix any recursive stage $k \le \lceil \log n \rceil + 1$. 
Any agent $i$ who picks in stage $k$ receives a \part of value at least $\prop_i - 2k+1$.
Moreover, when the unpicked \parts are presented to agent $i$, they choose the \part $\M{i}$.
\end{restatable}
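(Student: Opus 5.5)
I would prove \Cref{lem:large-prop-gurantee-stage} by induction on the stage $k$, mirroring the proof of \Cref{lem:all-pos-val}. Within a stage I would run a second induction on the arrival index $i \in \{1,\dots,\texttt{mid}\}$, showing that when agent $i$ arrives the parts $\M{1},\dots,\M{i-1}$ are exactly the picked ones and $\M{i}$ is agent $i$'s unique favourite among the unpicked parts. Uniqueness is what makes ties irrelevant. The value bound then follows from two facts: \Cref{prop:RR-gives-universal-prop} gives $v_i(\M{i}) \ge \prop_i - 1$ for the round-robin matrix of the current stage, and I would track how many items the agent's column lost in earlier stages.

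\textbf{Within-stage correctness.} For $i<\texttt{mid}$ I would split into two cases.

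\emph{Transfer case.} There is a donor item in rows $1,2$ of columns $\texttt{mid}+1,\dots,n$ with $v_i>0$. By the envy-freeness-towards-later-pickers part of \Cref{prop:RR-gives-universal-prop}, $v_i(\M{i}) \ge v_i(\M{j})$ for all $j>i$. Earlier stage-$k$ operations only remove items from parts $j>i$ or add them to already-picked parts, so adding $g_i$ makes $\M{i}$ strictly best.

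\emph{Rebundling case.} Agent $i$ values every remaining donor top-two item at $0$. I would argue that the only competitors are the parts $\M{i+1},\dots,\M{\texttt{mid}}$ and the donor parts. The donor parts have value at most $v_i(\M{i})$, and strictly less if their top item was already removed or if we argue via the zero-valued rows. The swap $M'[1,i+k]\leftrightarrow \M{1,\texttt{mid}+k}$ replaces the top item of each $\M{i+k}$ by a zero-valued item.

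To make this strict, I need that the top item of $\M{i+k}$ is positive for agent $i$. If it is $0$, then by the round-robin order all later picks of $\M{i+k}$ are also worth $0$ to agent $i$, so $v_i(\M{i+k})=0<v_i(\M{i})$. The last inequality holds because $v_i(\M{i}) \ge \prop_i-1>0$ by the proportional-share hypothesis. I would also check that each donor $\M{\texttt{mid}+k}$ still has a row-1 item, which requires a small count of how many donor items were consumed.

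For $i=\texttt{mid}$, line~\ref{line10} dumps all remaining top-two donor items into $\M{\texttt{mid}}$. After this every donor part has lost its first two rows, while round robin gives $v_i(\M{\texttt{mid}}) \ge v_i(\M{j})$, so uniqueness follows whenever any dumped item is positive. Otherwise I would compare $\M{\texttt{mid}}$ against donors whose rows $1,2$ are worth $0$ to agent $\texttt{mid}$. I expect this tie-breaking sub-case to need the same zero-column argument as above.

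\textbf{Across stages.} As in \Cref{lem:all-pos-val}, I would show that the unpicked items at the end of a stage are exactly columns $\texttt{mid}+1,\dots,n$ of the round-robin matrix with their first two rows removed. Restricting round robin to a suffix of agents and to the items below a fixed row equals a fresh round-robin run, so stage $k$ is a round robin on the original matrix with $2(k-1)$ rows deleted from the surviving columns.

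Hence an agent picking in stage $k$ holds a part that lost at most $2(k-1)$ items of value at most $1$ each. Since transfers only add items to the receiver and $M'$ is only temporary (the real $M$ keeps $\M{i}$ intact), the received value is at least $\prop_i-1-2(k-1)=\prop_i-2k+1$. Here $\prop_i$ is the global share: deleting entire rows from the perspective of surviving columns costs at most $2$ per stage per column relative to the original round robin. An agent $i$ picking in stage $k$ satisfies $k\le\lceil\log i\rceil+1$, so the hypothesis gives $\prop_i-2k+1\ge 1>0$, which is the positivity used above.

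\textbf{Main obstacle.} I expect the hardest step to be the rebundling case, specifically proving strict preference for $\M{i}$ over every unpicked part, donors included, when all remaining donor items are zero for agent $i$. This is where the assumption $\prop_i\ge 2\lceil\log i\rceil+2$ enters, guaranteeing $v_i(\M{i})>0$ after $2(k-1)$ losses. I would also verify carefully that enough row-1 donor items remain for the swaps.
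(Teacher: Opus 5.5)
Your overall architecture is the paper's: induction over stages and over arrivals within a stage, the transfer/rebundling split, and the observation that the surviving items are columns $\texttt{mid}+1,\dots,n$ of the round-robin matrix with two more rows removed, which gives $\prop_i-1-2(k-1)$. The gap is in the strictness argument for rebundling, which is the heart of the lemma. First, your claim is false: if agent $i$ values the first pick $\M{1,i+k}$ at $0$, it does not follow ``by the round-robin order'' that all later picks of $\M{i+k}$ are worth $0$ to $i$. Agent $i+k$ picks according to $v_{i+k}$, so its first pick can be worthless to $i$ while its second pick is valuable to $i$. Second, you never compare $\M{i}$ with the donor parts \emph{after} the swap. $M'[\texttt{mid}+k]$ receives $\M{1,i+k}$, which may be positive for $i$, so the pre-swap bound $v_i(\M{\texttt{mid}+k})\le v_i(\M{i})$ says nothing about it. Your $i=\texttt{mid}$ sub-case also rests on the same flawed zero-column argument.

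The missing idea is the pick-by-pick form of the round-robin guarantee, which is what the paper uses (``top two picks of $\M{1}$ are non-zero, top two picks of each donor are zero''). For every row $r$ and every $j>i$, $v_i(\M{r,i})\ge v_i(\M{r,j})$, because $j$'s $r$-th pick was still available when $i$ made its $r$-th pick.
\begin{itemize}
\item For the swapped non-donor parts, $v_i(M'[i+k])\le\sum_{r\ge2}v_i(\M{r,i})<v_i(\M{i})$, which needs only $v_i(\M{1,i})>0$.
\item For the swapped donor parts, $v_i(M'[\texttt{mid}+k])\le v_i(\M{1,i})+0+\sum_{r\ge3}v_i(\M{r,i})<v_i(\M{i})$, which needs $v_i(\M{2,i})>0$.
\end{itemize}
So the hypothesis must deliver that agent $i$'s \emph{first two} picks in the current stage are positive. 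Your conclusion $v_i(\M{i})\ge 1>0$ is not enough, since a single item of value $1$ satisfies it. The needed fact does follow from $\prop_i\ge 2g(i)$. Agent $i$ values at least $2g(i)\,n$ items positively, and fewer than $rn$ items precede its $r$-th pick in the original round robin. Hence its first $2g(i)$ picks are positive, and these include rows $1$ and $2$ of its stage-$g(i)$ column.

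The same comparison settles $i=\texttt{mid}$. After the dump, each donor has only rows $\ge 3$, which are dominated by those of $\M{\texttt{mid}}$, with slack $v_{\texttt{mid}}(\M{1,\texttt{mid}})>0$.

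Finally, a count alone cannot ensure $\M{1,\texttt{mid}+k}\neq\perp$, since an earlier transfer may have taken exactly that item; the paper is silent on this too. With the row-wise argument you can instead swap with the donor's row-$2$ item, or simply move $\M{1,i+k}$ into the donor. In either case the slack $v_i(\M{2,i})>0$ still gives strictness.
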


\begin{proof}
    We first analyze the case $k=1$.
    Analogous to the proof of \Cref{lem:all-pos-val}, we proceed via induction on $i$ to show that agent $i$ precisely picks $\M{i}$ when the unpicked \parts are presented.
    
    For the base case, consider $i=1$.
    We break the analysis into two cases depending on whether there is a positively valued item $g_1$ as stated in Line~\ref{choice-of-item} or not.
    \begin{itemize}
        \item 
        Suppose there exists an item $g_1$ in the top 2 rows of some donor \part $\M{a}$ (where $a > \texttt{mid}$) such that $v_1(g_1) > 0$.
        By the properties of \rr, the \part $\M{1}$ is already at least as valuable to agent $1$ as any other \part $\M{j}$ ($j > 1$), and satisfies $v_1(\M{1}) \ge \prop_1 - 1$.
        The transfer of the positively valued item $g_1$ to $\M{1}$ strictly increases its value, making $\M{1}$ the unique favorite. 
        Thus, agent 1 picks $\M{1}$.
        \item 
        Suppose no such item exists.
        This implies that agent $1$ values all items among the top two picks of the donor \parts $\M{\lceil \frac{n}{2} \rceil +1}, \dots, \M{n}$ at $0$.
        The algorithm constructs a temporary partition $M'$ by swapping items.
        For each $j \in \{1, \ldots, \texttt{mid}-1\}$, we swap the top item of $\M{1+j}$ with that of $\M{\texttt{mid} + j}$.
        This ensures for all $j \in [1, \texttt{mid}-1]$, both $M'[1+j]$ and $M'[\texttt{mid}+ j]$ have value strictly less than that of $\M{1}$ because:
        \begin{itemize}
            \item the top two picks of $\M{\texttt{mid}+j}$ are of value zero (otherwise the algorithm would have picked one of them as $g_1$);
            \item the top two picks of $\M{1}$ are of non-zero value (since $\prop_1 \ge 2$ and $\M{1}$ is computed via \rr).
        \end{itemize}
        Note that $M'[n]$ is also strictly less valuable than $M'[1]$.
        Thus, agent $1$ picks $M'[1]$ (i.e., $\M{1}$), which has value at least $\prop_1 - 1$.
    \end{itemize}

    Consider any $1 < i < \texttt{mid}$.
    Suppose the claim holds for all agents prior to $i$.
    Thus, the first $i-1$ \parts in $M$ have been picked, and the unpicked \parts are $(\M{i}, \dots, \M{n})$.
    When agent $i$ arrives, the logic is identical to the base case. 
    We analyze the same two cases depending on whether there exists an item $g_i$ as stated in Line~\ref{choice-of-item} or not.
    If there exists such a $g_i$, adding it to $\M{i}$ makes $\M{i}$ the unique favorite.
    If not, the rebundling step strictly lowers the value of all same-valued \parts by replacing their top items with zero-valued items from the donors, while leaving $\M{i}$ unchanged.
    Therefore, agent $i$ picks $\M{i}$.

    Finally, consider the last agent of this stage, $i = \texttt{mid}$.
    By the property of \rr, the \part $\M{\texttt{mid}}$ is at least as valuable as any other unpicked \part.
    The algorithm adds to $\M{\texttt{mid}}$ all remaining items from the top two rows of the donor agents ($\texttt{mid}+1$ to $n$).
    Thus, $\M{\texttt{mid}}$ becomes the uniquely best \part, and agent $\texttt{mid}$ picks it.

    After stage $1$ concludes, the remaining unpicked items form the set $U_{\texttt{next}}$.
    Consider the output of \rr as a matrix where the item in row $r$ and column $c$ is the $r$-th pick of agent $c$.
    The items \emph{assigned} to agents in the current stage constitute:    
    \begin{enumerate}
        \item The entire columns $1$ through $\texttt{mid}$.
        \item The first 2 rows of columns $\texttt{mid}+1$ through $n$ (used for transfers or swaps).
    \end{enumerate}
    Consequently, the set $U_{\text{next}}$ consists exactly of the columns $\texttt{mid}+1 \dots n$ with their first 2 rows removed.
    
    Crucially, the restriction of a \rr allocation to a subset of agents and a subset of items (in the same order) is equivalent to running \rr afresh on them.
    Thus, stage $k$ is similar to stage 1, but operating on a valuation matrix where the first $2(k-1)$ rows have been removed from the perspective of the current agents.
    
    The value of $\M{i}$ after the \rr call in stage $1$ is at least $\prop_i - 1$.
    In stage $k$, the \part $\M{i}$ is missing its top $2(k-1)$ items compared to the original allocation.
    Since the value drops by at most the number of missing items, the final value is at least $\prop_i - 1 - 2(k-1) = \prop_i - 2k + 1$.
\end{proof}

Having established the guarantee for an agent conditioned on the stage in which they participate, the next step is to bound the specific stage that an agent arriving at position $i$ participates in.

\begin{restatable}{claim}{agentstagelargeprop}
    \label{clm:agent-stage-largeprop}
    In \Cref{alg:large-prop}, an agent arriving at position $i$ participates in a recursive stage $k$ satisfying $k \leq \ceil{\log i} + 1$. 
\end{restatable}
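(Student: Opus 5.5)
The plan is to turn the recursion into a simple recurrence on the number of active agents, and then count how many agents are processed before any given stage.

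\textbf{Setting up the recurrence.} Let $n_k$ be the number of agents active at the start of stage $k$, so $n_1 = n$. A non-terminal stage processes exactly $\lfloor n_k/2 \rfloor$ agents and recurses on the rest, which gives
\[
n_{k+1} = n_k - \lfloor n_k/2 \rfloor = \lceil n_k/2 \rceil .
\]
A stage with $n_k = 1$ is terminal: the sole agent picks there. The agents are processed in arrival order. Hence the agent at position $i$ participates in stage $k$ exactly when $i > n - n_k$. Here $n - n_k = \sum_{j=1}^{k-1} \lfloor n_j/2 \rfloor$ is the number of agents who have already picked before stage $k$ begins.

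\textbf{Lower-bounding the earlier stages.} Suppose $k \ge 2$ and the agent at position $i$ participates in stage $k$. Then every stage $j \le k-1$ was non-terminal, so $n_{k-1} \ge 2$. Inverting the recurrence gives $n_j \ge 2n_{j+1} - 1$. By backward induction from $n_{k-1} \ge 2$ I will obtain
\[
n_j \ge 2^{k-1-j} + 1 \quad \text{for all } j \le k-1 .
\]
It follows that $\lfloor n_{k-1}/2 \rfloor \ge 1$, and $\lfloor n_j/2 \rfloor \ge 2^{k-2-j}$ for $j \le k-2$.

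\textbf{Concluding the bound.} Summing the bounds from the previous step gives
\[
n - n_k \ge 1 + \sum_{j=1}^{k-2} 2^{k-2-j} = 2^{k-2}.
\]
Hence $i \ge 2^{k-2} + 1$, so $\log i > k-2$, and therefore $\lceil \log i \rceil \ge k-1$, i.e.\ $k \le \lceil \log i \rceil + 1$. The case $k = 1$ is immediate, since $\lceil \log i \rceil + 1 \ge 1$.

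\textbf{Main obstacle.} The only delicate point is getting the parity bookkeeping of $\lfloor \cdot \rfloor$ and $\lceil \cdot \rceil$ right. In particular, I must use the inverted inequality $n_j \ge 2n_{j+1} - 1$, rather than the incorrect $n_j \ge 2n_{j+1}$, so that the odd case is handled correctly. I must also treat the terminal stage, where $n_k = 1$, consistently with the algorithm's base case. Once the geometric lower bound on $n - n_k$ is in place, the rest is a direct computation.
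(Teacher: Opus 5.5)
Your argument is correct, but it takes a different route from the paper's.

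\textbf{The paper's route.} The paper inducts on the position $i$ through the recursion. If $i > \lfloor n/2 \rfloor$, the agent is re-indexed as $i' = i - \lfloor n/2 \rfloor$ in the recursive call. Since $i \le n$, this gives $i' \le i - \lfloor i/2 \rfloor = \lceil i/2 \rceil$. The identity $\lceil \log \lceil i/2 \rceil \rceil = \lceil \log i \rceil - 1$, valid for $i \ge 2$, then absorbs the extra stage.

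\textbf{Your route.} You unroll the whole recursion into the sequence $n_{k+1} = \lceil n_k/2 \rceil$. You then prove the global statement that at least $2^{k-2}$ agents have picked before stage $k$ begins, using the inverted recurrence $n_j \ge 2n_{j+1}-1$. The arithmetic checks out, including the parity step $\lfloor (2^{k-1-j}+1)/2 \rfloor = 2^{k-2-j}$ and the sum $1 + (2^{k-2}-1) = 2^{k-2}$.

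\textbf{What each buys.} The paper's version is shorter, and it mirrors the recursive definition of $g(i)$ used right afterwards for \Cref{cor:large-prop-guarantee}. Yours makes the odd-$n$ and terminal ($n_k = 1$) cases explicit. The paper's induction step tacitly skips the call with $n=1$: there $i = 1 > \lfloor n/2 \rfloor = 0$, yet no recursion occurs and its identity fails for $i=1$. That case is trivial, but your formulation covers it automatically. Your counting also exhibits directly that the bound is attained; for example, with $n = 5$ the agent at position $i = 5$ lands in stage $4 = \lceil \log 5 \rceil + 1$.

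\textbf{A wording fix.} The agent at position $i$ does not participate in stage $k$ \emph{exactly} when $i > n - n_k$. For a non-terminal stage, the correct range is $n - n_k < i \le n - n_{k+1}$. You only use the implication from participation in stage $k$ to $i > n - n_k$, which is valid, so nothing breaks.
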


\begin{proof}
    We prove this by induction on $i$.
    If $i \le \lfloor n/2 \rfloor$, the agent participates in stage $1$.
    Since $i \ge 1$, we have $1 \le \lceil \log i \rceil + 1$.
    Thus, the statement holds.

    Suppose that $i > \lfloor n/2 \rfloor$. 
    In the algorithm, agent $i$ is passed on to the next recursive call and is assigned the index $i' = i - \lfloor n/2 \rfloor$.
    The stage in which agent $i$ participates is exactly $1$ plus the stage of the agent in the recursive call.
    By the induction hypothesis, we have that in the recursive call, agent $i$ participates in stage $k$ where 
    \[
    k = 1 + k' \le 1 + (\lceil \log (i-\lfloor n/2 \rfloor) \rceil + 1) = \lceil \log (i-\lfloor n/2 \rfloor) \rceil + 2.
    \]
    To complete the proof, we must show that $\lceil \log (i-\lfloor n/2 \rfloor) \rceil + 2 \le \lceil \log i \rceil + 1$.
    We first establish an inequality that holds for any integer $i \ge 2$.
    Let $q = \lceil \log i \rceil$.
    Then $2^{q-1} < i \le 2^q$, and thus $2^{q-2} < i/2 \le 2^{q-1}$.
    This implies that $\lceil \log \lceil i/2 \rceil \rceil = q-1 = \lceil \log i \rceil -1$.

    Since $i \le n$, we have $\lfloor i/2 \rfloor \le \lfloor n/2 \rfloor$. 
    Substituting this back:
    \begin{align*}
        \lceil \log (i-\lfloor n/2 \rfloor) \rceil + 2 &\le \lceil \log (i- \lfloor i/2 \rfloor) \rceil + 2 \\
        &= \lceil \log \lceil i / 2 \rceil \rceil + 2\\
        &= (\lceil \log i \rceil - 1) + 2 \\
        &= \lceil \log i \rceil + 1.
    \end{align*}
\end{proof}

We are ready to prove the main guarantee.
Consider any agent $i$. 
By \Cref{clm:agent-stage-largeprop}, this agent participates in some recursive stage $k$ where $k \le \lceil \log i \rceil + 1$.
By \Cref{lem:large-prop-gurantee-stage}, agent $i$ picks the \part $\M{i}$ with value at least $\prop_i - 2(\lceil \log i \rceil + 1) + 1 = \prop_i - 2\lceil \log i \rceil - 1$.
Using the premise that $\prop_i \ge 2(\lceil \log i \rceil + 1)$, we observe that this guaranteed value is strictly positive:
\[ 
    v_i(\M{i}) \ge 2(\lceil \log i \rceil + 1) - 2(\lceil \log i \rceil) - 1 = 1. 
\]
    
Moreover, the \part $\M{i}$ has lost exactly $2(k-1)$ items in the previous stages. 
With the lower bound on $\prop_i$ and upper bound on $k$, the ``effective'' proportional share of agent $i$ in the sub-instance in stage $k$ is at least:
\[ 
    \prop_i - 2(k-1) \ge 2(\lceil \log i \rceil + 1) - 2\lceil \log i \rceil = 2. 
\]
This guarantees that the sub-matrix in stage $k$ contains at least two rows (with the $i$-th column containing at least two non-zero items for agent $i$). 
This ensures that the "top 2 rows" required for the transfer and rebundling steps are always available.
Overall, we have proved \Cref{thm:large-prop-guarantee}.

We conclude the analysis of the guarantee by discussing the consequences of not meeting the proportional share condition of \Cref{thm:large-prop-guarantee}. 

\begin{remark}[Impact of Low Proportional Shares]
    \label{rem:impact-low-prop}
    It is important to note that a violation of the proportionality condition in \Cref{thm:large-prop-guarantee} by a single agent jeopardizes the guarantees for all subsequent agents.
    Consider the first agent $i$ who fails the condition (i.e., $\prop_i < 2(\lceil \log i \rceil + 1)$).
    The low valuation of agent $i$ implies the algorithm may fail to make $\M{i}$ the unique favorite (for instance, if they value all unpicked parts at zero).
    Consequently, agent $i$ might select a part intended for a later agent $j > i$.
    Since the guarantees for subsequent agents rely on the specific availability of their designated \parts (the inductive invariant), this deviation disrupts the recursive structure.
    Therefore, for an arbitrary ordering, the algorithm guarantees success only up to the first agent who violates the proportionality requirement. 
    We will return to the matter of low proportional share later, in \Cref{rem:robustness-alg-largeprop}, where we establish a weaker sufficient condition that allows the algorithm to yield strong guarantees.
    \qed
\end{remark}

Next, we discuss two ways in which our analysis for \Cref{thm:large-prop-guarantee} can be refined to obtain a stronger result: first by showing that a stronger guarantee exists, and secondly by reducing the proportional share required by each agent.

\subsection{Strengthening the result: \Cref{cor:large-prop-guarantee}}

The logarithmic bound used in \Cref{thm:large-prop-guarantee} serves as a convenient worst-case estimate. 
However, we can refine this analysis by using the exact stage numbers to provide a higher guarantee.

Suppose that $\sigma=\langle 1, \dots, n \rangle$ denotes the arrival order of the agents.
Let $g: [n] \rightarrow [\lceil \log n \rceil + 1]$ be the function that maps each agent to the recursive stage they participate in.
The function $g$ is defined recursively as:
\begin{align*}
    g(i) = 
    \begin{cases}
        1, &\text{if } i \le \lfloor n/2 \rfloor; \\
        1 + g(i - \lfloor n/2 \rfloor), &\text{if } i > \lfloor n/2 \rfloor.
    \end{cases}
\end{align*}
By \Cref{clm:agent-stage-largeprop}, we know that $g(i) \le \lceil \log i \rceil + 1$. 
We can refine the analysis of \Cref{thm:large-prop-guarantee} by using this function $g$.

Recall from the proof that an agent $i$ participating in stage $k = g(i)$ has lost exactly $2(k-1)$ items to agents in prior stages.
For the recursion to proceed successfully, we showed that it was sufficient that the agent's ``effective'' proportional share in the current stage was at least $2$.
This requires $\prop_i - 2(g(i)-1) \ge 2 \iff \prop_i \ge 2 g(i)$.
Previously, we satisfied this by imposing the coarser condition of $\prop_i \ge 2(\lceil \log i \rceil + 1)$.
However, the condition $\prop_i \ge 2 g(i)$ is sufficient.
This leads us to \Cref{cor:large-prop-guarantee}. 

The following highlights the conditions required for the algorithm's successful execution.

\begin{remark}[Robustness of the Algorithm]
    \label{rem:robustness-alg-largeprop}
    It is important to observe that \Cref{alg:large-prop} is robust even for agents who fall below the proportional share threshold (i.e., $\prop_i < 2 g(i)$), provided they assign non-zero value to a sufficient number of items.
    Intrinsically, the algorithm's execution relies only on the \emph{existence} of non-zero values to establish strict inequalities, not their magnitude.
    
    Consider the operations performed when agent $i$ arrives.
    Either a positively valued item is found in Line~\ref{choice-of-item}, allowing the algorithm to execute a simple item transfer to make their \part uniquely best; 
    or, if no such item exists, the algorithm performs the rebundling step.
    Note that the rebundling invariant requires only that the agent's current top two items have \emph{some} positive value.
    Swapping the top items between the \parts $M'[\texttt{mid}+k]$ and $M'[i+k]$ still strictly lowers the value of $M'[i+k]$ relative to the agent's \part, regardless of how small the original value of $M'[i]$ was.

    \begin{quote}
        Structurally, the algorithm executes successfully as long as each agent $i$ possesses at least $2g(i)$ items with non-zero value in $\M{i}$ (which is sufficient to execute the operations and cover the items removed in previous recursive stages).
    \end{quote}
    
    The assumption that $\prop_i \ge 2g(i)$ used in \Cref{cor:large-prop-guarantee} is simply a convenient sufficient condition to ensure this support exists.
    Thus, the assumption that agents have reasonable proportional shares is primarily a tool for achieving strong guarantees, rather than a strict prerequisite for the algorithm's execution. \qed
\end{remark}

With the results strengthened for instances where agents have large proportional shares, a natural question remains: \textit{Can we extend similar guarantees to \emph{general} instances, those that contain agents with low proportional shares?}

We address this in the following subsection: by enforcing a specific arrival order, we can in fact generalize this bound.

\subsection{Guarantees for general instances}

We now prove \Cref{thm:large-prop-guarantee-extended}. That is, for any instance, we can compute an arrival order that ensures the bound applies to all agents regardless of their proportional share.

\begin{proof}[Proof of \Cref{thm:large-prop-guarantee-extended}]
We prove this by defining a greedy procedure to compute $\tau$, and using \Cref{alg:large-prop} as the allocation algorithm.
Let $U$ be the set of currently unassigned agents (initially $U = [n]$).
We determine the agent $\tau_i$ for each position $i$ sequentially from $1$ to $n$.
We define the \textit{candidate} set $C_i \subseteq U$ for position $i$ consisting of agents whose proportional shares are sufficiently large: $C_i = \{a \in U: \prop_a \ge 2 (\lceil \log i \rceil + 1) \}$.

We proceed with the following selection strategy for each step $i$ from $1$ to $n$:
\begin{itemize}[wide=0pt]
    \item \textbf{Case 1: Candidates exist.} 
    If $C_i \ne \emptyset$, we set $\tau_i$ as an arbitrary agent $a \in C_i$, and remove $a$ from $U$.
    \item \textbf{Case 2: No candidates.} 
    If $C_i=\emptyset$, we fill $\tau_i$ and all subsequent positions with the remaining agents in $U$ in an arbitrary order.
    (Note that since the proportionality threshold $2(\lceil \log i \rceil + 1)$ is non-decreasing over positions, if $C_i$ is empty, no unassigned agent could satisfy the condition for any position $j > i$).
    Consider any agent $\tau_j$ placed in this phase.
    By definition, $\prop_{\tau_j} < 2 (\lceil \log j \rceil + 1)$.
    Thus, this evaluates to $\prop_{\tau_j} - 2 (\lceil \log j \rceil + 1) \le 0$, which is trivially satisfied by any \part.
\end{itemize}

We analyze the outcome of \Cref{alg:large-prop} for an agent $\tau_i$ based on how they were assigned:
\begin{itemize}
    \item \textbf{Agents assigned in Case 1:} 
    By construction, the order $\tau$ consists of a prefix of agents selected via Case 1, followed by a suffix of agents filled via Case 2.
    For any agent $\tau_i$ in the prefix, we have $\prop_{\tau_i} \ge 2 (\lceil \log i \rceil + 1)$.
    Since no agent prior to $i$ violates the condition, by \Cref{rem:impact-low-prop} we infer that \Cref{alg:large-prop} guarantees agent $\tau_i$ a part of value at least $\prop_{\tau_i} - 2 \lceil \log i \rceil + 1$.
    
    \item \textbf{Agents assigned in Case 2:}
    By construction, we have $\prop_{\tau_i} < 2 (\lceil \log i \rceil + 1)$.
    As noted in \Cref{rem:impact-low-prop}, there is no guarantee on the value of a part picked by $\tau_i$ (it may even be zero).
    However, since the target bound is non-positive, the algorithm trivially guarantees agent $\tau_i$ a part of value at least the desired guarantee.
\end{itemize}
Thus, we have proved the theorem.
\end{proof}

Finally, we combine the greedy ordering strategy established in \Cref{thm:large-prop-guarantee-extended} with the refined analysis of the recursion depth from \Cref{cor:large-prop-guarantee} to obtain our tightest result.

\begin{corollary}
\label{cor:large-prop-guarantee-extended}
For any fair division instance, there exists an arrival order of agents, denoted by $\tau$, such that each agent $\tau_i \in [n]$, when picking autonomously and rationally, gets a \part of value at least $\prop_{\tau_i} - 2 g(i) - 2$. 
Moreover, $\tau$ can be computed in polynomial time.
\end{corollary}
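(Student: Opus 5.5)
We follow the proof of \Cref{thm:large-prop-guarantee-extended} essentially verbatim. The only change is that the candidate threshold $2(\lceil \log i \rceil + 1)$ is replaced by the refined threshold $2g(i)$ from \Cref{cor:large-prop-guarantee}.

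\textbf{Computing the order.} We build $\tau$ greedily, position by position. Let $U$ be the set of unassigned agents. At position $i$ we define $C_i = \{a \in U : \prop_a \ge 2g(i)\}$.
\begin{itemize}
    \item If $C_i \neq \emptyset$, we set $\tau_i$ to an arbitrary member of $C_i$ and remove it from $U$.
    \item Otherwise, we place the remaining agents of $U$ in positions $i, i+1, \dots, n$ in arbitrary order.
\end{itemize}
Since $g$ depends only on $n$ and the position, it can be computed in polynomial time by its defining recursion. Hence so can $\tau$.

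\textbf{Monotonicity of $g$.} The step requiring care is that $g$ is non-decreasing in the position. This is what guarantees that once $C_i=\emptyset$, no remaining agent meets the threshold at any later position, so the order is a ``good'' prefix followed by a ``bad'' suffix. The plan is to prove it by induction on $n$, using the recursion $g(i)=1$ for $i\le\lfloor n/2\rfloor$ and $g(i)=1+g_{n-\lfloor n/2\rfloor}(i-\lfloor n/2\rfloor)$ otherwise, where the recursive call is on the $\lceil n/2\rceil$ remaining agents.
\begin{itemize}
    \item Positions in the first half all have value $1$.
    \item Positions in the second half have values at least $2$.
    \item Within the second half, monotonicity follows from the inductive hypothesis applied to the sub-instance.
\end{itemize}

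\textbf{Good prefix.} For every agent $\tau_i$ in the prefix we have $\prop_{\tau_i}\ge 2g(i)$, and every earlier agent also satisfies its own condition. By \Cref{rem:impact-low-prop}, the inductive invariant of \Cref{lem:large-prop-gurantee-stage} (agent at position $i$ picks $\M{i}$) therefore survives up to position $i$. Then \Cref{cor:large-prop-guarantee}, equivalently the refined analysis via \Cref{rem:robustness-alg-largeprop}, gives agent $\tau_i$ a part of value at least $\prop_{\tau_i}-2g(i)+1 \ge \prop_{\tau_i}-2g(i)-2$.

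\textbf{Bad suffix.} Consider an agent $\tau_j$ placed in the suffix, and let $i\le j$ be the position where $C_i=\emptyset$. Then $\prop_{\tau_j}<2g(i)\le 2g(j)$, again by monotonicity. Hence the target $\prop_{\tau_j}-2g(j)-2$ is negative, and any part, even an empty one, meets it.

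Combining the two cases proves the corollary. The main obstacle is the monotonicity of $g$; everything else is the previous proof with $\lceil\log i\rceil+1$ replaced by $g(i)$.
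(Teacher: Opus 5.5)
Your proposal is correct and follows the paper's intended argument. The paper gives no separate proof, only the remark that one combines the greedy ordering of \Cref{thm:large-prop-guarantee-extended} with \Cref{cor:large-prop-guarantee}; you do exactly that, using the threshold $2g(i)$ and relying on \Cref{rem:impact-low-prop} for the prefix. Your explicit proof that $g$ is non-decreasing, with the recursion taken on the $\lceil n/2\rceil$ remaining agents, supplies a step the paper leaves implicit but which the suffix case needs.
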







\section{When all agents have bounded influence}\label{sec:bounded-influence-guarantee}

In this section we prove \Cref{thm:bounded-influence-guarantee}.
Recall that the influence set of an agent $a$, denoted by $I(a)$, is defined as the set of all agents (including $a$) who value at least one item positively that $a$ also values positively. 
Formally, $I(a) = \{ j \in [n] : \exists g \in [m] \text{ s.t. } v_j(g) > 0 \land v_a(g) > 0 \}$.

Let $D = \max_{a} |I(a)|$ denote the maximum \emph{degree of influence} in the instance.
Clearly, $1 \le D \le n$. 
When $D \ll n$, we can achieve significantly better guarantees (\Cref{thm:bounded-influence-guarantee}).
Towards that, we assume that $v_a([m])/D \ge 2 \lceil \log (2D-1) \rceil + 2$, for each agent $a\in [n]$.
In this section, we consider the case where the influence of every agent is bounded by $D$.

We consider an auxiliary \emph{Modified \rr} procedure which we will use later for our algorithm.
The basic difference from the standard \rr is that here agents are removed from the picking order if they have no items of positive value remaining.

\paragraph{Algorithm Overview.}
We maintain a list of active agents $L$, which is initialized to $(1, \dots, n)$.
A pointer keeps track of the current picking agent in this list.
In each step, the algorithm considers the most preferred item $g$ of the current picking agent $p$ among the unallocated items in $U$.
If agent $p$ values item $g$ strictly positively, then the item is assigned to the part $\M{p}$ and removed from $U$. 
The pointer then advances to the next agent in the list.
If agent $p$ values their favorite remaining item at zero (implying they value \emph{all} remaining items at zero), they are permanently removed from $L$.
Finally, if $L$ is empty and any item remains unallocated, it is assigned arbitrarily.

\ifthenelse{\equal{\algostyle}{algorithm2e}}{
\begin{algorithm}
  \caption{Modified Round-Robin}\label{alg:mod-round-robin}
  \SetKwFunction{FModRoundRobin}{ModifiedRoundRobin}
  \SetKwProg{Fn}{Function}{:}{}
  
  \Fn{\FModRoundRobin{$v_1, \dots, v_n$}}{
      $U \leftarrow [m]$ \tcp*{the set of unallocated items}
      $M \leftarrow (\emptyset, \dots, \emptyset)$ \tcp*{the partition we will construct}
      $L \leftarrow (1, \dots, n)$ \tcp*{List of active agents}
      $\tt{ptr} \leftarrow 1$ \tcp*{Index in list L}
      
      \While{$U \ne \emptyset$ \textbf{and} $L \ne \emptyset$}{
          $p \gets L[\tt{ptr}]$\;
          $g \gets \text{most preferred item of agent } p \text{ in } U$\;
          
          \If{$v_p(g)>0$}{
              $\M{p} \gets \M{p} \cup \{g\}$\;
              $U \gets U \setminus \{g\}$\;
              $\tt{ptr} \leftarrow (\tt{ptr} \mod |L|) + 1$ \tcp*{Move to next agent}
          }
          \Else{
              Remove agent $p$ from $L$ \tcp*{Agent $p$ drops out}
              \If{$L \ne \emptyset$}{
                  \If{$\tt{ptr} > |L|$}{
                      $\tt{ptr} \leftarrow 1$ \tcp*{wrap around if we had removed the last agent}
                  }
              }
          }
      }
      $\M{1} \leftarrow \M{1} \cup U$ \tcp*{Dump leftovers that are of value 0 to everyone}
      \KwRet $M$\;
  }
\end{algorithm}
}{}

\begin{restatable}{lemma}{modrrguarantee}
    \label{lem:mod-rr-guarantee}
    The output of Modified \rr, denoted by $M=(\M{1}, \ldots, \M{n})$, has the following properties:
    \begin{enumerate}
        \item For each agent $a$ who picks $a$-th in the sequence, $v_a(\M{a}) \ge v_a([m])/D - 1$.
        \item No agent envies any part corresponding to agents who picked later in the sequence.
        That is, for each $\{a,b\} \sse [n]$ such that $a < b$, we have $v_a(\M{a}) \ge v_a(\M{b})$.
    \end{enumerate}
\end{restatable}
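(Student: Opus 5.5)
Property 2 is the easier half, and I would handle it first. The argument is the usual round-robin one, adapted to dropouts. Fix $a<b$. While $a$ is active, at every round in which $b$ picks an item, agent $a$ has picked earlier in that round, since the order within $L$ is preserved and $a$ precedes $b$. That earlier pick was a most preferred remaining item, so it is worth at least as much to $a$ as $b$'s pick in the same round. Summing over rounds gives $v_a(\M{a}) \ge v_a$ of $b$'s items picked while $a$ was active. Once $a$ drops out, every remaining item has value $0$ for $a$, so $b$'s later picks add nothing for $a$. The leftover dump into $\M{1}$ consists only of items worth zero to everyone still relevant, in particular to $a$, since $a$ had left $L$ or $L$ is empty. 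The case $b=1$ cannot arise, because $a<b$. This proves property 2.

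For property 1, I would count only the agents that actually compete with $a$ for items that $a$ values. Only agents in $I(a)$ can ever take an item $g$ with $v_a(g)>0$, and $|I(a)|\le D$. I would order the items that $a$ values positively, and consider the moments at which $a$ is active. Between two consecutive picks of $a$, the agents picking in between may include agents outside $I(a)$, but those take only items worth zero to $a$. So at most $|I(a)|-1 \le D-1$ items of positive value to $a$ are removed between consecutive picks of $a$.

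The standard round-robin argument then compares $a$'s $r$-th pick with the block of positive-value items removed since $a$'s previous pick. Concretely, group the positively valued items in removal order into blocks, each consisting of $a$'s pick followed by the at most $D-1$ positive-to-$a$ items taken by others before $a$'s next pick. One special block is needed: before $a$'s first pick, at most $D-1$ such items were taken by agents preceding $a$. Each of $a$'s picks has value at least that of every item in the following block, because the later items were still available when $a$ picked. Therefore $v_a(\M{a}) \ge \frac{1}{D}\bigl(v_a([m]) - v_a(\text{first block})\bigr) \ge v_a([m])/D - 1$, using $v_a(\text{first block}) \le D-1 \le D$ and $v_a(g)\le 1$. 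The precise form follows the classical proof of \Cref{prop:RR-gives-universal-prop} with $n$ replaced by $D$.

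The step I expect to be delicate is showing that $a$ does not drop out while positive-valued items remain, and that the pointer wraparound after removals preserves the cyclic order. An agent is removed only when its favorite remaining item has value zero, so $a$ leaves only after it has exhausted its positive items, and the bound above already accounts for every such item. I would check the pointer update to confirm that between two consecutive picks of $a$, every other active agent picks at most once. This holds because the relative order in $L$ is unchanged and removals only shorten the cycle.
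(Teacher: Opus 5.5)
Your proposal is correct and follows essentially the paper's proof. Property 2 uses the same round-by-round comparison, with $b$'s picks after $a$ drops out contributing zero to $a$. Property 1 uses the same observation that only agents in $I(a)$ can remove items $a$ values positively, so the round-robin counting goes through with $n$ replaced by $D$. Your explicit block decomposition is a more careful version of the paper's informal ``one pick per $D$ removals, except the last round'' accounting: it charges the loss to the at most $D-1$ items removed before $a$'s first pick, and it checks that none of $a$'s positively valued items ends up in the final dump.
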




\begin{proof}
    Fix an agent $a$. Observe that in every round of the algorithm where agent $a$ participates, they pick an item before any agent $j > a$.
    Thus, in every round, agent $a$ picks an item at least as valuable as the item picked by $j$ in that same round (according to $v_a$).
    If agent $a$ drops out of the list, it is because they value all remaining items at 0. 
    Consequently, any items added to $\M{j}$ after $a$ drops out contribute 0 to $v_a(\M{j})$.
    Summing over all rounds, we have $v_a(\M{a}) \ge v_a(\M{j})$.
    
    Let $S_a = \{ g \in [m] : v_a(g) > 0 \}$ be the set of items $a$ values positively.
    Consider the algorithm's execution restricted to items in $S_a$.
    An item $g \in S_a$ is removed from $U$ only if some agent $j$ picks it.
    The algorithm ensures that if $j$ picks $g$, then $v_j(g) > 0$.
    Combined with $v_a(g) > 0$, this implies that $j$ belongs to the influence set $I(a)$.
    
    Since $|I(a)| \le D$, in the worst case, agent $a$ effectively competes against at most $D-1$ other agents for the items in $S_a$.
    Specifically, agent $a$ is guaranteed to pick at least one item for every $D$ items removed from $S_a$, except possibly in the very last round of picks involving items in $S_a$.
    In that final round, agent $a$ might ``miss'' their turn if the items run out, losing value equivalent to at most one item (specifically, at most $\max_{g} v_a(g)$).
    Summing over the rounds, we have:
    \[
    v_a(\M{a}) \ge \frac{1}{D} \sum_{g \in S_a} v_a(g) - \max_g v_a(g) \ge \frac{v_a([m])}{D} - 1.
    \]
\end{proof}

Next, we prove \Cref{thm:bounded-influence-guarantee} by combining the \emph{bounded influence} property with the \emph{rebundling} technique introduced in the previous section, as described in \Cref{alg:bounded-influence}. 
The core observation is that if $j \notin I(i)$, then agent $i$ values the entire part $\M{j}$ as zero.
If the number of agents is large relative to $D$, specifically $n \ge 2D-1$, we can guarantee that there are enough ``zero-value parts'' available to facilitate swaps that devalue any other equally-valued parts that agent $i$ might prefer.
The algorithm consists of three phases:
\begin{itemize}[wide=0pt]
    \item \textbf{Phase 1: Initialization via Modified \rr.}
    We first compute an initial partition $M$ using the \textsc{ModifiedRoundRobin} procedure. 
    This step ensures a good preliminary allocation of items with positive marginal utility while simultaneously identifying agents who receive a part of value zero.
    
    \item \textbf{Phase 2: Iterative Rebundling (first $n-(2D-1)$ agents).}
    The algorithm processes the initial set of agents ($i = 1$ to $n - (2D - 1)$) sequentially. 
    Consider the arrival of agent $i$. The goal is to ensure that their assigned part \M{i} is strictly preferred over all future parts \M{j} (for $j > i$).
    The algorithm classifies the future parts into two sets based on agent $i$'s valuation:
    \begin{itemize}
        \item \texttt{Equals}: Parts that agent $i$ values exactly equal to \M{i} (competitor parts).
        \item \texttt{Zeros}: Parts that agent $i$ values at exactly 0 (donor parts).
    \end{itemize}
    Relying on the property that there are few equal parts ($|\texttt{Equals}| < D$) and sufficiently many zero parts ($|\texttt{Zeros}| \ge D-1$), we perform \emph{swaps} to break ties. 
    Specifically, we construct a modified partition $M'$ for agent $i$ such that the $i$-th part remains \M{i}, but it becomes the unique maximum.
    While any equal part $\M{r} \in \texttt{Equals}$ exists, we identify agent $i$'s most preferred item in \M{r} and swap it with an item from a zero part $\M{z} \in \texttt{Zeros}$.
    This operation strictly decreases the value of the competitor part \M{r} (breaking the tie) without increasing the value of the donor part \M{z} (which was 0 for agent $i$) by too much.
    Once all ties are resolved, agent $i$ is presented with the partition $M'$ and selects \M{i}.

    \item \textbf{Phase 3: Recursive Completion.}
    As the number of remaining agents decreases, the pool of zero-valued parts may become insufficient to support the necessary swapping operations.
    Therefore, for the final $2D - 1$ agents, we collect all items currently assigned to them and invoke the \textsc{BoundedProp} algorithm.

    Crucially, we configure \textsc{BoundedProp} to use the modified \rr (instead of standard \rr) for its initialization and all internal recursive steps. 
    This ensures that the bounded influence properties are preserved.
\end{itemize}

\ifthenelse{\equal{\algostyle}{algorithm2e}}{
\begin{algorithm}
  \caption{Algorithm for the case where all agents have bounded influence}\label{alg:bounded-influence}
  \SetKwFunction{FBoundedInfluence}{BoundedInfluence}
  \SetKwFunction{FLargeProp}{BoundedProp}
  \SetKwProg{Fn}{Function}{:}{}
  
  \Fn{\FBoundedInfluence{$v_1, \dots, v_n, D$}}{
      \tcp{Phase 1: Initialization via Modified \rr}
      $M \leftarrow \textsc{ModifiedRoundRobin}(v_1, \dots, v_n)$\;
      
      \tcp{Phase 2: Iterative Rebundling}
      \For{$i \leftarrow 1$ \KwTo $n - (2D - 1)$}{
          Let $\texttt{Equals} \gets \{ j > i : v_i(\M{j}) = v_i(\M{i}) \}$\;
          Let $\texttt{Zeros} \gets \{ j > i : v_i(\M{j}) = 0 \}$\;
          \tcp{Note: $|\texttt{Equals}| < D$ and $|\texttt{Zeros}| \ge D-1$}

          \While{$\texttt{Equals} \ne \emptyset$}{
              Pick $r \in \texttt{Equals}$ and $z \in \texttt{Zeros}$\;
              Let $g_{\texttt{best}}$ be the top item for agent $i$ in $\M{r}$\;
              Let $g_{\texttt{zero}}$ be the top item in $\M{z}$ (value 0 for $i$)\;
              Swap $g_{\texttt{best}}$ and $g_{\texttt{zero}}$ between $\M{r}$ and $\M{z}$\;
              Update $\texttt{Equals}$ and $\texttt{Zeros}$\;
          }
          Present the unpicked parts in $M$ to agent $i$\;
      }
      
      \tcp{Phase 3: Recursive Completion}
      Let $a_1, \dots, a_{n'}$ be the remaining $n' \le 2D-1$ agents\;
      $U_{\texttt{next}} \leftarrow \bigcup_{j: \M{j} \text{ is unpicked}} \M{j}$\;
      \tcp{Invoke \FLargeProp using Modified \rr for all steps}
      \FLargeProp{$U_{\texttt{next}}, v_1, \dots, v_{n'}$}
  }
\end{algorithm}
}{}

The following lemma completes the proof of \Cref{thm:bounded-influence-guarantee}.
\begin{restatable}{lemma}{boundedinfluenceguarantee}
    \label{lem:bounded-influence-guarantee}
    For each agent $a$, the algorithm described above yields a part of value at least
    \begin{itemize}
        \item $v_a([m])/D - 2\lceil \log(2D-1) \rceil - 1$, if $n \le 2D-1$ or $a$ is among the last $2D-1$ agents; and 
        \item $v_a([m])/D - 1$, if $a$ is among the first $n-(2D-1)$ agents.
    \end{itemize}
\end{restatable}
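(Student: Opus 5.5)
The argument splits along the two phases of \Cref{alg:bounded-influence}. For the first $n-(2D-1)$ agents I will show, by induction on $i$, that when agent $i$ arrives the unpicked parts are exactly $\M{i},\dots,\M{n}$, that $\M{i}$ is untouched by earlier swaps, and that after agent $i$'s tie-breaking swaps $\M{i}$ is the unique favourite. Agent $i$ then picks $\M{i}$, and its value is at least $v_i([m])/D-1$ by \Cref{lem:mod-rr-guarantee}(1).

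\textbf{The $\texttt{Zeros}$/$\texttt{Equals}$ counting.} This is where the bounded influence enters. Every $j\notin I(i)$ has $\M{j}$ valued $0$ by agent $i$. The reason is that Modified \rr only gives $j$ items with $v_j>0$, and the leftover dump into $\M{1}$ consists of items valued $0$ by all agents still active. So at most $|I(i)|-1\le D-1$ of the future parts $j>i$ have positive value for $i$. Since $i\le n-(2D-1)$, there are at least $2D-1$ future parts, hence $|\texttt{Zeros}|\ge D$. Also $|\texttt{Equals}|\le D-1$, assuming $v_i(\M{i})>0$, which follows from $v_i([m])/D\ge 2$.

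Each swap moves $i$'s top item out of an $\texttt{Equals}$ part $\M{r}$ (so strictly lowering $v_i(\M{r})$) and moves an item into a $\texttt{Zeros}$ part $\M{z}$. I will pick the swapped-in item $g_{\texttt{zero}}$ to have $v_i=0$, so $\M{r}$ drops strictly below $\M{i}$. The part $\M{z}$ gains at most $v_i(g_{\texttt{best}})\le v_i(\M{i})$; I must make sure it does not reach $v_i(\M{i})$. To do so I will use a distinct $z$ for each $r$, which the count $|\texttt{Zeros}|\ge D>|\texttt{Equals}|$ permits. Then $v_i(\M{z})$ becomes exactly $v_i(g_{\texttt{best}})$, and this is $<v_i(\M{r})=v_i(\M{i})$ unless $\M{r}$ has only one positive item. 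That degenerate case I rule out because $v_i(\M{i})\ge 2$ while items are worth at most $1$.

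\textbf{Invariants for later agents.} Swaps only alter future parts, so I must check that they do not break anything for later agents. For a later agent $i'$, the only property used is \Cref{lem:mod-rr-guarantee}(2), namely $v_{i'}(\M{i'})\ge v_{i'}(\M{j})$ for $j>i'$, together with $\M{i'}$ being intact. Here I see the main obstacle, because swaps can move an item valuable to $i'$ into some $\M{z}$ with $z>i'$, or even take items out of $\M{i'}$ itself. My first attempt is to restrict the swaps to parts indexed $> n-(2D-1)$, or to charge each swap so that it only exchanges items within the influence structure. If that fails, I will re-run the argument with envy-freeness up to one item for later agents, absorbing the losses into the $-1$ slack.

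\textbf{The last $2D-1$ agents.} Their parts, pooled, form $U_{\texttt{next}}$, and \textsc{BoundedProp} is run with Modified \rr. I will rerun the proof of \Cref{lem:large-prop-gurantee-stage} with \Cref{lem:mod-rr-guarantee} in place of \Cref{prop:RR-gives-universal-prop}. For this I need the base value of each such agent $a$ in the fresh Modified \rr on $U_{\texttt{next}}$ to be at least $v_a([m])/D-1$. I would argue this from the fact that the removed parts were claimed by agents whose competition for $a$'s items is already counted in $I(a)$. With $n'\le 2D-1$ agents, \Cref{clm:agent-stage-largeprop} gives at most $\lceil\log(2D-1)\rceil+1$ stages, each costing two items. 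This yields $v_a([m])/D-2\lceil\log(2D-1)\rceil-1$, and the hypothesis $v_a([m])/D\ge 2\lceil\log(2D-1)\rceil+2$ supplies the two nonzero top items needed for the transfer and rebundling steps, exactly as in \Cref{rem:robustness-alg-largeprop}.
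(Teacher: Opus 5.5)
Your decomposition is the paper's: an induction over the first $n-(2D-1)$ agents driven by the $\texttt{Equals}$/$\texttt{Zeros}$ count from bounded influence, one distinct zero part per tied part so that $v_i(g_{\texttt{best}})\le 1<v_i(\M{i})$, and then \textsc{BoundedProp} run with Modified Round-Robin on the last $2D-1$ agents. The gap is the step you flag yourself and leave open. You treat the tie-breaking swaps as permanent edits of $M$, and under that reading nothing you establish carries over to a later agent $a$:
\begin{itemize}
\item $\M{a}$ can lose a positively valued item if $a$ served as an $\texttt{Equals}$ or $\texttt{Zeros}$ index for an earlier agent.
\item A future part can receive an item that $a$ values, so $v_a(\M{a})\ge v_a(\M{j})$ may fail.
\item Your counting step (``$j\notin I(a)$ implies $v_a(\M{j})=0$ because Modified Round-Robin only gives $j$ items with $v_j>0$'') does not apply to parts that have absorbed swapped items.
\end{itemize}
Neither fallback repairs this. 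Restricting $z$ to the last $2D-1$ indices still strips $g_{\texttt{best}}$ from a tied part $\M{r}$ that may belong to a later Phase~2 agent. The EF1 route fails for two reasons. First, there is no slack: the claimed bound $v_a([m])/D-1$ is already exactly the bound of \Cref{lem:mod-rr-guarantee}. Second, the loop only breaks \emph{ties}. If some future part is strictly better for $a$, agent $a$ picks it, and the invariant that each agent picks its own part collapses for everyone after $a$.

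The missing idea is that the swaps only need to change what agent $i$ is shown. Perform them on a temporary copy $M'$ with $M'[i]=\M{i}$. Since they exchange items only among indices $>i$, once agent $i$ takes $\M{i}$ the leftover items are exactly $\bigcup_{j>i}\M{j}$. The mechanism can then restore the original Modified Round-Robin parts before the next arrival. This is a legitimate rebundling in the dynamic model, and it is what \textsc{BoundedProp} does with its $M'$; the paper's overview of Phase~2 likewise speaks of presenting a modified partition $M'$. Your worry is justified for a literal reading of the pseudocode of \Cref{alg:bounded-influence}, which writes the swaps into $M$. However, the paper's inductive step, that the value of $\M{a}$ has not decreased due to earlier agents, presumes the swaps are undone. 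Under that reading every later agent faces untouched Modified Round-Robin parts, so \Cref{lem:mod-rr-guarantee} and your counting apply verbatim.

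The same fact replaces your vague Phase~3 justification (``competition already counted in $I(a)$''). The set $U_{\texttt{next}}$ is exactly the union of the original parts of the last $2D-1$ agents. Running Modified Round-Robin on that suffix of the picking order, with those agents' items, reproduces the same parts. So each such agent starts from $v_a(\M{a})\ge v_a([m])/D-1$, and your stage count then gives the stated bound.
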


\begin{proof}
    We split the proof into two cases depending on the position of $a$.
    
 \paragraph{Case 1: $n > 2D-1$ and $a$ is among the first $n-(2D-1)$ agents.}
    We proceed via induction on $a$ to show that every such agent $a$ precisely picks the part $\M{a}$.

    For the base case, consider $a=1$.
    From \Cref{lem:mod-rr-guarantee}, we know $v_1(\M{1}) \ge v_1([m])/D - 1$.
    However, there might exist some other parts $\M{j}$ ($j>1$) such that $v_1(\M{j}) = v_1(\M{1})$ (strict preference is impossible, by \Cref{lem:mod-rr-guarantee}).
    Let these parts be the set \texttt{Equals}.
    
    Recall that $\M{j}$ has positive value for agent $1$ \emph{only if} $j \in I(1)$. 
    Since $|I(1)| \le D$, there are at most $D-1$ potential equals among the remaining agents.
    Conversely, any agent $k \notin I(1)$ has a part $\M{k}$ valued at exactly 0 by agent $1$.
    
    Since there are at least $2D-1$ agents remaining after agent $1$, the number of potential ``Zero'' parts is at least:
    \[ (n-1) - |I(1)| \ge (2D-1) - D = D-1. \]
    Since we have at least $D-1$ zero parts, we can perform 1-for-1 swaps to devalue the parts in \texttt{Equals}. 
    We pair every Equal part $\M{r}$ with a Zero part $\M{z}$ and swap agent $1$'s favorite item from $\M{r}$ to $\M{z}$.
    \begin{itemize}
        \item The Equal part strictly decreases in value (losing a positive item, gaining a zero-value item). 
        Thus, $r$ is removed from \texttt{Equals}.
        \item The Zero part increases by the value of the transferred item.
        Thus, $z$ is removed from \texttt{Zeros}.
    \end{itemize}
    By our assumption $v_1([m])/D \ge 2 \lceil \log (2D-1) \rceil + 2 \ge 2$, the original part $\M{1}$ has value $\ge 2$. 
    Thus, the newly augmented Zero part (now having value $\le 1$) remains strictly less preferred than $\M{1}$.
    Since we can repeat this logic in such a way that each Zero part receives at most one item, we enforce that $\M{1}$ becomes the unique favorite for agent $1$.
    The agent receives $\M{1}$, preserving the Modified Round-Robin guarantee of $v_1([m])/D - 1$.

    Consider an agent $a > 1$. 
    Assume all prior agents picked their intended parts.
    Crucially, the value of $\M{a}$ has not decreased due to the actions of prior agents.
    Thus, when agent $a$ arrives, the bounds and logic are identical to the base case, ensuring $a$ picks $\M{a}$ which is of value at least $v_a([m])/D - 1$.

    \paragraph{Case 2: $a$ is among the last $2D-1$ agents (or $n < 2D-1$).}
    In this phase, we execute the \texttt{BoundedProp} algorithm (\Cref{alg:large-prop}), but configured to use modified \rr (\Cref{alg:mod-round-robin}) instead of standard \rr for initialization and recursion.
    Since the input to this call is the set of items $U_{\texttt{next}}$, and since these items were originally distributed using modified \rr (in Phase 1), the initialization step of \texttt{BoundedProp} simply reproduces the existing unpicked parts.
    
    This algorithm executes successfully because the partition satisfies the invariant required by it.
    Specifically, each $\M{i}$ contains at least $2 (\lceil \log (2D-1) \rceil + 1)$ many items of non-zero value for agent $i$ (see \Cref{rem:robustness-alg-largeprop}), since $v_i(\M{i}) \ge v_i([m])/D - 1 \ge 2 (\lceil \log (2D-1) \rceil + 1)$.
    
    The analysis of \Cref{thm:large-prop-guarantee} applies directly to the degradation of value during the recursive stages.
    Let $n' = \min(n, 2D-1)$ be the number of agents in this phase.
    By \Cref{clm:agent-stage-largeprop}, the recursion depth is at most $k = \lceil \log n' \rceil + 1$.
    The algorithm guarantees that at any recursive stage $k$, the value of a part decreases by exactly $2(k-1)$ relative to the starting value (due to the items lost in previous stages).
    Since $2(k-1) \le 2\lceil \log n' \rceil$, the final value is at least:
    \[ 
    \left( \frac{v_a([m])}{D} - 1 \right) - 2\lceil \log n' \rceil \ge \frac{v_a([m])}{D} - 2 \lceil \log(2D-1) \rceil - 1. 
    \]
\end{proof}

\section{When all agents have bounded indifference}
\label{sec:no-large-ties}

In this section, we will prove \Cref{thm:no-large-ties-guarantee}.
Here, we consider the special case of valuations where no $t$ items are of the same value to any agent.
Formally, let $t_a$ be the maximum number of non-zero items that have identical values for agent $a$.
We define the instance-wide tie bound as $t \coloneq \max_{a} t_a$.
Towards that, we assume that the proportional share of each agent is at least $\lceil \frac{t+1}{2} \rceil$.

The algorithm works in two phases.
First, we perform a \textit{Greedy Pre-allocation}: we iterate through agents $1, \dots, n$ and assign each agent a ``core'' set of $k = \lceil \frac{t+1}{2} \rceil$ high-value items from the unallocated set.
Second, we run the standard \rr algorithm on the remaining items to complete the bundles.
The ``head start'' provided by the core items ensures that when agent $i$ arrives (in the order $1, \dots, n$), the part $\M{i}$ is sufficiently distinct from the remaining parts $\M{j}$ ($j > i$) to be the unique favorite. 

    The goal is to construct a partition such that when agent $i$ arrives, it finds $\M{i}$ the uniquely best part.
    Towards this, we give every agent a "head start" of high-value items before running a standard \rr on the rest.
    The procedure consists of three phases:
    \begin{itemize}
    \item \textbf{Phase 1: Greedy Pre-allocation.}
    We define a threshold $k = \lceil \frac{t+1}{2} \rceil$ and iterate through the agents from $1$ to $n$.
    In each step, we set aside agent $i$'s top $k$ most preferred items from the set of currently unallocated items $U$. 
    These form the ``core'' of agent $i$'s future part. 
    
    \item \textbf{Phase 2: \rr.}
    We distribute the remaining items in $U$ among the agents using the standard \rr algorithm.
    
    \item \textbf{Phase 3: Agents start arriving.}
    The final partition $M$ is constructed by merging the two partial allocations: for each agent $i$, the part $\M{i}$ consists of their pre-allocated core items plus their share from the \rr phase.
    
    Finally, agents arrive in the order $1, \dots, n$. 
    At each step, agent $i$ is presented with the remaining unpicked parts in $M$ and selects their favorite. 
    The pre-allocation ensures that $\M{i}$ is sufficiently distinct to be the unique favorite.
    \end{itemize}
    
    \ifthenelse{\equal{\algostyle}{algorithm2e}}{
    \begin{algorithm}
      \caption{Algorithm for the case where agents have bounded indifference}\label{alg:no-large-ties}
      \SetKwFunction{FNoLargeTies}{Bounded-Indifference}
      \SetKwProg{Fn}{Function}{:}{}
      
      \Fn{\FNoLargeTies{$[m], v_1, \dots, v_n$}}{
          \tcp{Phase 1: Greedy Pre-allocation}
          $M_1 \gets (\emptyset, \dots, \emptyset)$ \tcp*{Pre-allocation partition}
          $U \gets [m]$ \tcp*{Set of unallocated items}
          $k \gets \lceil \frac{t+1}{2} \rceil$\;
    
          \For{$i \leftarrow 1$ \KwTo $n$}{
              $G_i \leftarrow \text{top } k \text{ items for agent } i \text{ in } U$\;
              $M_1[i] \leftarrow M_1[i] \cup G_i$\;
              $U \gets U \setminus G_i$\;
          }
    
          \tcp{Phase 2: \rr}
          $R \gets \text{\rr}(U, v_1, \dots, v_n)$\;
          Construct partition $M$ where $\M{i} \leftarrow M_1[i] \cup R[i]$ for each $i \in [n]$\;
          
          \tcp{Phase 3: Agents start arriving}
          \For{$i \leftarrow 1$ \KwTo $n$}{
              Present the unpicked parts in $M$ to agent $i$\;
          }
      }
    \end{algorithm}
    }{}
    
    \begin{figure}[h]
        \centering
        \includegraphics[width=0.5\linewidth]{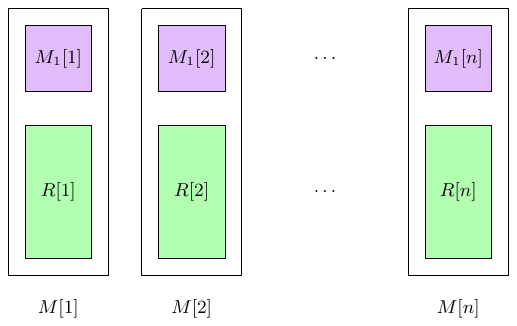}
        \caption{Structure of the partition created in \Cref{alg:no-large-ties}. Each part $\M{i}$ is composed of a core $M_1[i]$ (Phase 1) and a \rr part $R_i$ (Phase 2). The core ensures strict preference over subsequent parts.}
        \label{fig:no-large-ties}
    \end{figure}
    
    \begin{lemma}
        \label{lem:no-large-tie-picks}
        For each $i \in [n]$, agent $i$ picks the part $\M{i}$, which has value at least $\prop_i - \lceil \frac{t+3}{2} \rceil$.
    \end{lemma}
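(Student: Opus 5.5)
The plan is to show that for every $i$ and every $j>i$ we have $v_i(\M{i}) > v_i(\M{j})$. Given this, induction on $i$ finishes the selection part. If agents $1,\dots,i-1$ have picked $\M{1},\dots,\M{i-1}$, the unpicked parts are exactly $\M{i},\dots,\M{n}$, and agent $i$ strictly prefers $\M{i}$, so a rational agent picks it. The value bound is then proved separately, by comparing $\M{i}$ with the parts already picked by earlier agents.

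\textbf{Step 1: $v_i(R[i]) \ge v_i(R[j])$ for $j>i$.} This follows from \Cref{prop:RR-gives-universal-prop}(2) applied to the \rr run of Phase 2, whose picking order is $1,\dots,n$. So it suffices to show $v_i(G_i) > v_i(G_j)$ for every $j>i$.

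\textbf{Step 2: the core comparison.} When $G_i$ is formed, every item of $G_j$ with $j>i$ is still in $U$. Hence each such item has $v_i$-value at most $w$, the smallest value in $G_i$, and so $v_i(G_j) \le k w \le v_i(G_i)$. Suppose equality holds. Then every item of $G_i \cup G_j$ has value exactly $w$, which gives $2k \ge t+1$ items of common value $w$. If $w>0$, this contradicts the tie bound $t$. The case $w=0$ is the main obstacle, and I would rule it out as follows.

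\textbf{Step 2 (the case $w=0$).} In this case agent $i$ values every item still unallocated at the start of its Phase 1 step at zero. Then $\M{i}$ and every $\M{j}$ with $j \ge i$ have value $0$ for $i$, including their \rr parts. Each earlier part $\M{j}$ with $j<i$ can only contain positive value for $i$ inside $G_j$, so $v_i(\M{j}) \le k$. Summing over all parts gives $n\,\prop_i \le (i-1)k < nk$, that is, $\prop_i < k \le \lceil \frac{t+3}{2} \rceil$. This contradicts the hypothesis, so $w>0$ and the inequality in Step 2 is strict. Combining Steps 1 and 2, $\M{i}$ is agent $i$'s unique favourite among the unpicked parts.

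\textbf{Step 3: the value bound.} For $j>i$ we already have $v_i(\M{j}) \le v_i(\M{i})$. For $j<i$, I will use the standard \rr fact that a later picker envies an earlier picker's \rr share by at most one item. In each round, $j$'s pick is worth at most $i$'s previous-round pick; adding $j$'s first pick gives $v_i(R[j]) \le v_i(R[i]) + 1$. Together with $v_i(G_j) \le k$ (all values are in $[0,1]$), this gives $v_i(\M{j}) \le v_i(\M{i}) + k + 1$. Summing over all $n$ parts,
\[
n\,\prop_i \le n\, v_i(\M{i}) + (i-1)(k+1).
\]
Dividing by $n$ yields $v_i(\M{i}) \ge \prop_i - (k+1) = \prop_i - \lceil \frac{t+3}{2} \rceil$, as claimed.
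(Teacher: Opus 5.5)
Your proof is correct. Its selection part coincides with the paper's: envy-freeness of Round-Robin toward later parts, plus a strict core comparison via the tie bound $2k \ge t+1$. Two ingredients are handled differently.

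\textbf{Positivity of the core.} The paper argues directly. From $\prop_i \ge k$ there are at least $nk$ items that agent $i$ values positively. At most $(n-1)k$ of these go to other agents' cores, so $G_i$ consists of positive items and $w>0$ unconditionally. You only exclude equality with $w=0$, by a global value count. Note that it is the equality assumption, which forces $v_i(G_i)=0$, that makes every item of $U$ worthless to agent $i$. With $w=0$ alone, $G_i$ may still contain positively valued items. Ruling out equality is all that strictness needs, though, so the step is sound as used.

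\textbf{The value bound.} The paper applies the proportionality guarantee of Round-Robin to the residual set, $v_i(R[i]) \ge v_i(U)/n - 1$, and bounds the average core value by $k$. You instead compare $\M{i}$ with every other part and sum: weak domination over later parts, and the EF1-type bound $v_i(\M{j}) \le v_i(\M{i}) + k + 1$ for earlier parts. Your route gives the position-dependent bound $\prop_i - \frac{i-1}{n}(k+1)$. This implies the stated bound and is noticeably stronger for early agents; for instance, agent $1$ gets at least $\prop_1$. The paper's route keeps the value bound independent of the pairwise comparisons and uses only the standard proportionality property of Round-Robin.
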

    \begin{proof}
        Let $k = \lceil \frac{t+1}{2} \rceil$.
        First, we show that our assumption on the proportional shares guarantees that every agent can form a core bundle of $k$ items with positive value.
        From the assumption $\prop_i \ge \lceil \frac{t+1}{2} \rceil = k$, and the fact that item values are at most $1$, we have $v_i([m]) \ge nk$.
        Since the maximum value of any single item is $1$, agent $i$ must value at least $nk$ distinct items strictly positively.
        Even in the worst case where all other $n-1$ agents pick items that agent $i$ values positively, at least $nk - (n-1)k = k$ positive items remain available for agent $i$ in the pre-allocation phase.
        Thus, $M_1[i]$ consists entirely of items with strictly positive value for agent $i$.
        
        Next, we establish that agent $i$ strictly prefers their pre-allocated bundle $M_1[i]$ over any bundle $M_1[j]$ created for a subsequent agent $j > i$.
    
        \begin{restatable}{claim}{strictlybetterpart}
            \label{clm:strictly-better-part}
            For any $j > i$, $v_i(M_1[i]) > v_i(M_1[j])$. 
            \Ma{Consequently, $v_i(\M{i})\geq \prop_i - \ceil{ \frac{t+3}{2} }$.
            }
        \end{restatable}
        \begin{proof}
            We note that agent $i$ picks the top $k$ items available in $U$ to form $M_1[i]$. 
            Agent $j$'s bundle $M_1[j]$ is formed from a subset of the remaining items $U \setminus M_1[i]$.
            Since agent $i$ picked greedily, every item in $M_1[i]$ is at least as valuable as every item in $M_1[j]$.
            Thus, $v_i(M_1[i]) \ge v_i(M_1[j])$.
            
            To show strict inequality, assume for contradiction that $v_i(M_1[i]) = v_i(M_1[j])$.
            Since every item in $M_1[i]$ is at least as valuable as every item in $M_1[j]$, the equality implies that all items in $M_1[i]$ and $M_1[j]$ must have the exact same value for agent $i$.
            This would imply a tie of size $|M_1[i]| + |M_1[j]| = 2k$.
            However, $2k = 2\lceil \frac{t+1}{2} \rceil \ge t+1$.
            This contradicts the definition of $t$ (maximum tie size).
            Thus, $v_i(M_1[i]) > v_i(M_1[j])$.

            Next, we consider the allocation of the remaining items.
            By the guarantee of \rr (\Cref{prop:RR-gives-universal-prop}), we know that $v_i(R[i]) \ge v_i(R[j])$ for any $j$.
            Combining this with the previously obtained inequality, we get:
            \[ v_i(\M{i}) = v_i(M_1[i]) + v_i(R[i]) > v_i(M_1[j]) + v_i(R[j]) = v_i(\M{j}). \]
            Therefore, via induction on the agent order, every agent $i$ will find $\M{i}$ to be the uniquely most valuable part among the unpicked parts and will select it.
    
            Finally, we bound the value obtained.
            By the guarantee of \rr (\Cref{prop:RR-gives-universal-prop}), $v_i(R[i]) \ge v_i(U)/n-1$.
            The total value of all pre-allocated bundles is $v_i([m]) - v_i(U)$.
            Since each bundle in $M_1$ has size $k$, and item values are at most 1, the average value of these bundles is at most $k$.
            Specifically, $\frac{1}{n} \sum_{j=1}^n v_i(M_1[j]) \le k = \lceil \frac{t+1}{2} \rceil$.
            Thus, we have 
            \begin{align*}
                v_i(\M{i}) &= v_i(M_1[i]) + v_i(R_i) \\
                &\ge v_i(M_1[i]) + \frac{v_i(U)}{n} - 1 \\
                &= v_i(M_1[i]) + \frac{v_i([m])}{n} - \frac{v_i(M_1[1] \cup \dots \cup M_1[n])}{n} - 1 \\
                &\ge v_i(M_1[i]) + \frac{v_i([m])}{n} - \left\lceil \frac{t+1}{2} \right\rceil - 1\\
                &\ge \prop_i - \left\lceil \frac{t+3}{2} \right\rceil.
            \end{align*}
        \end{proof}
    Combining \Cref{lem:no-large-tie-picks,clm:strictly-better-part}, we have proved \Cref{thm:no-large-ties-guarantee}. 
    \end{proof}
        
    \begin{remark}
        \label{rem:no-large-ties-discussion}
        We note the following analogues to the discussion in \Cref{sec:large-prop}.
        \begin{enumerate}
            \item For an arbitrary fixed ordering, the algorithm guarantees success for all agents $1, \dots, i-1$, where $i$ is the first agent who violates the proportionality requirement.
            \item The requirement that $\prop_i$ is large is used primarily to ensure that agent $i$ can find $k = \lceil \frac{t+1}{2} \rceil$ items of positive value in Phase 1 after the previous $i-1$ agents have picked.
            A strictly weaker sufficient condition is that agent $i$ (who arrives $i$-th in the sequence) simply values at least $i \cdot k$ items in the instance strictly positively.
            This ensures that even if the preceding $i-1$ agents pick $k$ items each, agent $i$ still finds $k$ positive items to form their core bundle.
        \end{enumerate}
    \end{remark}
    
    \subsection{Guarantees for general instances}
    
    \Cref{thm:no-large-ties-guarantee} assumed a fixed arrival order in which agents have proportional share at least $\lceil \frac{t+1}{2} \rceil$. 
    In the same vein as \Cref{thm:large-prop-guarantee-extended}, we now show that for any instance, we can compute an arrival order that generalizes this result, ensuring the bound applies to all agents regardless of their proportional share.
    In \Cref{thm:large-prop-guarantee-extended}, generalizing the result to arbitrary instances required a complex procedure involving candidate sets. 
    Here, however, the structure of the algorithm allows for a much simpler strategy: sorting agents by their proportional share is sufficient to extend the guarantee to all. 
    
    \begin{theorem}
        \label{thm:no-large-ties-guarantee-extended}
        For any fair division instance, there exists an arrival order $\tau$ of the agents and an algorithm that guarantees each agent $\tau_i$ a part of value at least:
        \[ 
            \prop_{\tau_i} - \left\lceil \frac{t+3}{2} \right\rceil.
        \]
        Moreover, the computation of $\tau$ and the execution of the algorithm can be done in polynomial time.
    \end{theorem}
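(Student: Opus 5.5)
Sort the agents in non-increasing order of proportional share, breaking ties arbitrarily, to obtain $\tau$, and run \Cref{alg:no-large-ties} with arrival order $\tau$. Sorting takes polynomial time, and \Cref{alg:no-large-ties} consists of a greedy pre-allocation followed by \rr, so it is also polynomial. Let $k = \lceil \frac{t+1}{2} \rceil$. Because the shares are sorted, the set of agents with $\prop \ge k$ is a prefix $\tau_1, \dots, \tau_p$ of $\tau$, and every agent in the suffix $\tau_{p+1}, \dots, \tau_n$ has $\prop < k$.

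For the prefix agents, the argument repeats the proof of \Cref{lem:no-large-tie-picks}, run by induction on the position $i \le p$. The only place the share assumption was used there is to show that $\tau_i$ still finds $k$ positively valued items in Phase~1. This holds because $\prop_{\tau_i} \ge k$ and item values are at most $1$, so $\tau_i$ values at least $nk$ items positively, while the preceding $i-1 \le n-1$ agents remove at most $(n-1)k$ items. Since this count does not depend on what the earlier agents valued, the suffix agents cannot interfere with the prefix agents. In fact, the suffix agents pick only after all prefix agents, so their Phase~1 behaviour never affects the core of a prefix agent.

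The remaining part of the prefix argument is unchanged. \Cref{clm:strictly-better-part} gives $v_{\tau_i}(M_1[\tau_i]) > v_{\tau_i}(M_1[\tau_j])$ for $j > i$, using only the tie bound $t$ and the fact that the core consists of $k$ positive items. It holds even when the later agents' core sets are smaller or zero-valued, since greedy choice still dominates them. Combined with the \rr envy-freeness towards later parts from \Cref{prop:RR-gives-universal-prop}, this makes $\M{\tau_i}$ the unique favourite among the unpicked parts. So by induction each prefix agent picks its own part, and the value bound $\prop_{\tau_i} - \lceil \frac{t+3}{2} \rceil$ follows from the same averaging computation as before.

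For a suffix agent, $\prop_{\tau_i} < k \le \lceil \frac{t+3}{2} \rceil$, so the target bound $\prop_{\tau_i} - \lceil\frac{t+3}{2}\rceil$ is negative and any part, even an empty one, satisfies it. There is one technical point: in Phase~1 a suffix agent may find fewer than $k$ items remaining. In that case we let it take whatever items remain, which is harmless since no guarantee is needed for it.

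The main obstacle is checking that the earlier proof's induction really depends only on properties of the agent currently picking, so that low-share agents placed later cannot break the invariants of earlier agents. This is where the sorted order is essential, and the argument amounts to \Cref{rem:no-large-ties-discussion}(1) applied to this particular ordering.
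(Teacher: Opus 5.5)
Your proposal is correct and follows the paper's proof. Both sort the agents by non-increasing proportional share and run \Cref{alg:no-large-ties} in that order; both apply the argument of \Cref{lem:no-large-tie-picks} (via \Cref{rem:no-large-ties-discussion}) to the prefix of agents with share at least $\lceil \frac{t+1}{2} \rceil$, and both note that the target bound is negative for the remaining suffix. Your explicit check that \Cref{clm:strictly-better-part} still holds when a later agent's core has fewer than $k$ items is a detail the paper leaves implicit, but it does not change the route.
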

    
    \begin{proof}
        Let $\tau$ be the order of agents obtained by sorting them in decreasing order of their proportional shares, i.e., $v_{\tau_1}([m]) \ge v_{\tau_2}([m]) \ge \dots \ge v_{\tau_n}([m])$.
    
        Recall that $k=\lceil \frac{t+1}{2} \rceil$ is the threshold used in \Cref{thm:no-large-ties-guarantee}.
        If all agents have proportional shares at least $k$, then by \Cref{thm:no-large-ties-guarantee} we are done.
        Otherwise, we identify the first agent, say $\tau_b$, whose proportional share falls below the threshold. 
        That is, $\prop_{\tau_i} \ge k$ for all $i < b$, and $\prop_{\tau_i} < k$ for all $i \ge b$.
    
        By \Cref{rem:no-large-ties-discussion}, \Cref{alg:no-large-ties} executes successfully for all agents $\tau_i$ with $i \le b$, guaranteeing each such agent $\tau_i$ a part of value at least $\prop_{\tau_i} - \lceil \frac{t+3}{2} \rceil$.
    
        For any remaining agent $i \ge b$, we have $\prop_{\tau_i} \le \prop_{\tau_b} < k < \lceil \frac{t+3}{2} \rceil$.
        Consequently, the guarantee $\prop_{\tau_i} - \lceil \frac{t+3}{2} \rceil$ is strictly negative.
        Since valuations are non-negative, any part received by these agents trivially satisfies the guarantee.
    \end{proof}

\section{Ordered additive valuations and relaxations}
\label{sec:ordered-additive-relaxations}

In this section we will prove \Cref{thm:ordered-additive-relaxation-guarantees}.
We begin our discussion with ordered additive instances and then move to the relaxations.
For the latter we split the results based on the structure such as: when swaps form a linearly-separable family, a laminar family, or are adjacent.
We conclude our discussions about adjacent swaps with some observations pertaining to improved guarantees when the valuation functions are $\delta$-Lipschitz.

We begin by noting that if an agent $a$ has ties in their valuation (i.e., $v_a(g_1) = v_a(g_2)$ for distinct items $g_1, g_2$), there are multiple permutations that derive $v_a$ (as defined in \Cref{subsec:contribution}).
However, it is computationally easy to find \emph{a} permutation $\sigma_a$ that derives $v_a$ simply by sorting the items based on value, breaking ties arbitrarily.
Thus, throughout this paper, we assume that we are given specific permutations $\sigma_1, \dots, \sigma_n$ that derive the agent valuations.

We study instances where there is a master list $\pi$ that is ``close'' to each of $\sigma_1, \dots, \sigma_n$, and we obtain fair partitions whose quality depends on this distance.
However, we note that a different choice of tie-breaking for agents' permutations might result in a master list that is even closer to the set of permutations.
We leave the optimization problem of finding the specific tie-breaking permutations that minimize the distance to a master list for future work.

Given an $m$-length permutation $\sigma$ and two positions $i,j \in [m]$ with $i < j$, a \emph{swap} operation $(i,j)$ transforms $\sigma$ into a new permutation $\sigma'$ by exchanging the items at positions $i$ and $j$, leaving all other positions unchanged.
If $j=i+1$, the swap is called \emph{adjacent}.
See \Cref{fig:types-of-swaps} for an illustration.
We analyze instances where each agent's permutation is at most a certain distance $d$ (measured in the number of swaps) from the master list $\pi$.

Throughout this section, we assume without loss of generality that the master list is the identity permutation, i.e., $\pi = \langle 1, \dots, m \rangle$.
This is justified because the items can always be relabeled to match the order of the master list.

In our proofs, we will repeatedly use a valuation function which is derived from the master list $\pi$, defined as follows.

\begin{definition}[Projected valuation]
\label{def:proj-val}
Let the master list be $\pi = \langle 1, \dots, m \rangle$.
For any agent $a$, let $x_1 \ge x_2 \ge \dots \ge x_m$ be the set of values $\{v_a(g) : g \in [m]\}$ sorted in non-increasing order.
We define the \emph{projected valuation} $\hat{v}_a$ such that $\hat{v}_a(i) = x_i$ for all $i \in [m]$.
Note that by definition, $\hat{v}_a$ is derived from the master list $\pi$.
\end{definition}

\paragraph{Related work}
The problem of computing a master list that is at distance at most $d$ from each agent's permutation has been studied.
Popov~\cite{DBLP:journals/tcs/Popov07} showed that this problem is NP-hard for both arbitrary and adjacent swap distances.
Cunha, Sau, and Souza~\cite{DBLP:journals/almob/CunhaSS24} focused on arbitrary swap distance and showed that it is fixed-parameter-tractable parameterized by $d$.
We note that this result can be extended to the case of adjacent swap distance as well.
Schlotter~\cite{DBLP:journals/tcs/Schlotter24} studied the related problem of determining if a preference system is close to admitting a master list.
Specifically, they studied the question of whether there are $d$ adjacent swaps that can be made in total so that all the agents agree on a master list. 
We note that their model permits agents to have weak partial orders, and defines the notion of distance accordingly.

\subsection{Ordered additive instances}
First, we show that when the instance is strictly ordered additive (i.e., when $d=0$), we can find a solution,\hide{ where all \parts have a value of at least $\prop_a - 1$ for each agent $a$,} using the standard \rr protocol.

\begin{restatable}{lemma}{orderedadditive}
\label{lem:ordered-additive}
In an ordered-additive instance, there is a partition $M=(\M{1}, \dots, \M{n})$ such that $v_a(\M{j}) \ge \prop_a - 1$ for each agent $a$ and each \part $\M{j}$. 
Moreover, such a partition can be computed in polynomial time.    
\end{restatable}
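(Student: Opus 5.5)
The plan is to run \rr (\Cref{alg:round-robin}) along the master list itself. Since the instance is ordered additive, every agent's preference is derived from $\pi=\langle 1,\dots,m\rangle$, so we can fix the picks rather than let agents choose. In \rr with a common ordering, we may assume (breaking ties consistently with $\pi$) that in round $r$ the picking agent takes the smallest-indexed remaining item. Concretely, the partition is $\M{j}=\{g\in[m]: g\equiv j \pmod n\}$. In other words, item $(r-1)n+j$ is placed in row $r$, column $j$ of the \rr matrix. This is a single partition that does not depend on any individual agent, and it is clearly computable in polynomial time.

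Next I fix an agent $a$ and a part $\M{j}$ and bound $v_a(\M{j})$ from below. The key step is a shifting comparison. For each row $r\ge 1$, the item $(r-1)n+j$ in $\M{j}$ precedes, in $\pi$, every item of row $r+1$. Since $v_a$ is non-increasing along $\pi$, this gives $v_a((r-1)n+j)\ge v_a(g)$ for every $g$ in row $r+1$, and hence
\[
v_a((r-1)n+j)\;\ge\;\frac{1}{n}\,v_a(\text{row } r+1),
\]
where a partially filled last row is handled by using at most $n$ items and bounding its average by the row's total divided by $n$. Summing over all rows $r$ such that row $r+1$ exists gives
\[
v_a(\M{j})\;\ge\;\frac{1}{n}\bigl(v_a([m])-v_a(\text{row }1)\bigr)\;\ge\;\frac{v_a([m])}{n}-1,
\]
because row $1$ has at most $n$ items, each of value at most $1$, so $v_a(\text{row }1)\le n$. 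This is exactly $\prop_a-1$.

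The main point needing care is the incomplete last row. If the last row $R$ contains fewer than $j$ items, then $\M{j}$ has no item in that row. This causes no problem, since we only need an item of $\M{j}$ in row $r$ whenever row $r+1$ is nonempty, and every row before the last one is full. If $\M{j}$ does have an item in row $R$, that item is simply extra nonnegative value. A second point is that ``every item of row $r$ precedes every item of row $r+1$'' must hold under $v_a$ with ties. This holds because $v_a$ is derived from $\pi$, so $g<g'$ implies $v_a(g)\ge v_a(g')$. With these two checks the inequality holds for every agent $a$ and every part $j$ simultaneously, which yields the universal guarantee.
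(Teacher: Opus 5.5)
Your proof is correct. It uses the same partition as the paper, namely Round-Robin run along the master list, but justifies the bound differently.

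The paper argues by reduction. Since all valuations are derived from $\pi$, the Round-Robin output does not depend on the agents' actual values. So for a fixed agent $a$ it coincides with Round-Robin run on $n$ copies of $v_a$. The paper then invokes both parts of \Cref{prop:RR-gives-universal-prop} as a black box:
\begin{itemize}
\item the last picker's part satisfies $v_a(\M{n}) \ge \prop_a - 1$;
\item no agent envies a later part, so $v_a(\M{j}) \ge v_a(\M{n})$ for every $j$.
\end{itemize}

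You instead write the partition down explicitly as $\M{j}=\{g : g\equiv j \pmod n\}$. You then bound every column directly, by comparing the row-$r$ item of $\M{j}$ with the average of row $r+1$ and telescoping. This is essentially the standard proof of the proportionality-up-to-one-item property of Round-Robin, inlined and applied uniformly to all columns, so you never need the no-envy property.

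Your version is self-contained and makes the tie-breaking explicit. The paper's claim that the most preferred item is identical for every agent at every step tacitly requires breaking ties according to $\pi$; you settle this by fixing $\M{j}$ outright. The paper's version is shorter, given the cited proposition.
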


    
\begin{proof}
    Let $\pi \in S_m$ be the master list from which all agents' preferences are derived.
    Let $M=(\M{1}, \ldots, \M{n})$ be the partition output by \rr on the $n$ agents.
    Since all valuations are derived from the same list $\pi$, the most preferred item is identical for every agent at every step of the algorithm.
    Consequently, the resulting partition $M$ depends only on the master list $\pi$ and is independent of the specific values given by the agents.
    Thus, it follows that the same partition $M$ would have emerged even if the agent valuations were identical; specifically, even if \rr was executed using the valuation function $v_a$ for any specific agent $a$. 

    Fix an arbitrary agent $a$.
    The partition $M$ obtained is identical to the partition that would result if \emph{all} $n$ agents had the valuation function $v_a$.
    To prove $v_a(\M{j}) \geq \prop_a-1$ for each agent $a$ and each \part $\M{j}$, it is sufficient to show that $v_a(\M{n}) \geq \prop_a-1$. 
    This is because, from \Cref{prop:RR-gives-universal-prop}, we know that for any $j \leq n$, $v_a(\M{j}) \geq v_a(\M{n})$. 

    Recall from \Cref{prop:RR-gives-universal-prop} that \rr produces a partition in which the $n$-th \part obtained using the valuation function $v_a$ satisfies $v_a(\M{n}) \geq \prop_a -1$.
    We conclude by noting that the partition $M$ is computable in polynomial time, and the property $v_a(\M{j}) \ge v_a(\M{n})$, for each $j \in [n]$, gives the desired guarantee.
\end{proof}

We now turn to relaxations where the valuations deviate from the master list.
We distinguish between swap types—arbitrary or adjacent—and their structural forms, such as linearly-separable or laminar families, to obtain fine-grained guarantees.

\subsection{Arbitrary swaps}
We now consider the case where an agent's true valuation is derived from a permutation obtained by applying at most $k$ arbitrary swaps to the master list.

\begin{restatable}{lemma}{arbitraryswaps}
\label{lem:arbitrary-swaps}
    In an instance where each agent $a$'s permutation is at most $k_a$ arbitrary swaps away from the master list $\pi$, there exists a partition $M=(\M{1}, \dots, \M{n})$ such that $v_a(\M{j}) \ge \prop_a - 1 - k_a$, for each agent $a$ and \part $\M{j}$. 
    Moreover, such a partition can be computed in polynomial time.
\end{restatable}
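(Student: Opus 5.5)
We take $M$ to be the output of \rr run along the master list $\pi=\langle 1,\dots,m\rangle$. Every agent picks the smallest-indexed remaining item, so $M$ does not depend on the valuations: item $g$ goes to part $((g-1)\bmod n)+1$. Thus $M$ is computable in polynomial time, and it is the same partition used in \Cref{lem:ordered-additive}. We then compare each agent's true valuation $v_a$ with its projected valuation $\hat v_a$ from \Cref{def:proj-val}.

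\textbf{Step 1 (the projected valuation is universally good).} The valuation $\hat v_a$ is derived from $\pi$ and has the same multiset of item values as $v_a$, so $\hat v_a([m])=v_a([m])$. Apply the argument of \Cref{lem:ordered-additive} to $\hat v_a$: the partition $M$ coincides with the \rr output when all agents have valuation $\hat v_a$. Hence, by \Cref{prop:RR-gives-universal-prop}, every part satisfies
\[
\hat v_a(\M{j})\ge \hat v_a(\M{n})\ge \prop_a-1 \quad \text{for all } j\in[n].
\]

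\textbf{Step 2 (each swap costs at most one unit).} Write $\sigma_a=\pi\circ\tau_1\circ\cdots\circ\tau_{k_a}$ as a product of $k_a$ transpositions. The valuation $v_a$ is obtained from $\hat v_a$ by relabelling: the item at position $p$ of $\sigma_a$ receives the value $x_p$. Equivalently, $v_a=\hat v_a\circ\rho$ for a permutation $\rho$ of $[m]$ that is a product of at most $k_a$ transpositions. We bound $|v_a(S)-\hat v_a(S)|$ for a fixed set $S=\M{j}$.

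Track the sequence $\hat v_a=w_0,w_1,\dots,w_{k_a}=v_a$, where each $w_{t}$ is obtained from $w_{t-1}$ by exchanging the values of two items $g,h$. If both $g,h$ lie in $S$, or both lie outside $S$, then $w_t(S)=w_{t-1}(S)$. Otherwise exactly one of them lies in $S$, and $w_t(S)$ changes by $w_{t-1}(h)-w_{t-1}(g)$. Every intermediate value lies in $[0,1]$, so this change is at least $-1$. Summing over the $k_a$ swaps gives
\[
v_a(\M{j})\ge \hat v_a(\M{j})-k_a\ge \prop_a-1-k_a.
\]

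\textbf{Expected obstacle.} The main step needing care is Step 2: making precise that ``$\sigma_a$ is $k_a$ swaps from $\pi$'' translates into $v_a$ being $\hat v_a$ composed with at most $k_a$ transpositions of items. This is where ties enter, and it is handled because we fixed $\sigma_a$ deriving $v_a$ and defined $\hat v_a$ position-wise. Once this is in place, the per-swap loss bound of one unit is immediate from $\max_g v_a(g)\le 1$.
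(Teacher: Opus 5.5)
Your proposal is correct and follows the paper's route: run Round-Robin along the master list (equivalently on the projected valuations $\hat v_a$), use the argument of \Cref{lem:ordered-additive} to get $\hat v_a(\M{j}) \ge \prop_a - 1$, and then charge at most one unit of value per swap. Your telescoping through the intermediate valuations $w_0,\dots,w_{k_a}$ justifies that last step more cleanly than the paper does: the paper asserts that at most $k_a$ items of $\M{j}$ lose value, which is true by a cycle-counting argument but does not follow directly from ``each swap affects only two items'' (that reasoning alone would give $2k_a$).
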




\begin{proof}
    Let $\hat{v}_a$ be the \emph{projected valuation} of agent $a$ (derived strictly from $\pi$, as in \Cref{def:proj-val}).
    We run \rr using the projected valuations $\{\hat{v}_1, \dots, \hat{v}_n\}$ to obtain a partition $M$.
    Due to \Cref{lem:ordered-additive}, we have 
    $\hat{v}_a(\M{j}) \ge \prop_a - 1$ for each agent $a$ and \part $\M{j}$. 

    For each item $g$, consider the change in value of good $g$ for agent $a$, defined as $\change(g, \hat{v}_a, v_a) \coloneq \hat{v}_a(g) - v_a(g)$. 
    We note that a positive value indicates a fall in value for good $g$ when moving from the projected valuation $\hat{v}_a$ to the true valuation $v_a$. 

    The value of $\M{j}$ is given by:
    \[
        v_a(\M{j}) = \sum_{g \in \M{j}} v_a(g) = \sum_{g \in \M{j}} \left( \hat{v}_a(g) - \change(g, \hat{v}_a, v_a) \right) = \hat{v}_a(\M{j}) - \sum_{g \in \M{j}} \change(g, \hat{v}_a, v_a). 
    \]
    An arbitrary swap of items $g_1, g_2$ affects only the values of $g_1$ and $g_2$.
    Since the agent's permutation $\sigma_a$ is at most $k_a$ swaps away from $\pi$, there are at most $k_a$ items in $\M{j}$ that have a positive change in value (each bounded by 1 since item values are in $[0,1]$).
    Hence, we conclude that $v_a(\M{j}) \ge \prop_a - 1 - k_a$.
    The valuations $\{\hat{v}_1, \dots, \hat{v}_n\}$ and the partition $M$ can be computed in polynomial time, and thus we have proved the lemma.
\end{proof}

\subsection{Linearly separable swaps}
\label{subsec:linearly-separable-swaps}

We say that a set of swaps $S$ is \emph{linearly separable} if for any two swaps $(s_1, s_2), (s_3,s_4) \in S$ with $s_1 \le s_3$, it implies that $s_2 \le s_3$ and $s_1 < s_4$ (i.e., the intervals defined by the swaps are internally disjoint). 
See \Cref{fig:types-of-swaps} for an illustration.
We now consider the case where each agent is away from the master list by a linearly separable set of swaps.

\begin{restatable}{lemma}{linearlyseparableswaps}
\label{lem:linearly-separable-swaps}
Suppose that an agent $a$'s valuation is derived from a permutation that is away from the master list $\pi$ by a set of linearly separable swaps.
Let $M=(\M{1}, \dots, \M{n})$ be the output of \rr on the master list.
Then, $v_a(\M{j}) \ge \prop_a - 2$, for each part $\M{j}$. 
\end{restatable}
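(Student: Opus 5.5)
The plan is to compare agent $a$'s true valuation $v_a$ with the projected valuation $\hat{v}_a$ of \Cref{def:proj-val}, as in the proof of \Cref{lem:arbitrary-swaps}, but to charge the loss more carefully using the disjointness of the swap intervals. The total loss will telescope to at most one unit, rather than one unit per swap.

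\textbf{Reduction to $\hat{v}_a$.} Since the master list is the identity and \rr is run on it, the output $M$ is the partition analysed in \Cref{lem:ordered-additive}, and it does not depend on any agent's values. Since $\hat{v}_a$ is derived from $\pi$, \Cref{lem:ordered-additive} gives $\hat{v}_a(\M{j}) \ge \hat{v}_a([m])/n - 1$ for every $j$. Also $\hat{v}_a$ and $v_a$ assign the same multiset of values $x_1 \ge \dots \ge x_m$, so $\hat{v}_a([m]) = v_a([m])$. Hence $\hat{v}_a(\M{j}) \ge \prop_a - 1$, and it suffices to show $v_a(\M{j}) \ge \hat{v}_a(\M{j}) - 1$ for every part $\M{j}$.

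\textbf{Describing $v_a$ explicitly.} Let $S$ be the linearly separable set of swaps transforming $\pi$ into $\sigma_a$. Because the intervals are internally disjoint, no position is touched by two swaps except possibly as a shared endpoint. I would first handle this shared-endpoint case, either by noting that it is excluded by the condition $s_1 < s_4$ together with distinct endpoints, or by treating chains of swaps sharing endpoints as a single block. Either way, the swaps act independently. For a swap $(s_1,s_2) \in S$ with $s_1 < s_2$, the item $s_1$ sits at position $s_2$ of $\sigma_a$ and vice versa. Thus $v_a(s_1) = x_{s_2}$ and $v_a(s_2) = x_{s_1}$, while $v_a(g) = x_g = \hat{v}_a(g)$ for every item $g$ not touched by $S$. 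So the only items whose value decreases are the left endpoints $s_1$, and each such item loses exactly $x_{s_1} - x_{s_2} \ge 0$. Right endpoints only gain.

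\textbf{Telescoping bound.} This gives
\[
\hat{v}_a(\M{j}) - v_a(\M{j}) \;\le\; \sum_{(s_1,s_2)\in S,\ s_1 \in \M{j}} (x_{s_1}-x_{s_2}) \;\le\; \sum_{(s_1,s_2)\in S} (x_{s_1}-x_{s_2}).
\]
Order the swaps as $(s_1^1,s_2^1), (s_1^2,s_2^2), \dots$ with $s_2^{\ell} \le s_1^{\ell+1}$, which is possible by linear separability. Since $x$ is non-increasing, $x_{s_2^{\ell}} \ge x_{s_1^{\ell+1}}$. The sum is therefore at most $x_{s_1^1} - x_{s_2^{\text{last}}} \le x_1 - x_m \le 1$, using that item values lie in $[0,1]$. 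Combining this with the reduction gives $v_a(\M{j}) \ge \prop_a - 2$.

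The main obstacle is the bookkeeping in the second step: correctly identifying which items gain and which lose under $\sigma_a$, and making sure that shared endpoints allowed by the definition ($s_2 = s_3$) do not break the independence of the swaps. If swaps can share an endpoint, I would check that composing such a chain still yields a value function in which the total decrease is bounded by a telescoping sum of the same form over consecutive positions. This would again be at most $x_{\min} - x_{\max} \le 1$ over the union of the chain, and the intervals of different chains are still disjoint.
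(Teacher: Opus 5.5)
Your proposal is correct and is essentially the paper's proof: you reduce to the projected valuation $\hat{v}_a$ via \Cref{lem:ordered-additive} (using $\hat{v}_a([m]) = v_a([m])$), and then bound the total loss from the swaps by the telescoping sum $\sum_{(p,q)\in S}(x_p - x_q) \le x_1 - x_m \le 1$ over internally disjoint intervals. On your shared-endpoint worry: the definition does permit $s_2 = s_3$, so that case is not excluded, and swaps such as $(1,2),(2,3)$ do not act independently; but you need no explicit description of $v_a$, because each applied swap $(p,q)$ moves exactly one item rightward (a decrease of $x_p - x_q$) while leftward moves only increase value, so in any order the total decrease over all items is at most $\sum_{(p,q)\in S}(x_p - x_q)$, which is exactly the per-swap charging the paper uses.
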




\begin{proof}
    We proceed similarly to the previous lemmas. 
    Let $\hat{v}_a$ be the projected valuation of agent $a$.
    We run \rr using the projected valuations $\{\hat{v}_1, \dots, \hat{v}_n\}$ and obtain a partition $M$.
    Due to \Cref{lem:ordered-additive}, we have:
    \begin{align*}
        v_a(\M{j}) 
        &= \sum_{g \in \M{j}} v_a(g) = \sum_{g \in \M{j}} \left( \hat{v}_a(g) - \change(g, \hat{v}_a, v_a) \right) \\
        &= \hat{v}_a(\M{j}) - \sum_{g \in \M{j}} \change(g, \hat{v}_a, v_a) \\
        &\ge \left( \prop_a - 1 \right) - \sum_{g \in \M{j}} \change(g, \hat{v}_a, v_a).
    \end{align*}
    
    Consider any \part $\M{j}$. 
    Consider the total potential loss from a set of linearly separable swaps $S$. 
    For a single swap $(p, q) \in S$ where $p < q$, the item originally at $p$ moves to $q$, incurring a loss of $\hat{v}_a(p) - \hat{v}_a(q)$.
    Summing over all swaps in $S$:
    \[
    \sum_{(p,q) \in S} (\hat{v}_a(p) - \hat{v}_a(q)) \le \sum_{i=1}^{m-1} (\hat{v}_a(i) - \hat{v}_a(i+1)) = \hat{v}_a(1) - \hat{v}_a(m) \le 1.
    \]
    The inequality holds because the intervals $[p,q]$ for different swaps are internally disjoint, so the sum is bounded by the telescoping sum over the entire domain $[m]$.
    Thus, for the \part $\M{j}$, we have $v_a(\M{j}) \geq \prop_a-1-1 = \prop_a-2$.
    
    Since the valuations $\hat{v}_a$ and consequently the partition $M$ can be computed in polynomial time, we have proved the lemma.
\end{proof}

We now generalize \Cref{lem:linearly-separable-swaps} to study the scenario where a set of swaps can be partitioned into multiple linearly separable subsets.

\begin{lemma}
    \label{lem:lin-sep-guarantee}
    Let $S$ be a set of swaps performed to obtain the permutation $\sigma_a$ from the master list.
    Let $S_1, \dots, S_t$ be a partition of $S$ such that each subset $S_i$ is linearly separable.
    Let $M=(\M{1}, \dots, \M{n})$ be the output of \rr on the master list.
    Then, $v_a(\M{j}) \ge \prop_a - 1 - t$ for each \part $\M{j}$.
\end{lemma}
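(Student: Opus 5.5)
I would follow the template of the proof of \Cref{lem:linearly-separable-swaps}. Let $\hat{v}_a$ be the projected valuation of agent $a$ (\Cref{def:proj-val}), and let $M$ be the output of \rr run on the master list. By \Cref{lem:ordered-additive} we have $\hat{v}_a(\M{j}) \ge \prop_a - 1$ for every part. Writing $v_a(\M{j}) = \hat{v}_a(\M{j}) - \sum_{g \in \M{j}} \change(g, \hat{v}_a, v_a)$, it suffices to show that the positive part of the total change is at most $t$:
\[
\sum_{g\in[m]} \max\{0,\change(g,\hat v_a,v_a)\} \le t .
\]
Negative changes only help, so they can be dropped.

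The core step is to charge each item's loss to the swaps. The permutation $\sigma_a$ is obtained from $\pi$ by applying the swaps of $S$ in some order. The value $v_a$ of the item sitting at position $q$ in $\sigma_a$ is $\hat v_a(q)$. Hence an item originally at position $p$ that ends at position $q$ has change $\hat v_a(p) - \hat v_a(q)$. As the item is moved along a sequence of swaps $p = p_0 \to p_1 \to \dots \to p_r = q$, this change telescopes:
\[
\hat v_a(p)-\hat v_a(q) = \sum_{\ell} \bigl(\hat v_a(p_{\ell-1}) - \hat v_a(p_\ell)\bigr).
\]
Each term is at most $\hat v_a(\min) - \hat v_a(\max)$ of the positions exchanged by the corresponding swap. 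Summing over all items, every swap $(x,y)$ with $x<y$ is charged at most $\hat v_a(x) - \hat v_a(y)$, and only by the single item that moved downward in it. The item that moved upward receives a nonpositive term.

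The obstacle is that an item's final loss could in principle be charged to several swaps that partially cancel. The clean bound I would use is the following. For each item $g$ with positive change,
\[
\change(g) \le \sum_{\text{downward steps of } g} \text{drop},
\]
because the upward steps contribute negative terms. Therefore
\[
\sum_g \max\{0,\change(g)\} \le \sum_{(x,y)\in S} \bigl(\hat v_a(x)-\hat v_a(y)\bigr).
\]
This needs the convention that the swap $(x,y)$ acts on positions $x<y$ of the current permutation, which is how swaps are defined in the paper.

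To finish, I would split the last sum as $\sum_{i=1}^t \sum_{(x,y)\in S_i}$. For each linearly separable $S_i$, the intervals $[x,y]$ are internally disjoint. The telescoping argument in the proof of \Cref{lem:linearly-separable-swaps} then gives
\[
\sum_{(x,y)\in S_i}\bigl(\hat v_a(x)-\hat v_a(y)\bigr) \le \hat v_a(1)-\hat v_a(m) \le 1.
\]
Thus the total loss is at most $t$, and $v_a(\M{j}) \ge \prop_a - 1 - t$. The partition is computed by \rr exactly as in \Cref{lem:ordered-additive}.
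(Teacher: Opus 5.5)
Your proposal is correct and follows essentially the same route as the paper: bound the total positive loss relative to the projected valuation $\hat{v}_a$ by $\sum_{(x,y)\in S}(\hat{v}_a(x)-\hat{v}_a(y))$, split this sum over the linearly separable classes $S_1,\dots,S_t$ (each contributing at most $1$ by telescoping), and combine with $\hat{v}_a(\M{j}) \ge \prop_a - 1$ from \Cref{lem:ordered-additive}. Your telescoping along each item's trajectory, with each swap charged only to the item that moves to the higher-numbered position, justifies the charging inequality that the paper's \Cref{clm:decomposing-change} only asserts, and it also lets you avoid the paper's intermediate valuations $v_a^0,\dots,v_a^t$.
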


\begin{proof}
    Let $\hat{v}_a$ be the projected valuation function of agent $a$. 
    From \Cref{lem:ordered-additive}, we know $\hat{v}_a(\M{j}) \ge \prop_a - 1$. 
    We construct a sequence of valuations $v_a^0, \dots, v_a^t$ where $v_a^0 = \hat{v}_a$, and $v_a^i$ is obtained from $v_a^{i-1}$ by applying the set of swaps $S_i$.

    For arbitrary valuation functions $v_1, v_2:[m] \to [0,1]$, let $\change(g,v_1, v_2) \coloneq \max( 0, v_1(g) - v_2(g))$ denote the magnitude of decrease in value of good $g$ when moving from $v_1$ to $v_2$.
    We observe the following property regarding the total change, which completes the proof.

    \begin{restatable}{claim-inside-lemma}{decomposingchange}
        \label{clm:decomposing-change}
        $\sum_{g \in [m]} \change(g,v^0_a, v_a) \le \sum_{g \in [m]} \sum_{i=1}^t \change(g,v^{i-1}_a, v^i_a)$. 
        Moreover, this implies $v_a(\M{j}) \geq \prop_a - 1 - t$. 
        \qedhere
    \end{restatable}

    \begin{proof}
        Any decrease in value for an item occurs solely due to a swap.
        The total loss incurred by the swaps is bounded by the sum of value differences at the swap positions.
        Since the subsets $S_1, \dots, S_t$ partition $S$, we can decompose the total loss as follows:
        \begin{align*}
            \sum_{g \in [m]} \change(g, v^0_a, v_a) 
            &\le \sum_{(p_1, p_2) \in S} \left( \hat{v}_a(p_1) - \hat{v}_a(p_2) \right) \\
            &= \sum_{i=1}^t \sum_{(p_1,p_2) \in S_i} \left( \hat{v}_a(p_1)-\hat{v}_a(p_2) \right) \\
            &= \sum_{i=1}^t \sum_{g \in [m]} \change(g,v^{i-1}_a, v^i_a).
        \end{align*}
        
        Using this decomposition, we lower bound the value of bundle $\M{j}$: 
        \begin{align*}
            v_a(\M{j}) &= \sum_{g \in \M{j}} v_a(g) \\ 
            &\ge \sum_{g \in \M{j}} \left( \hat{v}_a(g) - \change(g,v^0_a, v_a) \right) \\ 
            &\ge (\prop_a - 1) - \sum_{g \in [m]} \change(g,v^0_a, v_a) \\
            &\ge \prop_a - 1 - \sum_{i=1}^t \sum_{g \in [m]} \change(g,v^{i-1}_a, v^i_a) \tag{by \Cref{clm:decomposing-change}} \\
            &\ge \prop_a - 1 - \sum_{i=1}^t 1 \tag{since each $S_i$ is linearly separable} \\
            &= \prop_a - 1 - t.
        \end{align*}
    \end{proof}
\end{proof}

Extending the above results, we can infer the following general bound for the entire instance.

\begin{restatable}{lemma}{lemlinsepguarantee2}
    \label{lem:lin-sep-guarantee2}
    For each agent $a$, let $t_a$ be the minimum integer such that the agent's permutation can be obtained from the master list via a set of swaps that can be partitioned into $t_a$ linearly separable subsets.
    Then, there is a partition $M = (\M{1}, \dots, \M{n})$ such that for each agent $a$ and \part $\M{j}$:
    \[
        v_a(\M{j}) \ge \prop_a - 1 - t_a
    \]
    Moreover, such a partition can be computed in polynomial time.
\end{restatable}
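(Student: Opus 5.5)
The plan is to take a single partition $M$ that does not depend on the agents' actual values, and then apply \Cref{lem:lin-sep-guarantee} to each agent separately. Specifically, let $M$ be the output of \rr run on the master list $\pi = \langle 1, \dots, m\rangle$. We run it with the projected valuations $\hat{v}_1, \dots, \hat{v}_n$ from \Cref{def:proj-val}, breaking ties by master-list order. Every projected valuation is derived from $\pi$, so, as in the proof of \Cref{lem:ordered-additive}, each pick in \rr takes the lowest-indexed remaining item. Hence $M$ is the column partition of items by position in $\pi$: item $r$ goes to part $((r-1) \bmod n) + 1$. This is the same for every agent and clearly computable in polynomial time.

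Now fix an agent $a$ and let $S^a$ be a set of swaps turning $\pi$ into $\sigma_a$ whose partition into linearly separable classes has the minimum size $t_a$. We instantiate \Cref{lem:lin-sep-guarantee} with $S = S^a$ and $t = t_a$. That lemma is stated for ``the output of \rr on the master list,'' which is exactly the fixed $M$ above, and its proof only uses $\hat{v}_a(\M{j}) \ge \prop_a - 1$ from \Cref{lem:ordered-additive} together with the per-class telescoping bound. It therefore gives $v_a(\M{j}) \ge \prop_a - 1 - t_a$ for every $j$. The partition is the same for all agents, so we can take the minimum over $a$ with no interaction between agents, and the statement follows.

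The main point needing care is the polynomial-time claim. The minimal decomposition $t_a$ (and even a close master list) may be hard to compute; compare the NP-hardness results cited in the related work. The way around this is that the algorithm never needs $S^a$ or $t_a$. These objects appear only in the analysis, while the construction of $M$ uses only $\pi$. So I would stress that $M$ is computed obliviously to the swap decompositions and that $t_a$ is existential.

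A second, minor check is that $\prop_a$ is the same under $\hat{v}_a$ and $v_a$, since $\hat{v}_a$ is a permutation of $v_a$'s values. This justifies using $\prop_a$ in the bound from \Cref{lem:ordered-additive}.
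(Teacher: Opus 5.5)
Your proposal is correct and takes the paper's route. The paper obtains this lemma by applying \Cref{lem:lin-sep-guarantee} agent by agent to the single \rr partition computed on the master list $\pi$, so the decompositions $S^a$ and the values $t_a$ enter only the analysis, exactly as you note. Your explicit tie-breaking by master-list order, which makes $M$ the column partition, usefully clarifies a point the paper leaves implicit in \Cref{lem:ordered-additive}.
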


\subsection{Adjacent swaps}
\label{subsec:adjacent-swaps}
We now consider the case where each agent is at most $k$ adjacent swaps away from the master list.

\begin{restatable}{lemma}{adjacentswaps}
\label{lem:adjacent-swaps}
    In an instance where each agent $a$'s permutation is at most $k_a$ adjacent swaps away from the master list $\pi$, there exists a partition $M=(\M{1}, \dots, \M{n})$ such that $v_a(\M{j}) \ge \prop_a - \sqrt{2k_a} - 1$, for each agent $a$ and part $\M{j}$. 
    Moreover, such a partition can be computed in polynomial time.
\end{restatable}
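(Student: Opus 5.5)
We will follow the template of \Cref{lem:arbitrary-swaps} and \Cref{lem:lin-sep-guarantee}. First, run \rr on the master list, i.e., with the projected valuations $\hat{v}_1,\dots,\hat{v}_n$ of \Cref{def:proj-val}, to obtain $M$. By \Cref{lem:ordered-additive}, $\hat{v}_a(\M{j}) \ge \prop_a - 1$ for every agent $a$ and part $\M{j}$. As before, write
\[
v_a(\M{j}) = \hat{v}_a(\M{j}) - \sum_{g\in\M{j}} \change(g,\hat{v}_a,v_a),
\]
so it suffices to show that the total positive change $\sum_{g\in[m]}\max(0,\hat{v}_a(g)-v_a(g))$ is at most $\sqrt{2k_a}$. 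Polynomial-time computability is immediate, exactly as in the earlier lemmas.

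Fix $a$ and the sequence of at most $k := k_a$ adjacent swaps transforming $\pi$ into $\sigma_a$. The item at master position $p$ ends at position $\sigma_a^{-1}$-position $q_p$, and its true value is $\hat{v}_a(q_p)$. The loss for this item is therefore $\hat{v}_a(p)-\hat{v}_a(q_p)$ when $q_p>p$, and zero otherwise.

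Let $d_p = q_p - p$ be the displacement. A single adjacent swap moves exactly one item down by one position, so $\sum_{p} \max(0,d_p) \le k$. Because $\hat{v}_a$ is non-increasing with values in $[0,1]$, the loss of item $p$ is $\sum_{r=p}^{q_p-1}\delta_r$, where $\delta_r = \hat{v}_a(r)-\hat{v}_a(r+1)\ge 0$ and $\sum_r \delta_r \le 1$. Hence the total loss equals $\sum_r c_r\delta_r$, where $c_r$ is the number of items whose downward path crosses the boundary $(r,r+1)$. This gives
\[
\text{total loss} \le \min\Big(\max_r c_r,\; L\Big), \qquad L = \text{number of items with } d_p>0,
\]
since each item loses at most $1$.

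The key combinatorial step is to bound the total loss by $\sqrt{2k}$. Suppose $c$ items cross a common boundary $(r,r+1)$ downward. Their inversions against the items that end up above the boundary force many swaps. In particular, the $c$ down-crossers must be passed by $c$ up-crossers, since the net flow across a boundary is zero. That alone gives only $c^2 \le k$.

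The better route is to use the total displacement. Let $\ell_1 \ge \ell_2 \ge \dots$ be the positive displacements, with $\sum_i \ell_i \le k$. The loss of the $i$-th item is at most $\min(1,\ \text{sum of } \delta \text{ over its interval})$. To maximise the total loss, the adversary wants many displaced items each crossing a large drop. Items crossing the same boundary share the same $\delta_r$, and distinct down-moving items cannot occupy the same positions pairwise without their intervals nesting or overlapping.

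I would try to show the following: if $s$ items each lose more than $\epsilon$, then their intervals must be pairwise distinct and arranged so that the $i$-th one has length at least about $i$. The idea is that down-moving items keep their relative order, so they form a "staircase". This yields $k \ge 1+2+\dots+s = s(s+1)/2$, hence $s \le \sqrt{2k}$, and so the loss is at most $\sqrt{2k}$.

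Making this staircase argument precise is the main obstacle. My plan is to order the down-moving items by their start position and show that the sum of the losses is at most the number of distinct items with positive loss, which is at most $\sqrt{2k}$, using the counting $\sum \ell_i \ge s(s+1)/2$. If that fails, I would fall back on a direct bound: the sum over boundaries of $c_r$ is at most $k$, and $\sum_r c_r\delta_r$ with $\sum\delta_r\le1$, $c_r \le \sqrt{2k}$ is obtained via the crossing/inversion count $c_r(c_r+1)/2 \le k$.
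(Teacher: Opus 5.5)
Your proposal already contains a complete proof, and in fact a stronger one. You discard it at the point where you say the crossing argument ``alone gives only $c^2\le k$.'' Put your two observations together:
\begin{itemize}
\item Since $\delta_r\ge 0$ and $\sum_r\delta_r=\hat{v}_a(1)-\hat{v}_a(m)\le 1$, the total loss $\sum_r c_r\delta_r$ is at most $\max_r c_r$.
\item Fix a boundary $(r,r+1)$. Positions $1,\dots,r$ hold exactly $r$ items in both $\pi$ and $\sigma_a$, so the $c_r$ down-crossers are matched by exactly $c_r$ up-crossers. Every (down-crosser, up-crosser) pair is ordered oppositely in $\pi$ and $\sigma_a$, and an adjacent swap reverses only one pair. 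Hence $c_r^2\le k_a$.
\end{itemize}
So the total loss is at most $\sqrt{k_a}\le\sqrt{2k_a}$, which gives $v_a(\M{j})\ge\prop_a-1-\sqrt{k_a}$. Your closing fallback is exactly this argument, with the weaker but sufficient $c_r(c_r+1)/2\le k_a$; lead with it.

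Drop the staircase route you call the better one, because its central claim is false. Apply the $k$ disjoint adjacent swaps $(1,2),(3,4),\dots,(2k-1,2k)$ with strictly decreasing values. Then $k$ items each lose a positive amount, all along intervals of length one. So the number of items with positive loss is not bounded by $\sqrt{2k}$. The total loss stays small only because these items cross different boundaries, which is precisely what the $\sum_r c_r\delta_r$ bookkeeping captures. Down-moving items also need not keep their relative order: going from $\langle 1,2,3,4\rangle$ to $\langle 3,4,2,1\rangle$, items $1$ and $2$ both move down and end up reversed.

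\textbf{Comparison with the paper.} The paper proceeds as follows:
\begin{itemize}
\item It fixes a particular minimum-length swap sequence, built by repeatedly repairing the leftmost incorrect position.
\item It puts the $i$-th copy of each adjacent swap into a class $S_i$. Each class consists of distinct adjacent swaps and is therefore linearly separable.
\item It charges each class at most $1$ via the telescoping bound of \Cref{lem:lin-sep-guarantee}.
\item It bounds the number of classes (the maximum multiplicity of a single adjacent swap) by $\sqrt{2k_a}$ with a triangular-number count over that construction.
\end{itemize}
Your route uses only net start and end positions. It needs no specific swap sequence, no linearly separable decomposition, and no argument that the loss is dominated by a sum over the swaps performed. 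The quantity you bound, the net crossings $c_r$ at a boundary, never exceeds the swap multiplicity at that boundary that the paper bounds. The inversion count controls it directly and yields the sharper $\sqrt{k_a}$. What the paper's route buys is uniformity with the linearly separable and laminar cases, which reuse the same decomposition lemma.
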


    

\begin{proof}
    Let $M=(\M{1}, \dots, \M{n})$ be the partition output by \rr using the projected valuations $\{\hat{v}_1, \dots, \hat{v}_n\}$.
    
    Fix an agent $a$. 
    Let $S$ denote a minimum-size multiset of adjacent swaps applied to the master list to obtain $\pi_a$. 
    By definition, $|S| \le k_a$.
    The following procedure describes such an $S$:
    \begin{enumerate}
        \item Initialize $S = \emptyset$.
        \item\label{step:adjacent-swaps}
        Let $g_\ell$ be the item present in position $p_1$ in the current permutation and $p_2$ in the target $\pi_a$, with $p_1 > p_2$. 
        We choose $g_\ell$ such that it has the least value of $p_2$ among all such items (i.e., we fix the leftmost ``incorrect'' position).
        We add to $S$ the sequence of swaps: $(p_1, p_1-1), (p_1-1, p_1-2), \dots, (p_2+1, p_2)$.
        \item Apply these swaps to the current permutation and repeat Step 2 until the permutation matches $\pi_a$.
    \end{enumerate}

    Let $S = S_1 \cup \dots \cup S_t$ be a partition of $S$ into the minimum number of subsets such that each subset $S_i$ is linearly separable.
    Note that $t \le |S| \le k_a$ since a trivial partition exists where each $S_i$ contains a single swap.
    We now claim that we can partition $S$ into significantly fewer sets.

    \begin{claim}
        \label{clm:upperbounding-t}
        $t \le \sqrt{2k_a}$.
    \end{claim}
    \begin{proof}
        For any two permutations $\pi_1$ and $\pi_2$, the minimum number of adjacent swaps to reach $\pi_2$ from $\pi_1$ is equal to the number of pairs $\{g_1, g_2\}$ of items such that $g_1$ precedes $g_2$ in $\pi_1$ and $g_2$ precedes $g_1$ in $\pi_2$ \cite{DBLP:journals/tcs/Jerrum85}.

        If each swap is present at most $\sqrt{2k_a}$ times in $S$, then we are done because we can define \Ma{$S_i$ as the set containing the $i^{th}$ copy of any swap} that is present in $S$.
        
        Suppose that some swap $(p,p-1)$ is present $\sqrt{2k_a}+1$ times.
        Then, this means that Step \ref{step:adjacent-swaps} was called $\sqrt{2k_a}+1$ times with the swap $(p, p - 1)$ being added.
        Observe that the value of $p_2$ (which is at most $p$) is different for each such addition. 
        This means that the total number of swaps is at least $(\sqrt{2k_a})(\sqrt{2k_a}+1)/2 >k_a$, and thus we have a contradiction to the fact that the permutation is only $k_a$ swaps away from the master list.
    \end{proof}

    Combining \Cref{clm:upperbounding-t} with the guarantee from \Cref{lem:lin-sep-guarantee}, we obtain the result.
\end{proof}

\subsection{$\delta$-Lipschitz valuations}
\label{subsec:Lipschitz}

We now analyze the performance when the valuation function $v_a$ is $\delta$-Lipschitz with respect to the master list $\pi$. 
Formally, we assume $|v_a(\pi(i)) - v_a(\pi(i+1))| \le \delta$ for each $i \in [m-1]$.
Equivalently, the values are given by a sequence $1 \ge \delta^a_1 \ge \dots \ge \delta^a_m \ge 0$ such that $v_a(\pi(i)) = \delta^a_i$ and $|\delta^a_i - \delta^a_{i+1}| \le \delta$. 

\begin{restatable}[\app]{lemma}{lemlipschitz}
\label{lem:lipschitz}
    In an instance where each agent's valuation is $\delta$-Lipschitz and is derived from a permutation that is away from the master list $\pi$ by at most $k$ adjacent swaps, there is a partition $M=(\M{1}, \dots, \M{n})$ such that $v_a(\M{j}) \ge \prop_a - 1 - \delta k$, for each agent $a$ and each \part $\M{j}$.
    Moreover, such a partition can be computed in polynomial time.
\end{restatable}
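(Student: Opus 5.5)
We follow the template of \Cref{lem:arbitrary-swaps} and \Cref{lem:adjacent-swaps}. We run \rr on the projected valuations $\{\hat{v}_1,\dots,\hat{v}_n\}$ of \Cref{def:proj-val} to obtain $M$. By \Cref{lem:ordered-additive}, $\hat{v}_a(\M{j}) \ge \prop_a - 1$ for every agent $a$ and every part $\M{j}$. Note that $\hat{v}_a(i) = \delta^a_i$, since the sorted value sequence is exactly the Lipschitz sequence. We then write
\[
v_a(\M{j}) = \hat{v}_a(\M{j}) - \sum_{g \in \M{j}} \change(g, \hat{v}_a, v_a),
\]
and it suffices to show that the total positive change over all items, $\sum_{g \in [m]} \max(0, \hat{v}_a(g) - v_a(g))$, is at most $\delta k$. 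Polynomial-time computability is inherited from the projected-valuation construction exactly as before.

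To bound the loss, we decompose the transformation from $\pi$ to $\sigma_a$ into at most $k$ adjacent swaps and track item values step by step. Let $v^0 = \hat{v}_a$, and let $v^s$ be the valuation obtained after applying the first $s$ swaps. Here the position values are fixed as $\delta^a_1 \ge \dots \ge \delta^a_m$, and an item takes the value of the position it currently occupies. An adjacent swap $(p, p+1)$ changes the values of only two items: one item gains $\delta^a_p - \delta^a_{p+1}$, and the other loses exactly this amount. By the Lipschitz condition this amount is at most $\delta$. By the triangle inequality applied to the positive parts (as in \Cref{clm:decomposing-change}), each item's net decrease is at most the sum of its per-step decreases. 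Summing over all items, the total positive change is therefore at most $\sum_s \delta = \delta k$.

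The subtle step is justifying that the true valuation $v_a$ coincides with the terminal valuation $v^k$, i.e., that $v_a(g) = \delta^a_{\text{pos}_{\sigma_a}(g)}$. This holds because $v_a$ is derived from $\sigma_a$, so $\sigma_a$ lists the items in non-increasing order of $v_a$. Consequently the item at position $i$ of $\sigma_a$ has value equal to the $i$-th largest value, which is $\delta^a_i$.

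Ties need some care, since the Lipschitz condition is stated relative to $\pi$. If $\delta^a$ is read as the sorted value sequence, the condition is exactly a bound on consecutive differences of the sorted values, and the argument above goes through. Combining the pieces gives $v_a(\M{j}) \ge \prop_a - 1 - \delta k$, since the per-part loss is at most the total loss. The main obstacle is this identification of $v_a$ with the swapped projected valuation together with the Lipschitz bound on sorted differences. The per-swap accounting itself is routine.
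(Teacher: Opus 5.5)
Your proposal is correct and follows the paper's proof: run Round-Robin on the projected valuations, use \Cref{lem:ordered-additive} to get $\hat{v}_a(M[j]) \ge \mathsf{PROP}_a - 1$, and charge a loss of at most $\delta$ per adjacent swap, for a total of at most $\delta k$. Your write-up makes explicit what the paper leaves implicit: the positive-part subadditivity across the swap sequence, the identification of $v_a$ with the fully swapped projected valuation, and the reading of the Lipschitz condition as a bound on consecutive gaps of the sorted values (which also follows from the literal condition $|v_a(\pi(i)) - v_a(\pi(i+1))| \le \delta$, since some adjacent pair along $\pi$ must straddle any gap in the sorted sequence).
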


\begin{proof}
    Let $M$ be the \rr partition on the projected valuations, defined in \Cref{def:proj-val}.
    Recall that $v_a(\M{j}) \ge (\prop_a - 1) - \text{loss}$, where the loss is the sum of value differences caused by the swaps.
    For a single adjacent swap at position $i$ (swapping items at ranks $i$ and $i+1$), the loss incurred is $v_a(\pi(i)) - v_a(\pi(i+1))$.
    By the $\delta$-Lipschitz property, this difference is at most $\delta$.
    Since there are at most $k$ such swaps, the total loss is at most $\sum_{s=1}^k \delta = \delta k$.
    Thus, $v_a(\M{j}) \ge \prop_a - 1 - \delta k$.
\end{proof}
\subsection{Laminar swaps}
\label{subsec:Laminar}

We extend our analysis to the case of arbitrary swaps that possess the \emph{laminar} property to obtain a better bound than the one given by \Cref{lem:arbitrary-swaps}.
A multiset of swaps $S$ is \emph{laminar} if for any pair of distinct swaps, they are either internally disjoint or one is contained inside the other.
Formally, for $s = (a, b)$ and $s'=(c,d) \in S$ (with $a<b$ and $c<d$), either $b \le c$ ($s$ and $s'$ are disjoint), or $a \le c$ and $d \le b$ ($s$ contains $s'$), or $c \le a$ and $b \le d$ ($s'$ contains $s$). 
If $s$ contains $s'$ but $s \ne s'$, then we say that $s$ strictly contains $s'$.

\begin{definition}[Depth of a laminar family of swaps]
    \label{def:depth-laminar-set}
    We define the \emph{depth} of a laminar family $S$, denoted by $\text{depth}(S)$, via a recursive decomposition.
    \begin{enumerate}[itemsep=0pt]
    \item Let $S_1$ be the set of \emph{maximal} swaps in $S$. A swap $s \in S$ is maximal if it is not strictly contained in any other swap in $S$. 
    (If there are multiple copies of a maximal swap, we pick exactly one copy for $S_1$).
    \item Remove $S_1$ from $S$.
    \item Repeat the process to find $S_2, S_3, \dots, S_t$ until $S$ is empty.
    \end{enumerate}
    Then, $\text{depth}(S) \coloneq t$.
\end{definition}

We note that linearly separable swaps correspond to a laminar family of depth 1.

\begin{lemma}
\label{lem:laminar-swaps-guarantee}
    Suppose that each agent $a$'s valuation is derived from a permutation that is away from the master list $\pi$ by a family $S^a$ of laminar swaps.
    Then, there is a partition $M=(\M{1}, \dots, \M{n})$ such that for each agent $a$ and \part $\M{j}$: $v_a(\M{j}) \ge \prop_a - 1- \text{depth}(S^a)$.
\end{lemma}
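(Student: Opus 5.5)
We follow the template of \Cref{lem:lin-sep-guarantee}. Let $M$ be the output of \rr run on the projected valuations $\{\hat v_1,\dots,\hat v_n\}$ (\Cref{def:proj-val}). By \Cref{lem:ordered-additive}, $\hat v_a(\M{j}) \ge \prop_a - 1$ for every agent $a$ and part $\M{j}$, since $\hat v_a$ and $v_a$ have the same multiset of values. We then write
\[
v_a(\M{j}) \ge \hat v_a(\M{j}) - \sum_{g\in[m]} \change(g,\hat v_a,v_a).
\]
So it suffices to show that the total decrease in item values when passing from $\hat v_a$ to $v_a$ is at most $\text{depth}(S^a)$.

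To bound this decrease we use the layering $S^a = S_1\cup\dots\cup S_t$ from \Cref{def:depth-laminar-set}, with $t=\text{depth}(S^a)$.

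\textbf{Step 1.} Each layer $S_i$ is linearly separable. Two distinct members of $S_i$ cannot strictly contain one another, since both are maximal in the residual family. By laminarity they are therefore internally disjoint, or they are identical copies. Only one copy of a maximal swap enters each layer, so identical copies do not occur within a layer. Hence each $S_i$ satisfies the hypothesis of \Cref{lem:linearly-separable-swaps}.

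\textbf{Step 2.} We apply the decomposition argument of \Cref{clm:decomposing-change}. Define valuations $v_a^0=\hat v_a, v_a^1,\dots,v_a^t$, where $v_a^i$ is obtained from $v_a^{i-1}$ by applying the swaps in $S_i$. The total loss is then bounded by $\sum_i \sum_{(p,q)\in S_i}(\hat v_a(p)-\hat v_a(q))$. The telescoping argument of \Cref{lem:linearly-separable-swaps} bounds each inner sum by $\hat v_a(1)-\hat v_a(m)\le 1$, because the intervals in $S_i$ are internally disjoint. Summing over the layers gives a loss of at most $t$, and so $v_a(\M{j}) \ge \prop_a-1-t$. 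Once Step 1 is in place, this is exactly \Cref{lem:lin-sep-guarantee} applied to the partition of $S^a$ into $t$ linearly separable subsets.

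\textbf{Main obstacle.} The delicate step is justifying that the loss of a swap $(p,q)$ is measured by $\hat v_a(p)-\hat v_a(q)$, that is, by positions in the master list, when swaps are applied in sequence. The same paper-level bound is already used in \Cref{clm:decomposing-change}, so we may rely on it. It is still worth checking that the order of application is harmless. One natural option is to apply the layers from the outermost ($S_1$) inward. A nested swap then acts on positions lying within an interval already counted by an outer swap, so each item's eventual decrease is dominated by a sum of position-differences from the swaps that move it. We would verify this briefly. The construction runs \rr on the projected valuations, which are polynomially computable, so $M$ is also computable in polynomial time.
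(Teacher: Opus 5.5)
Your proposal is correct and follows the paper's proof. The paper likewise observes that each layer of \Cref{def:depth-laminar-set} consists of pairwise internally disjoint swaps, hence is linearly separable, and then applies \Cref{lem:lin-sep-guarantee} with $t_a=\text{depth}(S^a)$.

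The ordering concern you raise does not need any reordering. Reordering would in fact generally change the resulting permutation, e.g.\ when an inner swap shares an endpoint with an outer one. Each swap $(p,q)$ with $p<q$ moves exactly one item rightward from $p$ to $q$. Telescoping along each item's trajectory therefore bounds its net decrease by the sum of $\hat v_a(p)-\hat v_a(q)$ over its rightward moves, and this holds for any order of application.
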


\begin{proof}
    We show that the decomposition $S^a_1, \dots, S^a_{\text{depth}(S^a)}$ described in \Cref{def:depth-laminar-set} partitions $S^a$ into linearly separable sets.
    Consider any set $S^a_i$.
    By the property of $S^a$, any two swaps in $S^a_i$ are either internally disjoint or one contains another.
    By construction, $S^a_i$ does not contain two copies of the same swap, and there is no swap that strictly contains another swap (otherwise the contained swap would not be maximal in the current set).
    Thus, all swaps in $S^a_i$ are pairwise internally disjoint.
    
    A set of pairwise internally disjoint swaps is linearly separable.
    Thus, $S^a_1, \dots, S^a_{\text{depth}(S^a)}$ forms a partition of $S^a$ into linearly separable sets.
    Applying \Cref{lem:lin-sep-guarantee} with $t_a = \text{depth}(S^a)$ for each agent $a$, we obtain the desired guarantee.
\end{proof}

By combining \Cref{lem:ordered-additive,lem:lin-sep-guarantee,lem:arbitrary-swaps,lem:adjacent-swaps,lem:laminar-swaps-guarantee}, \Cref{thm:ordered-additive-relaxation-guarantees} is proved.

\section{Impossibility of a better universal guarantee}
\label{sec:one-sided-lower-bound}
In this section we prove \Cref{thm:lower-bound}.
Recall that via the discrepancy argument, we can obtain a partition where all parts are of value at least $\prop_a - \bigoh(\sqrt{n})$ for each agent $a$.
In this section, we show that this guarantee is tight.
Specifically, we construct instances where, for \emph{any} partition, there exists an agent and a part such that the value of the part for the agent is ``low'' compared to their proportional share.
We note that our construction and proof builds upon the work of Manurangsi and Meka \cite{doi:10.1137/1.9781611978964.20Manurangsi26}.

\subsubsection*{Construction and Proof Overview}
We construct the valuation matrix $\mathbf{B}$ using Hadamard matrices.
Let $\mathbf{H}$ be an $\frac{n}{2} \times \frac{n}{2}$ Hadamard matrix, and $\mathbf{J}$ be the $\frac{n}{2} \times \frac{n}{2}$ all-ones matrix.
Note that the entries of a Hadamard matrix are in $\{+1, -1\}$, and the scalar product of any two distinct columns is zero.
We also assume that $\mathbf{H}$ is constructed in such a way that it contains a row of all ones \cite{hedayat1978hadamard}.
We define the matrices $\mathbf{A}$ and $\mathbf{A}_{\rm opp}$ as:
\[
\mathbf{A} \coloneq \frac{1}{2}(\mathbf{J} + \mathbf{H}), \quad \mathbf{A}_{\rm opp} \coloneq \frac{1}{2}(\mathbf{J} - \mathbf{H})
\]
We then construct a block matrix $\mathbf{\tilde{A}}$ of size $n \times n$ as:
\[
\mathbf{\tilde{A}} = \begin{pmatrix} \mathbf{A} & \mathbf{A} \\ \mathbf{A}_{\rm opp} & \mathbf{A}_{\rm opp} \end{pmatrix}.
\]

Let $t = n/4$.
The fair division instance consists of $n$ agents and $m=nt$ items.
The valuation matrix $\mathbf{B} \in \{0,1\}^{n \times nt}$ is constructed by horizontally stacking the matrix $\mathbf{\tilde{A}}$ exactly $t$ times:
\[
\mathbf{B} \coloneq [\underbrace{\mathbf{\tilde{A}} ~|~ \mathbf{\tilde{A}} ~|~ \cdots ~|~ \mathbf{\tilde{A}}}_{t \text{ times}}].
\]

In this instance, the rows, the columns, and the entry $\mathbf{B}[a,g]$ correspond to the $n$ agents, $nt$ items, and the value agent $a$ assigns to item $g$, respectively.
We observe the following due to the structure of Hadamard matrix $\mathbf{H}$.

\begin{restatable}{lemma}{propshareininstance}
    \label{lem:prop-share-in-instance}
    In the constructed instance, the proportional share of any agent takes one of three values: $m/n$ (for the all-ones agent), $0$ (for the all-zeros agent), or $m/(2n)$ (for all other agents).
\end{restatable}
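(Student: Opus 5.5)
The plan is to compute each agent's total value $v_a([m])$ directly from the row structure of $\mathbf{B}$ and divide by $n$. Since $\mathbf{B}$ consists of $t$ horizontal copies of $\mathbf{\tilde{A}}$, we have $v_a([m]) = t \cdot r_a$, where $r_a$ is the sum of row $a$ of $\mathbf{\tilde{A}}$. Because $\mathbf{\tilde{A}}$ places the same matrix twice side by side, $r_a$ equals twice the corresponding row sum of $\mathbf{A}$ (if $a \le n/2$) or of $\mathbf{A}_{\rm opp}$ (if $a > n/2$). So everything reduces to computing row sums of $\mathbf{A} = \frac12(\mathbf{J}+\mathbf{H})$ and $\mathbf{A}_{\rm opp} = \frac12(\mathbf{J}-\mathbf{H})$.

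The key step is to determine the row sums of $\mathbf{H}$. Since $\mathbf{H}$ is Hadamard, $\mathbf{H}\mathbf{H}^\top = \frac{n}{2}\mathbf{I}$, so its rows are also pairwise orthogonal. By assumption one row is all ones, so every other row is orthogonal to the all-ones vector, i.e., it has row sum $0$. The all-ones row itself has sum $n/2$.

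Plugging in, the row of $\mathbf{A}$ coming from the all-ones row of $\mathbf{H}$ is all ones with sum $n/2$, and every other row of $\mathbf{A}$ has sum $\frac12(n/2 + 0) = n/4$. Correspondingly, $\mathbf{A}_{\rm opp}$ has one all-zeros row with sum $0$, and every other row has sum $n/4$. Hence $r_a \in \{n, 0, n/2\}$. Since $m = nt$, we get $v_a([m]) \in \{m, 0, m/2\}$, and dividing by $n$ gives $\prop_a \in \{m/n, 0, m/(2n)\}$. Exactly one agent attains each of the first two values, and all remaining agents attain the third.

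There is no real obstacle here. The only point needing care is that orthogonality of columns (the stated Hadamard property) implies orthogonality of rows, which holds because $\mathbf{H}^\top\mathbf{H} = \frac{n}{2}\mathbf{I}$ makes $\mathbf{H}/\sqrt{n/2}$ orthogonal, so $\mathbf{H}\mathbf{H}^\top = \frac{n}{2}\mathbf{I}$ as well. The assumption $n \ge 4$ ensures $n/2 \ge 2$, so there are rows other than the all-ones row.
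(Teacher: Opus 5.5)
Your proposal is correct and follows the same route as the paper. Both arguments use orthogonality of each non-trivial row of $\mathbf{H}$ with the all-ones row to get $n/4$ ones in every other row of $\mathbf{A}$ and $\mathbf{A}_{\rm opp}$, then multiply through the doubling in $\mathbf{\tilde{A}}$ and the $t$ stacked copies. Your remark that column orthogonality implies row orthogonality (via $\mathbf{H}\mathbf{H}^\top = \frac{n}{2}\mathbf{I}$) fills in a step the paper uses without comment, since it states only the column property.
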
 
\begin{proof} 
    Recall that $\mathbf{A} = \frac{1}{2}(\mathbf{J} + \mathbf{H})$ and $\mathbf{A}_{\rm opp} = \frac{1}{2}(\mathbf{J} - \mathbf{H})$, where $\mathbf{H}$ contains a row of all ones. 
    Without loss of generality, suppose that the first row of $\mathbf{H}$ is all ones.
    Then, the first row of $\mathbf{A}$ is all ones, and the first row of $\mathbf{A}_{\rm opp}$ is all zeros.
    By the construction of $\mathbf{\tilde{A}}$ and $\mathbf{B}$, there is an agent (corresponding to the first row of $\mathbf{A}$ in $\mathbf{B}$) who values every item at $1$, and thus has a proportional share of $m/n$.
    Analogously, there is an agent (corresponding to the first row of $\mathbf{A}_{\rm opp}$ in $\mathbf{B}$) who values every item at $0$, and thus has a proportional share of $0$.
    
    For any row $i \ne 1$ of the Hadamard matrix $\mathbf{H}$, the scalar product property (when applied with the first row of ones) implies that exactly $n/4$ entries are $+1$ and $n/4$ entries are $-1$.
    Therefore, the corresponding rows in $\mathbf{A}$ and $\mathbf{A}_{\rm opp}$ contain exactly $n/4$ ones and $n/4$ zeros.
    By the construction of $\mathbf{\tilde{A}}$ and $\mathbf{B}$, the corresponding agent values exactly $(n/4) \cdot 2 \cdot t = m/2$ items at $1$, and exactly $m/2$ items at $0$.
    Thus, the proportional share of such an agent is $m/(2n)$.
\end{proof}

Let $P=(P_1, \dots, P_n)$ be an arbitrary $n$-partition of the items.
Consider any arbitrary \part, denoted by $P_j$.
We will study the value of $P_j$ for each agent.
Let $\chi(P_j) \in \{0,1\}^{nt}$ be the characteristic vector of $P_j$ (here the $i$-th entry is 1 if item $i$ is in $P_j$, and $0$ otherwise).
The values derived by each of the agents from $P_j$ is given by the $n$-dimensional vector $\mathbf{B}\chi(P_j)$.
The deviation from proportional share, for each agent for the \part $P_j$, is given by the vector $\mathbf{z} = \mathbf{B}(1/n \cdot \mathbf{1} - \chi(P_j))$.
The main technical step in this analysis is to prove the following. 

\begin{restatable}{lemma}{weighteddiscbound}
    \label{lem:weighted-disc-bound}
    For any vector $\mathbf{x} \in \{0,1\}^{nt}$, we have $\lVert \mathbf{B} ( 1/n \cdot \mathbf{1} - \mathbf{x}) \rVert_\infty \ge \sqrt{n/32}$.
   \hide{ \[
    \lVert \mathbf{B} ( 1/n \cdot \mathbf{1} - \mathbf{x}) \rVert_\infty \ge \sqrt{n/32}.
    \]}
\end{restatable}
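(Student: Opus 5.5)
The plan is to collapse the $nt$ item coordinates onto the $n/2$ columns of $\mathbf{H}$, and then use the orthogonality of $\mathbf{H}$ together with an integrality argument.

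\textbf{Step 1 (reduction to a vector indexed by the columns of $\mathbf{H}$).} Write $\mathbf{y} = \frac1n\mathbf{1} - \mathbf{x}$. Every column of $\mathbf{B}$ is a column of $\mathbf{\tilde A}$, and every column of $\mathbf{\tilde A}$ has the form $\binom{\mathbf{a}_k}{\mathbf{a}^{\rm opp}_k}$ for some column index $k \in [n/2]$. Here $\mathbf{a}_k = \frac12(\mathbf{1}+\mathbf{h}_k)$ and $\mathbf{a}^{\rm opp}_k = \frac12(\mathbf{1}-\mathbf{h}_k)$, where $\mathbf{h}_k$ is the $k$-th column of $\mathbf{H}$. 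Each $k$ occurs for exactly $2t$ items: two occurrences inside $\mathbf{\tilde A}$, times $t$ copies. Define $\mathbf{w} \in \mathbb{R}^{n/2}$ by $w_k = \sum y_g$, summed over these $2t$ items $g$, and set $S = \mathbf{1}^\top \mathbf{w}$. Then
\[
\mathbf{B}\mathbf{y} = \begin{pmatrix} \tfrac12 (S\mathbf{1} + \mathbf{H}\mathbf{w}) \\ \tfrac12 (S\mathbf{1} - \mathbf{H}\mathbf{w}) \end{pmatrix}.
\]

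\textbf{Step 2 (removing the common shift $S$).} For each row $r$, $\max\{|S + (\mathbf{H}\mathbf{w})_r|,\, |S - (\mathbf{H}\mathbf{w})_r|\} \ge |(\mathbf{H}\mathbf{w})_r|$. Hence $\lVert \mathbf{B}\mathbf{y}\rVert_\infty \ge \frac12 \lVert \mathbf{H}\mathbf{w}\rVert_\infty$. This is exactly why the block structure pairs $\mathbf{A}$ with $\mathbf{A}_{\rm opp}$: the mean term $S$ cannot simultaneously cancel the deviation in both halves. This is what makes the bound one-sided-robust.

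\textbf{Step 3 (Hadamard orthogonality).} Since $\mathbf{H}^\top\mathbf{H} = \frac n2 I$, we have $\lVert \mathbf{H}\mathbf{w}\rVert_2^2 = \frac n2 \lVert \mathbf{w}\rVert_2^2$. Since $\mathbf{H}\mathbf{w}$ has $n/2$ coordinates, $\lVert \mathbf{H}\mathbf{w}\rVert_\infty^2 \ge \frac{2}{n}\lVert \mathbf{H}\mathbf{w}\rVert_2^2 = \lVert\mathbf{w}\rVert_2^2$. So it suffices to show $\lVert \mathbf{w}\rVert_2 \ge \sqrt{n/8}$.

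\textbf{Step 4 (integrality, the key point).} We have $w_k = 2t\cdot\frac1n - c_k$, where $c_k \in \mathbb{Z}$ counts the items of group $k$ lying in $P_j$. Since $t = n/4$, this gives $w_k = \frac12 - c_k$, so $|w_k| \ge \frac12$ for every $k$. Therefore $\lVert\mathbf{w}\rVert_2^2 \ge \frac n2\cdot\frac14 = \frac n8$. Chaining the steps yields
\[
\lVert \mathbf{B}\mathbf{y}\rVert_\infty \ge \tfrac12\sqrt{n/8} = \sqrt{n/32}.
\]

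The main obstacle is Step 2: one must see that the all-ones component $S$ can be neutralized. That requires choosing the $\mathbf{A}/\mathbf{A}_{\rm opp}$ pairing and aggregating to $\mathbf{w}$ correctly. The choice $t = n/4$ is what forces the half-integrality used in Step 4.
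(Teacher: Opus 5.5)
Your proof is correct and is essentially the paper's argument: both aggregate $\mathbf{x}$ into the half-integral vector $w_k = \tfrac12 - c_k$ over the $n/2$ Hadamard columns, and both combine Hadamard orthogonality with an $\ell_2$-to-$\ell_\infty$ averaging. The only difference is bookkeeping. The paper computes $\lVert \mathbf{\tilde A}\mathbf{y}\rVert_2^2 = \frac n4 S^2 + \frac n4 \lVert \mathbf{w}\rVert_2^2$ from the Gram matrix of $\mathbf{\tilde A}$, drops the $S^2$ term and averages over all $n$ rows; you instead eliminate your $S$ row by row via $\max\{|S+b|,|S-b|\}\ge |b|$ and average over $n/2$ rows, which gives the same constant $\sqrt{n/32}$.
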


\begin{proof}
\begingroup 
\expandafter\renewcommand\csname theclaim-inside-lemma\endcsname{\ref{lem:weighted-disc-bound}.\arabic{claim-inside-lemma}}
\setcounter{claim-inside-lemma}{0}

We note that $\|\mathbf{v}\|_\infty \ge \frac{1}{\sqrt{n}} \|\mathbf{v}\|_2$, for any $v\in \{0,1\}^{nt}$. Thus, we need to lower bound \hide{ the $\ell_2$ norm of} $\lVert\mathbf{B}(1/n \cdot \mathbf{1} - \mathbf{x})\rVert_2$.
Recall that $t = n/4$ and $\mathbf{B} \coloneq [\mathbf{\tilde{A}} ~|~ \cdots ~|~ \mathbf{\tilde{A}}]$.
    Consider any $\mathbf{x} \in \{0,1\}^{nt}$ and let $p \coloneq 1/n$.
    Let $\mathbf{z} \in \{0,\dots,t\}^n$ be defined by $z_i = \sum_{j=0}^{t-1} x_{i+nj}$.
    Thus, $\mathbf{B}(p \cdot \mathbf{1} - \mathbf{x}) = \mathbf{\tilde{A}}(pt \cdot \mathbf{1} - \mathbf{z})$.
   
    In order to lower bound $\lVert\mathbf{B}(1/n \cdot \mathbf{1} - \mathbf{x})\rVert_2$, we will analyze the $\ell_2$-norm of matrix-vector products involving the block matrix $\mathbf{\tilde{A}}$.
    
    \begin{claim-inside-lemma}
        \label{clm:square-of-l2-norm}
        For any vector $\mathbf{y} \in \mathbb{R}^n$,
        \[
        \lVert \mathbf{\tilde{A}}\mathbf{y} \rVert_2^2 = \frac{n}{4} \left( \sum_{i=1}^n y_i \right)^2 + \frac{n}{4} \sum_{i=1}^{n/2} \left( y_i + y_{i+n/2} \right)^2.
        \]
    \end{claim-inside-lemma}
        \begin{proof}
        Expanding the LHS, we have:
        \begin{align*}
            \|\mathbf{\tilde{A}}\mathbf{x}\|_2^2 &= (\mathbf{\tilde{A}}\mathbf{x})^\top (\mathbf{\tilde{A}}\mathbf{x}) \\
            &= \mathbf{x}^\top \mathbf{\tilde{A}}^\top \mathbf{\tilde{A}} \mathbf{x} \\
            &= \sum_{i=1}^n \sum_{j=1}^n x_i x_j \left( \mathbf{\tilde{A}}_{(\cdot, i)} \cdot \mathbf{\tilde{A}}_{(\cdot, j)} \right),
        \end{align*}
        where $\mathbf{\tilde{A}}_{(\cdot, k)}$ denotes the $k$-th column of $\mathbf{\tilde{A}}$.
    
        Let $\mathbf{Q}_k$ denote the $k$-th column of the matrix $\mathbf{\tilde{A}}$. 
        Expanding the double sum by splitting indices into the left half ($1, \dots, n/2$) and right half ($n/2+1, \dots, n$), we have:
        \begin{align*}
            \|\mathbf{\tilde{A}}\mathbf{x}\|_2^2 &= \sum_{i=1}^{n/2} \sum_{j=1}^{n/2} x_i x_j (\mathbf{Q}_i \cdot \mathbf{Q}_j) + \sum_{i=1}^{n/2} \sum_{j=n/2+1}^{n} x_i x_j (\mathbf{Q}_i \cdot \mathbf{Q}_j) \\
            &\quad\quad + \sum_{i=n/2+1}^{n} \sum_{j=1}^{n/2} x_i x_j (\mathbf{Q}_i \cdot \mathbf{Q}_j) + \sum_{i=n/2+1}^{n} \sum_{j=n/2+1}^{n} x_i x_j (\mathbf{Q}_i \cdot \mathbf{Q}_j).
        \end{align*}
    
        Note that the structure of $\mathbf{\tilde{A}}$ implies that the first $n/2$ columns are identical to the second $n/2$ columns. 
        Specifically,
        \[
        \mathbf{Q}_{i + n/2} = \mathbf{Q}_i \quad \text{for $i \in [n/2]$}.
        \]
    
        The top half of $\mathbf{Q}_k$ corresponds to a column of $\mathbf{A}$, say $\mathbf{W}_k$.
        The bottom half corresponds to a column of $\mathbf{A}_{\rm opp}$, say $\mathbf{W}'_k$.
        By construction $\mathbf{W}_k + \mathbf{W}'_k = \mathbf{1}$ (vector of $n$ ones), because $\mathbf{A} + \mathbf{A}_{\rm opp} = \mathbf{J}$.
        Using the property that column $k$ and column $k+n/2$ are identical, we observe the following on the dot product $\mathbf{Q}_i \cdot \mathbf{Q}_j$:
        \[
        \mathbf{Q}_i \cdot \mathbf{Q}_j = 
        \begin{cases} 
        n/2 & \text{if } i = j \text{ or } i \equiv j \pmod{n/2}, \\
        n/4 & \text{otherwise (due to Hadamard orthogonality property)}.
        \end{cases}
        \]
    
        Thus, the expression simplifies to:
        \begin{align*}
            \|\mathbf{\tilde{A}}\mathbf{x}\|_2^2 &= \sum_{i=1}^{n/2} \sum_{j=1}^{n/2} x_i x_j (\mathbf{Q}_i \cdot \mathbf{Q}_j) + \sum_{i=1}^{n/2} \sum_{j=n/2+1}^{n} x_i x_j (\mathbf{Q}_i \cdot \mathbf{Q}_j) \\
            &\quad\quad + \sum_{i=n/2+1}^{n} \sum_{j=1}^{n/2} x_i x_j (\mathbf{Q}_i \cdot \mathbf{Q}_j) + \sum_{i=n/2+1}^{n} \sum_{j=n/2+1}^{n} x_i x_j (\mathbf{Q}_i \cdot \mathbf{Q}_j) \\
            &= \sum_{i=1}^{n/2} \sum_{j=1}^{n/2} x_i x_j \left( \frac{n}{4} \right) + \sum_{i=1}^{n/2} x_i^2 \left( \frac{n}{4} \right)\\
            &\quad \quad + \sum_{i=1}^{n/2} \sum_{j=n/2+1}^{n} x_i x_j \left( \frac{n}{4} \right) + \sum_{i=1}^{n/2} x_i x_{i+n/2} \left( \frac{n}{4} \right) \\
            &\quad\quad + \sum_{i=n/2+1}^{n} \sum_{j=1}^{n/2} x_i x_j \left( \frac{n}{4} \right) + \sum_{i=n/2+1}^{n} x_{i-n/2} x_i \left( \frac{n}{4} \right) \\
            &\quad \quad + \sum_{i=n/2+1}^{n} \sum_{j=n/2+1}^{n} x_i x_j \left( \frac{n}{4} \right) + \sum_{i=n/2+1}^{n} x_i^2 \left( \frac{n}{4} \right) \\
            &= \frac{n}{4} \left( \sum_{i=1}^n x_i \right)^2 + \frac{n}{4} \sum_{i=1}^{n/2} (x_i + x_{i+n/2})^2.
        \end{align*}
    \end{proof}
    
    Applying \Cref{clm:square-of-l2-norm}, we have:
    \begin{align*}
        \lVert \mathbf{B} (p \cdot \mathbf{1} - \mathbf{x}) \rVert_2^2 = \lVert \mathbf{\tilde{A}} (pt \cdot \mathbf{1} - \mathbf{z}) \rVert_2^2 
        &= \frac{n}{4} \left( \sum_{i=1}^n (pt - z_i) \right)^2 + \frac{n}{4} \sum_{i=1}^{n/2} \left( (pt - z_i) + (pt - z_{i+n/2}) \right)^2 \\
        &= \frac{n}{4} \left( \sum_{i=1}^n (pt - z_i) \right)^2 + \frac{n}{4} \sum_{i=1}^{n/2} \left( 0.5 - (z_i + z_{i+n/2}) \right)^2.
    \end{align*}
    Since $p=1/n$ and $t=n/4$, we have $pt = 1/4$.
    Thus, the vector $(pt \cdot \mathbf{1} - \mathbf{z})$ has every coordinate in the set $\{0.25, -0.75, -1.75, \dots\}$.
    Substituting this, we have
    \begin{align*}
        \lVert \mathbf{B} (p \cdot \mathbf{1} - \mathbf{x}) \rVert_2^2 &= \frac{n}{4} \left( \sum_{i=1}^n (pt - z_i) \right)^2 + \frac{n}{4} \sum_{i=1}^{n/2} \left( 0.5 - (z_i + z_{i+n/2}) \right)^2
        &\ge 0 + \frac{n}{4} \cdot \frac{n}{2} \cdot 0.5^2 = n \cdot \frac{n}{32}.
    \end{align*}
    
    Plugging the inequalities back together, we have
    \[
    \lVert \mathbf{B} (p \cdot \mathbf{1} - \mathbf{x}) \rVert_\infty \ge \frac{1}{\sqrt{n}} \lVert \mathbf{B} (p \cdot \mathbf{1} - \mathbf{x}) \rVert_2 \ge \frac{1}{\sqrt{n}} \sqrt{\frac{n^2}{32}} = \sqrt{\frac{n}{32}}.
    \]
\endgroup 
\end{proof}

\Cref{lem:weighted-disc-bound} implies that for any \part (represented by an arbitrary vector $\mathbf{x}$), there exists an agent whose valuation for that part (represented by the corresponding row entry) deviates significantly from the proportional share. 

Let $P_j$ be the smallest part in $P$.
Thus, $|P_j| \le m/n$. 
\Cref{lem:weighted-disc-bound} applied on $\mathbf{x}=\chi(P_j)$ implies that there exists some agent $a$ such that $|v_a(P_j) - \prop_a| \ge \sqrt{n/32}$. 
We now show that this large deviation implies the existence of an agent with \emph{low} value for $P_j$. We analyze the two cases:
\begin{itemize}
    \item \textbf{Case 1 (Negative Deviation):} If $v_a(P_j) \le \prop_a - \sqrt{n/32}$, the theorem holds immediately.
    
    \item \textbf{Case 2 (Positive Deviation):} Suppose that $v_a(P_j) > \prop_a + \sqrt{n/32}$.
    Consider the ``opposite'' agent $a_{\rm opp}$: if agent $a$ corresponds to row $r$ of $\mathbf{B}$, then $a_{\rm opp}$ corresponds to row $(r+n/2-1) \mod n + 1$. 
    Since $v_a(g) + v_{a_{\rm opp}}(g) = 1$ for all $g$, we have $v_{a_{\rm opp}}(P_j) = |P_j| - v_a(P_j)$.
    Substituting the bounds:
     $   v_{a_{\rm opp}}(P_j) = |P_j| - v_a(P_j) < \left( \frac{m}{n} \right) - \left(\prop_a + \sqrt{\frac{n}{32}}\right)$.
    
    Since $v_a(P_j) > \prop_a + \sqrt{n/32}$ and $P_j$ is of size at most $m/n$, we have that $a$ is not an all-ones or an all-zeros agent (since by \Cref{lem:prop-share-in-instance} such an agent has $\prop_a = m/n$ with $v_a(P_j) = |P_j| \le m/n$, or $\prop_a = 0$ with $v_a(P_j) = 0$, respectively). Moreover, \Cref{lem:prop-share-in-instance} also implies that $\prop_a = \prop_{a_{\rm opp}} = \frac{m}{2n}$.
    Substituting this back,
    \begin{align*}
        v_{a_{\rm opp}}(P_j) &< \frac{m}{n} - \left(\frac{m}{2n} + \sqrt{\frac{n}{32}}\right) 
        = \frac{m}{2n} - \sqrt{\frac{n}{32}} 
        = \prop_{a_{\rm opp}} - \sqrt{\frac{n}{32}} 
    \end{align*}
  Thus, we have $v_{a_{\rm opp}}(P_j) < \prop_{a_{\rm opp}} - \sqrt{n/32}$, and the theorem is proved.
\end{itemize}

\hide{This lemma implies that for any \part (represented by an arbitrary vector $x$), there exists an agent whose valuation for that part (represented by the corresponding row entry) has large deviation from its proportional share. We complete the proof by arguing that even if this deviation is positive, the ``opposite'' agent experiences a negative deviation.}


\ifthenelse{\equal{\version}{uncompressed}}{
\begin{proof}We note that $\|\mathbf{v}\|_\infty \ge \frac{1}{\sqrt{n}} \|\mathbf{v}\|_2$, for any $v\in \{0,1\}^{nt}$. Thus, we need to lower bound \hide{ the $\ell_2$ norm of} $\lVert\mathbf{B}(1/n \cdot \mathbf{1} - \mathbf{x})\rVert_2$.
Recall that $t = n/4$ and $\mathbf{B} \coloneq [\mathbf{\tilde{A}} ~|~ \cdots ~|~ \mathbf{\tilde{A}}]$.
    Consider any $\mathbf{x} \in \{0,1\}^{nt}$ and let $p \coloneq 1/n$.
    Let $\mathbf{z} \in \{0,\dots,t\}^n$ be defined by $z_i = \sum_{j=0}^{t-1} x_{i+nj}$.
    Thus, $\mathbf{B}(p \cdot \mathbf{1} - \mathbf{x}) = \mathbf{\tilde{A}}(pt \cdot \mathbf{1} - \mathbf{z})$.
   
    In order to lower bound $\lVert\mathbf{B}(1/n \cdot \mathbf{1} - \mathbf{x})\rVert_2$, we will analyze the $\ell_2$-norm of matrix-vector products involving the block matrix $\mathbf{\tilde{A}}$.
    
    \begin{restatable}[\app]{claim-inside-lemma}{squareofltwonorm}
        \label{clm:square-of-l2-norm}
        For any vector $\mathbf{y} \in \mathbb{R}^n$,
        \[
        \lVert \mathbf{\tilde{A}}\mathbf{y} \rVert_2^2 = \frac{n}{4} \left( \sum_{i=1}^n y_i \right)^2 + \frac{n}{4} \sum_{i=1}^{n/2} \left( y_i + y_{i+n/2} \right)^2.
        \]
    \end{restatable}
    \ifthenelse{\equal{\version}{uncompressed}}{
        \begin{proof}
        Expanding the LHS, we have:
        \begin{align*}
            \|\mathbf{\tilde{A}}\mathbf{x}\|_2^2 &= (\mathbf{\tilde{A}}\mathbf{x})^\top (\mathbf{\tilde{A}}\mathbf{x}) \\
            &= \mathbf{x}^\top \mathbf{\tilde{A}}^\top \mathbf{\tilde{A}} \mathbf{x} \\
            &= \sum_{i=1}^n \sum_{j=1}^n x_i x_j \left( \mathbf{\tilde{A}}_{(\cdot, i)} \cdot \mathbf{\tilde{A}}_{(\cdot, j)} \right),
        \end{align*}
        where $\mathbf{\tilde{A}}_{(\cdot, k)}$ denotes the $k$-th column of $\mathbf{\tilde{A}}$.
    
        Let $\mathbf{Q}_k$ denote the $k$-th column of the matrix $\mathbf{\tilde{A}}$. 
        Expanding the double sum by splitting indices into the left half ($1, \dots, n/2$) and right half ($n/2+1, \dots, n$), we have:
        \begin{align*}
            \|\mathbf{\tilde{A}}\mathbf{x}\|_2^2 &= \sum_{i=1}^{n/2} \sum_{j=1}^{n/2} x_i x_j (\mathbf{Q}_i \cdot \mathbf{Q}_j) + \sum_{i=1}^{n/2} \sum_{j=n/2+1}^{n} x_i x_j (\mathbf{Q}_i \cdot \mathbf{Q}_j) \\
            &\quad\quad + \sum_{i=n/2+1}^{n} \sum_{j=1}^{n/2} x_i x_j (\mathbf{Q}_i \cdot \mathbf{Q}_j) + \sum_{i=n/2+1}^{n} \sum_{j=n/2+1}^{n} x_i x_j (\mathbf{Q}_i \cdot \mathbf{Q}_j).
        \end{align*}
    
        Note that the structure of $\mathbf{\tilde{A}}$ implies that the first $n/2$ columns are identical to the second $n/2$ columns. 
        Specifically,
        \[
        \mathbf{Q}_{i + n/2} = \mathbf{Q}_i \quad \text{for $i \in [n/2]$}.
        \]
    
        The top half of $\mathbf{Q}_k$ corresponds to a column of $\mathbf{A}$, say $\mathbf{W}_k$.
        The bottom half corresponds to a column of $\mathbf{A}_{\rm opp}$, say $\mathbf{W}'_k$.
        By construction $\mathbf{W}_k + \mathbf{W}'_k = \mathbf{1}$ (vector of $n$ ones), because $\mathbf{A} + \mathbf{A}_{\rm opp} = \mathbf{J}$.
        Using the property that column $k$ and column $k+n/2$ are identical, we observe the following on the dot product $\mathbf{Q}_i \cdot \mathbf{Q}_j$:
        \[
        \mathbf{Q}_i \cdot \mathbf{Q}_j = 
        \begin{cases} 
        n/2 & \text{if } i = j \text{ or } i \equiv j \pmod{n/2}, \\
        n/4 & \text{otherwise (due to Hadamard orthogonality property)}.
        \end{cases}
        \]
    
        Thus, the expression simplifies to:
        \begin{align*}
            \|\mathbf{\tilde{A}}\mathbf{x}\|_2^2 &= \sum_{i=1}^{n/2} \sum_{j=1}^{n/2} x_i x_j (\mathbf{Q}_i \cdot \mathbf{Q}_j) + \sum_{i=1}^{n/2} \sum_{j=n/2+1}^{n} x_i x_j (\mathbf{Q}_i \cdot \mathbf{Q}_j) \\
            &\quad\quad + \sum_{i=n/2+1}^{n} \sum_{j=1}^{n/2} x_i x_j (\mathbf{Q}_i \cdot \mathbf{Q}_j) + \sum_{i=n/2+1}^{n} \sum_{j=n/2+1}^{n} x_i x_j (\mathbf{Q}_i \cdot \mathbf{Q}_j) \\
            &= \sum_{i=1}^{n/2} \sum_{j=1}^{n/2} x_i x_j \left( \frac{n}{4} \right) + \sum_{i=1}^{n/2} x_i^2 \left( \frac{n}{4} \right)\\
            &\quad \quad + \sum_{i=1}^{n/2} \sum_{j=n/2+1}^{n} x_i x_j \left( \frac{n}{4} \right) + \sum_{i=1}^{n/2} x_i x_{i+n/2} \left( \frac{n}{4} \right) \\
            &\quad\quad + \sum_{i=n/2+1}^{n} \sum_{j=1}^{n/2} x_i x_j \left( \frac{n}{4} \right) + \sum_{i=n/2+1}^{n} x_{i-n/2} x_i \left( \frac{n}{4} \right) \\
            &\quad \quad + \sum_{i=n/2+1}^{n} \sum_{j=n/2+1}^{n} x_i x_j \left( \frac{n}{4} \right) + \sum_{i=n/2+1}^{n} x_i^2 \left( \frac{n}{4} \right) \\
            &= \frac{n}{4} \left( \sum_{i=1}^n x_i \right)^2 + \frac{n}{4} \sum_{i=1}^{n/2} (x_i + x_{i+n/2})^2.
        \end{align*}
    \end{proof}
    }{}
    
    Applying \Cref{clm:square-of-l2-norm}, we have:
    \begin{align*}
        \lVert \mathbf{B} (p \cdot \mathbf{1} - \mathbf{x}) \rVert_2^2 = \lVert \mathbf{\tilde{A}} (pt \cdot \mathbf{1} - \mathbf{z}) \rVert_2^2 
        &= \frac{n}{4} \left( \sum_{i=1}^n (pt - z_i) \right)^2 + \frac{n}{4} \sum_{i=1}^{n/2} \left( (pt - z_i) + (pt - z_{i+n/2}) \right)^2 \\
        &= \frac{n}{4} \left( \sum_{i=1}^n (pt - z_i) \right)^2 + \frac{n}{4} \sum_{i=1}^{n/2} \left( 0.5 - (z_i + z_{i+n/2}) \right)^2.
    \end{align*}
    Since $p=1/n$ and $t=n/4$, we have $pt = 1/4$.
    Thus, the vector $(pt \cdot \mathbf{1} - \mathbf{z})$ has every coordinate in the set $\{0.25, -0.75, -1.75, \dots\}$.
    Substituting this, we have
    \begin{align*}
        \lVert \mathbf{B} (p \cdot \mathbf{1} - \mathbf{x}) \rVert_2^2 &= \frac{n}{4} \left( \sum_{i=1}^n (pt - z_i) \right)^2 + \frac{n}{4} \sum_{i=1}^{n/2} \left( 0.5 - (z_i + z_{i+n/2}) \right)^2
        &\ge 0 + \frac{n}{4} \cdot \frac{n}{2} \cdot 0.5^2 = n \cdot \frac{n}{32}.
    \end{align*}
    
    Plugging the inequalities back together, we have
    \[
    \lVert \mathbf{B} (p \cdot \mathbf{1} - \mathbf{x}) \rVert_\infty \ge \frac{1}{\sqrt{n}} \lVert \mathbf{B} (p \cdot \mathbf{1} - \mathbf{x}) \rVert_2 \ge \frac{1}{\sqrt{n}} \sqrt{\frac{n^2}{32}} = \sqrt{\frac{n}{32}}.
    \]
\end{proof}

This lemma implies that for any \part (represented by an arbitrary vector $x$), there exists an agent whose valuation for that part (represented by the corresponding row entry) deviates significantly from the proportional share. 

\org{suppress from here}
Let $P_j$ be the smallest part in $P$.
Thus, $|P_j| \le m/n$. 
\Cref{lem:weighted-disc-bound} applied on $\mathbf{x}=\chi(P_j)$ implies that there exists some agent $a$ such that $|v_a(P_j) - \prop_a| \ge \sqrt{n/32}$. 
We now show that this large deviation implies the existence of an agent with \emph{low} value for $P_j$. We analyze the two cases:
\begin{itemize}
    \item \textbf{Case 1 (Negative Deviation):} If $v_a(P_j) \le \prop_a - \sqrt{n/32}$, the theorem holds immediately.
    
    \item \textbf{Case 2 (Positive Deviation):} Suppose that $v_a(P_j) > \prop_a + \sqrt{n/32}$.
    Consider the ``opposite'' agent $a_{\rm opp}$: if agent $a$ corresponds to row $r$ of $\mathbf{B}$, then $a_{\rm opp}$ corresponds to row $(r+n/2-1) \mod n + 1$. 
    Since $v_a(g) + v_{a_{\rm opp}}(g) = 1$ for all $g$, we have $v_{a_{\rm opp}}(P_j) = |P_j| - v_a(P_j)$.
    Substituting the bounds:
     $   v_{a_{\rm opp}}(P_j) = |P_j| - v_a(P_j) < \left( \frac{m}{n} \right) - \left(\prop_a + \sqrt{\frac{n}{32}}\right)$.
    
    Since $v_a(P_j) > \prop_a + \sqrt{n/32}$ and $P_j$ is of size at most $m/n$, we have that $a$ is not an all-ones or an all-zeros agent (since by \Cref{lem:prop-share-in-instance} such an agent has $\prop_a = m/n$ with $v_a(P_j) = |P_j| \le m/n$, or $\prop_a = 0$ with $v_a(P_j) = 0$, respectively). Moreover, \Cref{lem:prop-share-in-instance} also implies that $\prop_a = \prop_{a_{\rm opp}} = \frac{m}{2n}$.
    Substituting this back,
    \begin{align*}
        v_{a_{\rm opp}}(P_j) &< \frac{m}{n} - \left(\frac{m}{2n} + \sqrt{\frac{n}{32}}\right) 
        = \frac{m}{2n} - \sqrt{\frac{n}{32}} 
        = \prop_{a_{\rm opp}} - \sqrt{\frac{n}{32}} 
    \end{align*}
  Thus, we have $v_{a_{\rm opp}}(P_j) < \prop_{a_{\rm opp}} - \sqrt{n/32}$, and the theorem is proved.
\end{itemize}
}{}
\section{Directions for future work}

In this article, we laid the groundwork for the model where agents pick bundles autonomously; and there are multiple directions to pursue, as noted below. 

\begin{description}[wide=0pt]
    \item [{\it Generalization}] beyond additive non-negative valuations to more general valuations such as submodular or subadditive valuations, as well as extending the study to fair division of chores and mixed items. 

    \item [{\it Exploring randomization,}] both for creating partitions as well as for tie-breaking when agents pick a \part among those of highest value, to obtain higher guarantees.

    \item[{\it  Analyzing ``tail guarantees'',}] i.e, obtaining higher guarantees for all but a few agents that come last.

    \item [{\it Impossibilities or improved guarantees:}]For the case where our proven guarantee is $\prop_i - \omega(1)$ (e.g., relaxations of ordered additive or when proportional share is bounded), the question remains whether there exists an algorithm that guarantees each agent $i$ a part of value at least $\prop_i - \bigoh(1)$. 
    Note that our analysis of our algorithms is tight, as discussed at the end of \Cref{subsec:contribution}.
    However, an entirely different algorithmic approach, perhaps one that goes beyond polynomial-time, may yield better guarantees; and alternatively, an impossibility result will rule that out. 

   \item [{\it Strategic behavior}]\hide{We also note the importance of addressing strategic behavior} of agents--unilateral or in collusion with others--where agents misreport preferences to obtain higher value \parts for themselves.

    \item [{\it Optimization variant:}] Deciding if there exists a dynamic partition (when the arrival order is fixed) that guarantees each agent a part of value at least their proportional share is \textsf{NP}-hard, even when all agent preferences are identical. 
    This follows from scaling the standard reduction from the Partition problem for the classical setting where agents are assigned bundles of items~\cite{demko1988equitable}. 
    Consequently, it is natural to consider the optimization variant where the objective is to maximize the surplus (or minimize the deficit) that each agent is guaranteed relative to their proportional share. Formally, given a fair division instance with a fixed arrival order, we seek the largest integer $t$ (possibly negative) such that there exists a partition where each agent $i\in [n]$ receives a part of value at least $\prop_i + t$. The aforementioned reduction strategy proves that this problem is \textsf{NP}-hard even when $t=0$, $n=2$, and the number of agent types is one.
    A worthy goal would be to explore efficient algorithms, perhaps parameterized exact algorithms or even parameterized approximation algorithms, with respect to the structural parameters we have identified. 
    These include the number of agent/item types, swap distance, the depth of the laminar swap family, or the number of linearly separable pieces for the ordered additive relaxations; the degree of influence of agents, or the larger parameter maximum degree, and the maximum size of the edges, to name a few for the hypergraph setting; and the maximum {\it influence of an item} defined as the number of agents that value that item positively for the case when all agents have bounded indifference. 
\end{description}

\bibliographystyle{alpha}
\bibliography{bibliography}

@inproceedings{10.5555/3709347.3743514,
author = {Afshinmehr, Mahyar and Danaei, Alireza and Kazemi, Mehrafarin and Mehlhorn, Kurt and Rathi, Nidhi},
title = {EFX Allocations and Orientations on Bipartite Multi-graphs: A Complete Picture},
year = {2025},
isbn = {9798400714269},
publisher = {International Foundation for Autonomous Agents and Multiagent Systems},
address = {Richland, SC},
booktitle = {Proceedings of the 24th International Conference on Autonomous Agents and Multiagent Systems},
pages = {32–40},
numpages = {9},
location = {Detroit, MI, USA},
series = {AAMAS '25}
}

@article{DBLP:journals/ai/AmanatidisABFLMVW23,
  author       = {Georgios Amanatidis and
                  Haris Aziz and
                  Georgios Birmpas and
                  Aris Filos{-}Ratsikas and
                  Bo Li and
                  Herv{\'{e}} Moulin and
                  Alexandros A. Voudouris and
                  Xiaowei Wu},
  title        = {Fair division of indivisible goods: Recent progress and open questions},
  journal      = {Artif. Intell.},
  volume       = {322},
  pages        = {103965},
  year         = {2023},
  url          = {https://doi.org/10.1016/j.artint.2023.103965},
  doi          = {10.1016/J.ARTINT.2023.103965},
  bibsource    = {dblp computer science bibliography, https://dblp.org}
}

@article{DBLP:journals/corr/abs-2307-10985,
  author       = {Shaily Mishra and
                  Manisha Padala and
                  Sujit Gujar},
  title        = {Fair Allocation of goods and chores - Tutorial and Survey of Recent
                  Results},
  journal      = {CoRR},
  volume       = {abs/2307.10985},
  year         = {2023},
  url          = {https://doi.org/10.48550/arXiv.2307.10985},
  doi          = {10.48550/ARXIV.2307.10985},
  eprinttype    = {arXiv},
  eprint       = {2307.10985},
  bibsource    = {dblp computer science bibliography, https://dblp.org}
}

@inproceedings{DBLP:conf/sigecom/BabaioffM25,
  author       = {Moshe Babaioff and
                  Noam Manaker Morag},
  editor       = {Itai Ashlagi and
                  Aaron Roth},
  title        = {On Truthful Mechanisms without Pareto-efficiency: Characterizations
                  and Fairness},
  booktitle    = {Proceedings of the 26th {ACM} Conference on Economics and Computation,
                  {EC} 2025, Stanford University, Stanford, CA, USA, July 7-10, 2025},
  pages        = {447},
  publisher    = {{ACM}},
  year         = {2025},
  url          = {https://doi.org/10.1145/3736252.3742569},
  doi          = {10.1145/3736252.3742569},
  bibsource    = {dblp computer science bibliography, https://dblp.org}
}

@inproceedings{DBLP:conf/aaai/BaklanovGGS21,
  author       = {Artem Baklanov and
                  Pranav Garimidi and
                  Vasilis Gkatzelis and
                  Daniel Schoepflin},
  title        = {Achieving Proportionality up to the Maximin Item with Indivisible
                  Goods},
  booktitle    = {Thirty-Fifth {AAAI} Conference on Artificial Intelligence, {AAAI}
                  2021, Thirty-Third Conference on Innovative Applications of Artificial
                  Intelligence, {IAAI} 2021, The Eleventh Symposium on Educational Advances
                  in Artificial Intelligence, {EAAI} 2021},
  pages        = {5143--5150},
  publisher    = {{AAAI} Press},
  year         = {2021},
  url          = {https://doi.org/10.1609/aaai.v35i6.16650},
  doi          = {10.1609/AAAI.V35I6.16650},
  bibsource    = {dblp computer science bibliography, https://dblp.org}
}

@inproceedings{bhaskar2024efxallocationsmultigraphclasses,
  author       = {Umang Bhaskar and
                  Yeshwant Pandit},
  editor       = {C. Aiswarya and
                  Ruta Mehta and
                  Subhajit Roy},
  title        = {Extending {EFX} Allocations to Further Multi-Graph Classes},
  booktitle    = {45th {IARCS} Annual Conference on Foundations of Software Technology
                  and Theoretical Computer Science, {FSTTCS} 2025},
  series       = {LIPIcs},
  volume       = {360},
  pages        = {15:1--15:18},
  publisher    = {Schloss Dagstuhl - Leibniz-Zentrum f{\"{u}}r Informatik},
  year         = {2025},
  url          = {https://doi.org/10.4230/LIPIcs.FSTTCS.2025.15},
  doi          = {10.4230/LIPICS.FSTTCS.2025.15},
  bibsource    = {dblp computer science bibliography, https://dblp.org}
}

@inproceedings{DBLP:conf/sagt/BismuthBS24,
  author       = {Samuel Bismuth and
                  Ivan Bliznets and
                  Erel Segal{-}Halevi},
  editor       = {Guido Sch{\"{a}}fer and
                  Carmine Ventre},
  title        = {Fair Division with Bounded Sharing: Binary and Non-degenerate Valuations},
  booktitle    = {Algorithmic Game Theory - 17th International Symposium, {SAGT} 2024},
  series       = {Lecture Notes in Computer Science},
  volume       = {15156},
  pages        = {89--107},
  publisher    = {Springer},
  year         = {2024},
  url          = {https://doi.org/10.1007/978-3-031-71033-9\_6},
  doi          = {10.1007/978-3-031-71033-9\_6},
  bibsource    = {dblp computer science bibliography, https://dblp.org}
}

@book{DBLP:books/daglib/0017730,
  author       = {Steven J. Brams and
                  Alan D. Taylor},
  title        = {Fair division - from cake-cutting to dispute resolution},
  publisher    = {Cambridge University Press},
  year         = {1996},
  isbn         = {978-0-521-55644-6},
  bibsource    = {dblp computer science bibliography, https://dblp.org}
}

@inproceedings{DBLP:conf/sigecom/0001FKS23,
  author       = {George Christodoulou and
                  Amos Fiat and
                  Elias Koutsoupias and
                  Alkmini Sgouritsa},
  
  title        = {Fair allocation in graphs},
  booktitle    = {Proceedings of the 24th {ACM} Conference on Economics and Computation,
                  {EC} },
  pages        = {473--488},
  publisher    = {{ACM}},
  year         = {2023},
  url          = {https://doi.org/10.1145/3580507.3597764},
  doi          = {10.1145/3580507.3597764},
  bibsource    = {dblp computer science bibliography, https://dblp.org}
}

@article{DBLP:journals/teco/CaragiannisKMPS19,
  author       = {Ioannis Caragiannis and
                  David Kurokawa and
                  Herv{\'{e}} Moulin and
                  Ariel D. Procaccia and
                  Nisarg Shah and
                  Junxing Wang},
  title        = {The Unreasonable Fairness of Maximum Nash Welfare},
  journal      = {{ACM} Trans. Economics and Comput.},
  volume       = {7},
  number       = {3},
  pages        = {12:1--12:32},
  year         = {2019},
  url          = {https://doi.org/10.1145/3355902},
  doi          = {10.1145/3355902},
  bibsource    = {dblp computer science bibliography, https://dblp.org}
}

@inproceedings{DBLP:conf/sigecom/ConitzerF017,
  author       = {Vincent Conitzer and
                  Rupert Freeman and
                  Nisarg Shah},
  editor       = {Constantinos Daskalakis and
                  Moshe Babaioff and
                  Herv{\'{e}} Moulin},
  title        = {Fair Public Decision Making},
  booktitle    = {Proceedings of the 2017 {ACM} Conference on Economics and Computation,
                  {EC} '17, 2017},
  pages        = {629--646},
  publisher    = {{ACM}},
  year         = {2017},
  url          = {https://doi.org/10.1145/3033274.3085125},
  doi          = {10.1145/3033274.3085125},
  bibsource    = {dblp computer science bibliography, https://dblp.org}
}

@article{DBLP:journals/almob/CunhaSS24,
  author       = {Lu{\'{\i}}s Cunha and
                  Ignasi Sau and
                  U{\'{e}}verton S. Souza},
  title        = {On the parameterized complexity of the median and closest problems
                  under some permutation metrics},
  journal      = {Algorithms Mol. Biol.},
  volume       = {19},
  number       = {1},
  pages        = {24},
  year         = {2024},
  url          = {https://doi.org/10.1186/s13015-024-00269-z},
  doi          = {10.1186/S13015-024-00269-Z},
  bibsource    = {dblp computer science bibliography, https://dblp.org}
}

@article{demko1988equitable,
  title={Equitable distribution of indivisible objects},
  author={Demko, Stephen and Hill, Theodore P},
  journal={Mathematical Social Sciences},
  volume={16},
  number={2},
  pages={145--158},
  year={1988},
  publisher={Elsevier}
}

@article{DBLP:journals/cpc/DoerrS03,
  author       = {Benjamin Doerr and
                  Anand Srivastav},
  title        = {Multicolour Discrepancies},
  journal      = {Comb. Probab. Comput.},
  volume       = {12},
  number       = {4},
  pages        = {365--399},
  year         = {2003},
  url          = {https://doi.org/10.1017/S0963548303005662},
  doi          = {10.1017/S0963548303005662},
  bibsource    = {dblp computer science bibliography, https://dblp.org}
}

@article{hedayat1978hadamard,
  title={Hadamard matrices and their applications},
  author={Hedayat, A and Wallis, Walter Dennis},
  journal={The annals of statistics},
  pages={1184--1238},
  year={1978},
  publisher={JSTOR}
}

@misc{hsu2025efxorientationsmultigraphs,
      title={EFX Orientations of Multigraphs}, 
      author={Kevin Hsu},
      year={2025},
      eprint={2410.12039},
      archivePrefix={arXiv},
      primaryClass={cs.GT},
      url={https://arxiv.org/abs/2410.12039}, 
}

@article{DBLP:journals/tcs/Jerrum85,
  author       = {Mark Jerrum},
  title        = {The Complexity of Finding Minimum-Length Generator Sequences},
  journal      = {Theor. Comput. Sci.},
  volume       = {36},
  pages        = {265--289},
  year         = {1985},
  url          = {https://doi.org/10.1016/0304-3975(85)90047-7},
  doi          = {10.1016/0304-3975(85)90047-7},
  bibsource    = {dblp computer science bibliography, https://dblp.org}
}

@article{DBLP:journals/tcs/ManurangsiS22,
  author       = {Pasin Manurangsi and
                  Warut Suksompong},
  title        = {Almost envy-freeness for groups: Improved bounds via discrepancy theory},
  journal      = {Theor. Comput. Sci.},
  volume       = {930},
  pages        = {179--195},
  year         = {2022},
  url          = {https://doi.org/10.1016/j.tcs.2022.07.022},
  doi          = {10.1016/J.TCS.2022.07.022},
  bibsource    = {dblp computer science bibliography, https://dblp.org}
}

@inproceedings{doi:10.1137/1.9781611978964.20Manurangsi26,
author = {Pasin Manurangsi and Raghu Meka},
title = {Tight Lower Bound for Multicolor Discrepancy},
booktitle = {2026 SIAM Symposium on Simplicity in Algorithms (SOSA)},
chapter = {},
year={2026},
pages = {266-274},
doi = {10.1137/1.9781611978964.20},
URL = {https://epubs.siam.org/doi/abs/10.1137/1.9781611978964.20},
eprint = {https://epubs.siam.org/doi/pdf/10.1137/1.9781611978964.20},
}

@article{DBLP:journals/tcs/Popov07,
  author       = {V. Y. Popov},
  title        = {Multiple genome rearrangement by swaps and by element duplications},
  journal      = {Theor. Comput. Sci.},
  volume       = {385},
  number       = {1-3},
  pages        = {115--126},
  year         = {2007},
  url          = {https://doi.org/10.1016/j.tcs.2007.05.029},
  doi          = {10.1016/J.TCS.2007.05.029},
  bibsource    = {dblp computer science bibliography, https://dblp.org}
}

@article{DBLP:journals/tcs/Schlotter24,
  author       = {Ildik{\'{o}} Schlotter},
  title        = {Recognizing when a preference system is close to admitting a master
                  list},
  journal      = {Theor. Comput. Sci.},
  volume       = {994},
  pages        = {114445},
  year         = {2024},
  url          = {https://doi.org/10.1016/j.tcs.2024.114445},
  doi          = {10.1016/J.TCS.2024.114445},
  bibsource    = {dblp computer science bibliography, https://dblp.org}
}

@article{DBLP:journals/aamas/BouveretL16,
  author       = {Sylvain Bouveret and
                  Michel Lema{\^{\i}}tre},
  title        = {Characterizing conflicts in fair division of indivisible goods using
                  a scale of criteria},
  journal      = {Auton. Agents Multi Agent Syst.},
  volume       = {30},
  number       = {2},
  pages        = {259--290},
  year         = {2016},
  url          = {https://doi.org/10.1007/s10458-015-9287-3},
  doi          = {10.1007/S10458-015-9287-3},
  bibsource    = {dblp computer science bibliography, https://dblp.org}
}

@article{DBLP:journals/corr/abs-2506-20317,
  author       = {George Christodoulou and
                  Symeon Mastrakoulis},
  title        = {Exact and approximate maximin share allocations in multi-graphs},
  journal      = {CoRR},
  volume       = {abs/2506.20317},
  year         = {2025},
  url          = {https://doi.org/10.48550/arXiv.2506.20317},
  doi          = {10.48550/ARXIV.2506.20317},
  eprinttype    = {arXiv},
  eprint       = {2506.20317},
  bibsource    = {dblp computer science bibliography, https://dblp.org}
}

@inproceedings{DBLP:conf/ijcai/0027WL024,
  author       = {Yu Zhou and
                  Tianze Wei and
                  Minming Li and
                  Bo Li},
  title        = {A Complete Landscape of {EFX} Allocations on Graphs: Goods, Chores
                  and Mixed Manna},
  booktitle    = {Proceedings of the Thirty-Third International Joint Conference on
                  Artificial Intelligence, {IJCAI} 2024},
  pages        = {3049--3056},
  publisher    = {ijcai.org},
  year         = {2024},
  url          = {https://www.ijcai.org/proceedings/2024/338},
  bibsource    = {dblp computer science bibliography, https://dblp.org}
}

@inproceedings{DBLP:conf/ifaamas/SgouritsaS25,
  author       = {Alkmini Sgouritsa and
                  Minas Marios Sotiriou},
  editor       = {Sanmay Das and
                  Ann Now{\'{e}} and
                  Yevgeniy Vorobeychik},
  title        = {On the Existence of {EFX} Allocations in Multigraphs},
  booktitle    = {Proceedings of the 24th International Conference on Autonomous Agents
                  and Multiagent Systems, {AAMAS} 2025},
  pages        = {2735--2737},
  publisher    = {International Foundation for Autonomous Agents and Multiagent Systems
                  / {ACM}},
  year         = {2025},
  url          = {https://dl.acm.org/doi/10.5555/3709347.3743995},
  doi          = {10.5555/3709347.3743995},
  bibsource    = {dblp computer science bibliography, https://dblp.org}
}

@inproceedings{DBLP:conf/ijcai/DeligkasEGK25,
  author       = {Argyrios Deligkas and
                  Eduard Eiben and
                  Tiger{-}Lily Goldsmith and
                  Viktoriia Korchemna},
  title        = {{EF1} and {EFX} Orientations},
  booktitle    = {Proceedings of the Thirty-Fourth International Joint Conference on
                  Artificial Intelligence, {IJCAI} 2025,
                  2025},
  pages        = {56--63},
  publisher    = {ijcai.org},
  year         = {2025},
  url          = {https://doi.org/10.24963/ijcai.2025/7},
  doi          = {10.24963/IJCAI.2025/7},
  bibsource    = {dblp computer science bibliography, https://dblp.org}
}

@article{DBLP:journals/corr/abs-2404-13527,
  author       = {Jinghan A. Zeng and
                  Ruta Mehta},
  title        = {On the structure of envy-free orientations on graphs},
  journal      = {CoRR},
  volume       = {abs/2404.13527},
  year         = {2024},
  url          = {https://doi.org/10.48550/arXiv.2404.13527},
  doi          = {10.48550/ARXIV.2404.13527},
  eprinttype    = {arXiv},
  eprint       = {2404.13527},
  bibsource    = {dblp computer science bibliography, https://dblp.org}
}

@inproceedings{DBLP:conf/aldt/MisraS24,
  author       = {Neeldhara Misra and
                  Aditi Sethia},
  editor       = {Rupert Freeman and
                  Nicholas Mattei},
  title        = {Envy-Free and Efficient Allocations for Graphical Valuations},
  booktitle    = {Algorithmic Decision Theory - 8th International Conference, {ADT}
                  2024},
  series       = {Lecture Notes in Computer Science},
  volume       = {15248},
  pages        = {258--272},
  publisher    = {Springer},
  year         = {2024},
  url          = {https://doi.org/10.1007/978-3-031-73903-3\_17},
  doi          = {10.1007/978-3-031-73903-3\_17},
  bibsource    = {dblp computer science bibliography, https://dblp.org}
}

@article{budish2011combinatorial,
  title={The combinatorial assignment problem: Approximate competitive equilibrium from equal incomes},
  author={Budish, Eric},
  journal={Journal of Political Economy},
  volume={119},
  number={6},
  pages={1061--1103},
  year={2011},
  publisher={University of Chicago Press Chicago, IL}
}

@article{DBLP:journals/jacm/KurokawaPW18,
  author       = {David Kurokawa and
                  Ariel D. Procaccia and
                  Junxing Wang},
  title        = {Fair Enough: Guaranteeing Approximate Maximin Shares},
  journal      = {J. {ACM}},
  volume       = {65},
  number       = {2},
  pages        = {8:1--8:27},
  year         = {2018},
  url          = {https://doi.org/10.1145/3140756},
  doi          = {10.1145/3140756},
  bibsource    = {dblp computer science bibliography, https://dblp.org}
}

@inproceedings{DBLP:conf/webi/PadalaG21,
  author       = {Manisha Padala and
                  Sujit Gujar},
  editor       = {Jing He and
                  Rainer Unland and
                  Eugene Santos Jr. and
                  Xiaohui Tao and
                  Hemant Purohit and
                  Willem{-}Jan van den Heuvel and
                  John Yearwood and
                  Jie Cao},
  title        = {Mechanism Design without Money for Fair Allocations},
  booktitle    = {{WI-IAT} '21: {IEEE/WIC/ACM} International Conference on Web Intelligence, 2021},
  pages        = {382--389},
  publisher    = {{ACM}},
  year         = {2021},
  url          = {https://doi.org/10.1145/3486622.3493955},
  doi          = {10.1145/3486622.3493955},
  bibsource    = {dblp computer science bibliography, https://dblp.org}
}

@inproceedings{DBLP:conf/nips/0001V22,
  author       = {Alexandros Psomas and
                  Paritosh Verma},
  editor       = {Sanmi Koyejo and
                  S. Mohamed and
                  A. Agarwal and
                  Danielle Belgrave and
                  K. Cho and
                  A. Oh},
  title        = {Fair and Efficient Allocations Without Obvious Manipulations},
  booktitle    = {Advances in Neural Information Processing Systems 35: Annual Conference
                  on Neural Information Processing Systems 2022, NeurIPS 2022},
  year         = {2022},
  url          = {http://papers.nips.cc/paper\_files/paper/2022/hash/57250222014c35949476f3f272c322d2-Abstract-Conference.html},
  bibsource    = {dblp computer science bibliography, https://dblp.org}
}

\appendix

\newpage

\section{Our baseline: an application of Multicolor Discrepancy}\label{sec:prelim-discrepancy}

Consider a matrix $\mathbf{A} \in [0,1]^{n \times m}$.
A $k$-coloring of the columns $[m]$ is a function $\chi: [m] \rightarrow [k]$.
Let $\mathbf{1}$ denote the $m$-length all-ones vector and $\mathbf{1}(S)$ denote the indicator vector of a set $S \subseteq [m]$.

\begin{definition}[$k$-color discrepancy]
\[
\disc(\mathbf{A},k) 
\coloneqq 
\min_{\chi: [m] \rightarrow [k]} \max_{s \in [k]}
\left\lVert
\mathbf{A}\left(\frac1k \cdot \mathbf{1} - \mathbf{1}(\chi^{-1}(s))\right)
\right\rVert_\infty
\]
\end{definition}

For the work in this paper, we use $n$-color discrepancy, where $n$ denotes the number of agents.

Manurangsi and Suksompong~\cite{DBLP:journals/tcs/ManurangsiS22} present a polynomial-time algorithm that computes a coloring with discrepancy $\bigoh(\sqrt{n})$. 
They note that this result was already implicitly known via existing literature.

\begin{proposition}[Theorem A.1 in \cite{DBLP:journals/tcs/ManurangsiS22}]
    \label{prop:manurangsiS-disc} 
    Given any $\mathbf{A} \in [0,1]^{n \times m}$ and $k \in \mathbb{N}$, there exists a deterministic polynomial-time algorithm that computes a coloring $\chi: [m] \rightarrow [k]$ such that 
    \[
    \left\lVert
    \mathbf{A}\left(\frac1k \cdot \mathbf{1} - \mathbf{1}(\chi^{-1}(s))\right)
    \right\rVert_\infty
    \le \bigoh(\sqrt{n}).
    \]
\end{proposition}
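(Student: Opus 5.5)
The statement to prove is \Cref{prop:manurangsiS-disc}: a deterministic polynomial-time $k$-coloring of the columns of $\mathbf{A}\in[0,1]^{n\times m}$ with every color class within $\bigoh(\sqrt{n})$ of the $1/k$ share in every row. I would prove it by reducing multicolor discrepancy to ordinary two-color discrepancy through recursive halving, and would use a deterministic partial-coloring (or Lovett--Meka style walk derandomized, or the Bansal/Spencer bound via method of conditional expectations) as the two-color engine.

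\textbf{Step 1 (two-color, possibly unbalanced, splitting).} The tool I need: given $\mathbf{A}\in[0,1]^{n\times m'}$ and a target fraction $\alpha\in[0,1]$, find $S\subseteq[m']$ with $\lVert \mathbf{A}(\alpha\mathbf{1}-\mathbf{1}(S))\rVert_\infty\le C\sqrt{n\,\log(\cdot)}$-free, ideally $C\sqrt{n}$ scaled appropriately. I would start from the fractional point $x=\alpha\mathbf{1}$ and round it to $\{0,1\}^{m'}$ with a Spencer-type linear-algebraic rounding. Fixing variables one by one while keeping $\mathbf{A}x$ constant on all rows with many free coordinates only gives $\bigoh(n)$ error. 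To reach $\sqrt{n}$ I would use the deterministic partial-coloring lemma with an entropy/potential argument: in each phase, halve the number of floating variables, incurring error $\bigoh(\sqrt{n}\sqrt{\log(2m'/n)})$ when $m'>n$ and $\bigoh(\sqrt{m'})$ once $m'\le n$. First reduce to $m'\le n$ by the standard Beck--Fiala/linear-algebra trick: move $x$ in the kernel of the $n$ row constraints until at most $n$ coordinates are fractional, with zero error. Then the partial-coloring phases on $\le n$ columns give a geometric series summing to $\bigoh(\sqrt{n})$.

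\textbf{Step 2 (recursion over colors).} For general $k$, I would build a binary tree: split the current item set with fraction $\lfloor k/2\rfloor/k$ versus the remainder, and recurse. At depth $d$ the relevant error for a leaf class is the sum over its ancestors of split errors. Naively this gives $\bigoh(\sqrt{n}\log k)$, which is too weak. The main obstacle is removing this $\log k$. My plan is to exploit scaling. At a node whose item set carries the share corresponding to $k'$ colors, the target per-class error is $\bigoh(\sqrt{n})$. However, the partial-coloring error at that node can be made proportional to $\sqrt{n}\cdot\sqrt{\text{(local fraction variance)}}$, i.e.\ roughly $\sqrt{n\,\alpha(1-\alpha)}$-type bounds after the $\le n$ fractional-coordinate reduction. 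Errors incurred at a node should be charged to the leaves with a $1/k'$-type weight, since deviation at a node splits among the $k'$ descendant classes. Equivalently, I would work with the normalized deviation $\mathbf{A}(\frac{1}{k}\mathbf{1}-\mathbf{1}(S))$ along the path, and show that deviations at higher levels shrink geometrically, since each level's rounding only touches at most $n$ fractional items.

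\textbf{Step 3 (derandomization and running time).} Each partial coloring is made deterministic via the method of conditional expectations on an exponential potential $\sum_i\cosh(\lambda\cdot\text{row deviation})$. Polynomiality follows because the tree has $\bigoh(k)$ nodes and each rounding is polynomial.

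The hardest step is clearly Step 2's $\log k$ elimination. If the geometric-decay charging fails, I would fall back on the weaker $\bigoh(\sqrt{n\log k})$ bound.
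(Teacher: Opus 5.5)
Your skeleton is the standard route to the result the paper cites: split the item set recursively into parts for $\lfloor k'/2\rfloor$ and $\lceil k'/2\rceil$ colors, and implement each split as a two-color rounding of $\alpha\mathbf{1}$ with $\bigoh(\sqrt n)$ error. This is the Doerr--Srivastav multicolor-to-two-color reduction combined with a deterministic two-color rounding; the paper itself gives no proof. As written, however, your argument is incomplete in two places.

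\textbf{Step 2 is left open, though it is easy.} You leave it conditional, and the mechanisms you suggest (a $\sqrt{n\alpha(1-\alpha)}$-sensitive rounding bound, and the fact that each rounding touches at most $n$ fractional items) are beside the point. Your remark about charging a node's error to the leaves with weight $1/k'$ is the whole argument, and it is a telescoping identity. Suppose node $u$, with item set $T_u$ and $k_u$ colors, is split so that $\mathbf{A}\mathbf{1}(T_c)=\frac{k_c}{k_u}\mathbf{A}\mathbf{1}(T_u)\pm e_u$ for its two children $c$ (with opposite signs) and $\lVert e_u\rVert_\infty\le C\sqrt n$. Dividing by $k_c$ and summing along a root-to-leaf path $u_0,\dots,u_d$ gives
\[
\mathbf{A}\mathbf{1}(T_{u_d})-\frac{1}{k}\,\mathbf{A}\mathbf{1}=\sum_{j=0}^{d-1}\pm\frac{e_{u_j}}{k_{u_{j+1}}}.
\]
Read the denominators from the leaf upward. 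They are $1$, then at least $2$, and each next one is at least twice the previous minus one. So the $i$-th one above the leaf is at least $2^{i-1}+1$, and $\sum_j 1/k_{u_{j+1}}<3$. Hence every color class is within $3C\sqrt n$, independent of $k$ and $m$. No fallback is needed; naive summation would give $\sqrt n\log k$ anyway, not $\sqrt{n\log k}$.

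\textbf{Step 3 does not deliver determinism, and this is the more serious gap.} Determinism is part of the statement. Applying the method of conditional expectations to $\sum_i\cosh(\lambda\cdot \text{deviation}_i)$ while fixing coordinates one at a time only derandomizes the Chernoff bound. On $s$ floating coordinates it certifies $\bigoh(\sqrt{s\log n})$, i.e.\ $\bigoh(\sqrt{n\log n})$ after your kernel reduction, not Spencer's $\bigoh(\sqrt n)$. The $\bigoh(\sqrt n)$ partial coloring comes from an entropy/pigeonhole argument or a Lovett--Meka random walk, and making it deterministic is a separate, nontrivial theorem. Examples are Bansal--Spencer, or the multiplicative-weights derandomization of the Lovett--Meka walk by Levy--Ramadas--Rothvoss, which works with fractional colorings and so fits your start point $\alpha\mathbf{1}$ and real entries in $[0,1]$. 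You must invoke such a result as a black box, as the cited source does. With it, Step 1 goes through. The kernel reduction to at most $n$ floating coordinates is exact. A partial-coloring phase on $s\le n$ floating coordinates costs $\bigoh(\sqrt{s\log(2n/s)})$, not $\bigoh(\sqrt{s})$ as you wrote. The logarithm is still summable over $s=n,n/2,n/4,\dots$, giving $\bigoh(\sqrt n)$ per split.
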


Applying the above to the valuation matrix $\mathbf{A} \in [0,1]^{n \times m}$ (where rows correspond to agents and columns to items) with $k=n$ colors, we obtain a coloring $\chi:[m]\rightarrow[n]$.
Consider the partition $M=(\M{1}, \dots, \M{n})$ where the part $\M{p}$ is the set of items colored $p$ in $\chi$.
Observe that the $a$-th component of the vector $\mathbf{A} (\frac{1}{n} \cdot \mathbf{1})$ is exactly $\frac{1}{n} \sum_{j=1}^m v_a(j) = \prop_a$. 
Similarly, the $a$-th component of $\mathbf{A} \mathbf{1}(\chi^{-1}(p))$ is exactly $v_a(\M{p})$.
Thus, the bound from \Cref{prop:manurangsiS-disc} implies that for each agent $a$ and part $\M{p}$:
\[
    \lvert \prop_a - v_a(\M{p})\rvert 
    =
    \left\lvert \left( \mathbf{A}\left(\frac1n \cdot \mathbf{1} - \mathbf{1}(\chi^{-1}(p))\right) \right)_a \right\rvert
    \le
    \left\lVert
    \mathbf{A}\left(\frac1n \cdot \mathbf{1} - \mathbf{1}(\chi^{-1}(p))\right)
    \right\rVert_\infty
    \le \bigoh(\sqrt{n}).
\]
This yields the following baseline guarantee for our setting.

\begin{proposition}
\label{prop:disc-upperbound}
There is a partition $M=(\M{1}, \dots, \M{n})$ such that
    $\prop_a-\bigoh(\sqrt{n}) \le v_a(\M{p}) \le \prop_a+\bigoh(\sqrt{n})$ for each agent $a$ and part $\M{p}$.
\end{proposition}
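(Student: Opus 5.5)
This is immediate once we compose the discrepancy rounding of \Cref{prop:manurangsiS-disc} with the valuation matrix, exactly as in the computation preceding the statement. Let $\mathbf{A}\in[0,1]^{n\times m}$ be the matrix whose rows are the agents and whose columns are the items, so that $\mathbf{A}[a,g]=v_a(g)$. The normalization $\max_g v_a(g)\le 1$ guarantees that the entries do lie in $[0,1]$, which is the hypothesis of \Cref{prop:manurangsiS-disc}.

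The argument has three steps.
\begin{enumerate}
\item Invoke \Cref{prop:manurangsiS-disc} with $k=n$ colors. This gives a coloring $\chi:[m]\to[n]$, computable in polynomial time, whose discrepancy vector has $\ell_\infty$-norm $\bigoh(\sqrt{n})$ for every color $s$.
\item Define the partition by color classes, $\M{p}\coloneq\chi^{-1}(p)$. These classes are disjoint and cover $[m]$, so $M$ is indeed an $n$-partition; some parts may be empty, which is allowed.
\item Identify the coordinates of the discrepancy vector. For agent $a$ and part $p$, the $a$-th coordinate of $\mathbf{A}(\tfrac1n\mathbf{1})$ is $\prop_a$, and the $a$-th coordinate of $\mathbf{A}\mathbf{1}(\chi^{-1}(p))$ is $v_a(\M{p})$, by additivity of $v_a$. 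Hence
\[
|\prop_a-v_a(\M{p})|\le\Bigl\lVert\mathbf{A}\bigl(\tfrac1n\mathbf{1}-\mathbf{1}(\chi^{-1}(p))\bigr)\Bigr\rVert_\infty=\bigoh(\sqrt n),
\]
and unpacking the absolute value gives both sides of the claimed inequality simultaneously, for every agent $a$ and every part $\M{p}$.
\end{enumerate}

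There is no real obstacle. The only point to check is that the bound in \Cref{prop:manurangsiS-disc} holds uniformly over all colors $s$, so that a single coloring serves every part at once. This is how that proposition is stated, with $\max_s$ inside the definition of discrepancy.

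The hidden constant does not depend on $m$ or on the valuations. This is exactly why the result serves as a baseline matching the lower bound of \Cref{thm:lower-bound} up to constants.
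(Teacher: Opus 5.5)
Your proposal is correct and matches the paper's argument exactly: set $\mathbf{A}[a,g]=v_a(g)$ and apply \Cref{prop:manurangsiS-disc} with $k=n$. Then take the color classes as the parts and read off $\prop_a$ and $v_a(\M{p})$ as the $a$-th coordinates of the two terms in the discrepancy vector, which gives both sides of the inequality at once.
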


\end{sloppypar}

\end{document}